\documentclass[aps,prl,reprint,amssymb,amsmath,superscriptaddress,nofootinbib,longbibliography]{revtex4-2}

\usepackage{amsfonts}
\usepackage{mathtools}
\usepackage{mathrsfs}
\usepackage{amsthm}
\usepackage{comment}
\usepackage{physics}
\usepackage{graphicx}
\usepackage{xcolor}
\usepackage{enumitem}
\usepackage{hyperref}
\hypersetup{hidelinks}

\makeatletter
\let\frontmatter@footnote@produce\frontmatter@footnote@produce@footnote
\makeatother

\newtheorem{theorem}{Theorem}
\newtheorem{lemma}[theorem]{Lemma}
\newtheorem{corollary}[theorem]{Corollary}
\newtheorem{proposition}[theorem]{Proposition}
\newtheorem{definition}[theorem]{Definition}

\newcommand{\LOCC}{\mathrm{LOCC}}
\newcommand{\LSCC}{\mathrm{LSCC}}
\newcommand{\STAB}{\mathrm{STAB}}
\newcommand{\SEP}{\mathrm{SEP}}
\newcommand{\ALL}{\mathrm{ALL}}

\newcommand{\Id}{\mathbb{I}}
\newcommand{\revp}{p}
\newcommand{\revK}{K}

\providecommand{\Tr}{\operatorname{Tr}}
\providecommand{\rank}{\operatorname{rank}}

\providecommand{\PWF}{\mathrm{PWF}}

\providecommand{\Prb}{\mathbb P}
\providecommand{\dd}{\mathrm d}
\begin{document}

\title{Asymptotic Entanglement Hiding under Stabilizer Restrictions}
\author{Jicun Li}
\email{lijicun@mail.ustc.edu.cn}
\affiliation{School of Computer Science and Technology, University of Science and Technology of China, Hefei 230027, China}
\author{Wei Xie}
\thanks{\href{mailto:xxieww@ustc.edu.cn}{xxieww@ustc.edu.cn}; Corresponding author}
\affiliation{School of Computer Science and Technology, University of Science and Technology of China, Hefei 230027, China}
\affiliation{Hefei National Laboratory, University of Science and Technology of China, Hefei 230088, China}
\author{Jun Wu}
\affiliation{School of Software Engineering, University of Science and Technology of China, Hefei 230051, China}
\affiliation{Suzhou Institute for Advanced Research, University of Science and Technology of China, Suzhou 215123, China}
\author{Honglin Chen}
\affiliation{School of Computer Science and Technology, University of Science and Technology of China, Hefei 230027, China}
\author{Xiang-Yang Li}
\thanks{\href{mailto:xiangyangli@ustc.edu.cn}{xiangyangli@ustc.edu.cn}; Corresponding author}
\affiliation{School of Computer Science and Technology, University of Science and Technology of China, Hefei 230027, China}
\affiliation{Hefei National Laboratory,
	University of Science and Technology of China,
	Hefei 230088, China}

\begin{abstract}

	Entanglement is central to quantum information processing, while stabilizer operations underpin fault-tolerant quantum computation. We ask how much entanglement remains visible or distillable under stabilizer restrictions. We quantify stabilizer-visible entanglement by restricting the measured
	relative entropy of entanglement to stabilizer measurements, thereby obtaining
	converse bounds on entanglement distillation under stabilizer operations. We demonstrate
	{\emph{magic-free asymptotic entanglement hiding}}: we construct explicit { convex mixtures of pure stabilizer states} on \(N\) qutrits per party whose unrestricted visible entanglement and \(\LOCC\)-distillable entanglement both grow as \(\Omega(N/\log N)\), while their stabilizer-visible and stabilizer-distillable entanglement vanish as \(N\to\infty\). Thus, an unbounded amount of \(\LOCC\)-distillable entanglement carried by stabilizer states can become asymptotically invisible and undistillable under stabilizer restrictions. We further prove that stabilizer-visible entanglement is \(O(1)\) with high probability for Haar-random pure states despite extensive unrestricted visibility, and vanishes uniformly over entangled Werner states as the local dimension grows through odd primes. These results reveal a fundamental separation between entanglement and magic as resources, exposing intrinsic limits on entanglement extraction using stabilizer operations.
\end{abstract}

\maketitle

\emph{Introduction.---}
Quantum entanglement is a central resource behind quantum communication, computation, and metrology \cite{bennett1993teleporting,bennett1996mixed,ekert1991quantum,shor1994algorithms,giovannetti2004quantum,horodecki2009quantum}. In the standard resource theory of entanglement, separable states are free and { local operations and classical communication} (\(\LOCC\)) define the canonical manipulation class \cite{nielsen1999conditions,vidal2000entanglement,chitambar2019quantum}. A fundamental way to quantify entanglement is through distinguishability. The relative entropy of entanglement is a standard entanglement measure and quantifies how well an entangled state can be statistically separated from the separable set \cite{vedral1997quantifying,vedral1998entanglement,plenio2001bounds,henderson2000information}.

Entanglement distillation is a central operational task in this resource theory: given many copies of a noisy bipartite state, Alice and Bob attempt to extract { nearly perfect maximally entangled states} using the allowed operations \cite{bennett1996purification,deutsch1996quantum,bennett1996concentrating,bennett1996mixed,briegel1998quantum,devetak2005distillation,horodecki2009quantum}. Relative-entropy quantities are useful not only as distinguishability measures but also as converse tools, upper-bounding the rate at which entanglement can be distilled under a specified class of operations \cite{rains2001semidefinite,fang2019non,regula2019one,lami2023no}.

In many quantum information processing settings, however, restrictions apply to the operations or measurements accessible to an observer \cite{leone2025entanglement,yanguez2025efficient}. Restricted-measurement entanglement measures formalize such restrictions on measurement access by retaining the same distinguishability task---separating an entangled state from the separable set---while optimizing only over measurements available to the observer \cite{piani2009relative}. Restrictions on local operations, including those imposed by
superselection rules and Gaussianity, are also known to limit
entanglement manipulation
\cite{bartlett2003entanglement,schuch2004nonlocal,eisert2002distilling}.

 Measurements implementable by stabilizer operations provide
	a particularly relevant restriction.  These operations---Clifford unitaries, Pauli measurements,
stabilizer ancillas, Pauli corrections, and classical
feed-forward---are central to fault-tolerant quantum computation
\cite{campbell2017roads}, admit efficient classical simulation
\cite{gottesman1997stabilizer,aaronson2004improved}, and, together with
stabilizer states, define the free sector of the resource theory of
magic
\cite{bravyi2005universal,gross2006hudson,veitch2012negative,
	mari2012positive,veitch2014resource,howard2014contextuality,
	howard2017application,wang2020efficiently}.
Recent work has uncovered limitations of stabilizer and related
classically simulable measurements in state discrimination, including
the failure to perfectly distinguish certain mutually orthogonal
stabilizer states---termed ``nonstabilizerness without magic''---and
has explored the roles of magic assistance and adaptivity
\cite{zhu2024limitations,kwon2025nonstabilizerness,
	wang2026improve,stratton2026discrimination}.
Related limitations of classically simulable operations have also been
established for universal quantum-state purification
\cite{he2026no}.
Stabilizer operations also underlie many entanglement-distillation
protocols based on stabilizer codes or syndrome measurements
\cite{matsumoto2003conversion,glancy2006entanglement,
	dur2007entanglement,shi2025stabilizer,popp2025novel,
	shi2025measurement}, and have recently been formulated as quantum
channels for general resource-state distillation
\cite{popp2026resource}.
It was recently shown that no efficient, state-agnostic LOCC protocol
can, in general, distill a constant fraction of the entanglement from
magic-dominated states \cite{gu2025magic}.
Taken together, these developments highlight both the utility and the
limitations of stabilizer operations. This motivates our central question:
\emph{Under stabilizer restrictions, how much entanglement remains visible
	or distillable?}

Our main object is {\emph{stabilizer-visible entanglement}, defined via the stabilizer-measured relative entropy}. Our central result is {\emph{magic-free asymptotic entanglement hiding} in mixed stabilizer states~\cite{mixedStabilizerConvention}}: on \(N\) qutrits per party, we construct states whose unrestricted visible entanglement grows as \(\Omega(N/\log N)\), while their stabilizer-visible entanglement vanishes. We use the same relative-entropy collapse as a distillation converse, showing that these states have vanishing stabilizer-distillable entanglement despite diverging \(\LOCC\)-distillable entanglement. Haar-random pure states and Werner states provide complementary settings, showing that this limitation is not specific to our explicit construction.

\emph{Stabilizer-visible entanglement.---}
We call a POVM $\mathcal{M}=\{E_i\}_{i\in I}$ a stabilizer measurement if it can be implemented using stabilizer-ancilla preparation, Clifford unitaries, adaptive Pauli measurements, classical randomness, and classical post-processing, all of which are free operations in the resource theory of magic \cite{gross2006hudson,mari2012positive,veitch2014resource}. We denote this measurement class by $\STAB$. Operationally, $\STAB$ captures precisely the statistical information available to an observer equipped only with stabilizer operations.

{ Given any state \(\rho\),} let $\mu_{\mathcal M}(\rho)=\{\Tr(E_i\rho)\}_i$ denote the outcome distribution induced by a measurement $\mathcal M$. { We write \(D_{\rm KL}\) for the classical KL divergence and \(\SEP\) for the set of bipartite separable states.} { We define the stabilizer-measured relative entropy and the corresponding stabilizer-visible entanglement by}
\begin{equation}
\label{eq:stab-visible-defs}
\begin{aligned}
D_{\STAB}(\rho\Vert\sigma)
&:=\sup_{\mathcal M\in\STAB}
D_{\rm KL}\!\left(\mu_{\mathcal M}(\rho)\Vert\mu_{\mathcal M}(\sigma)\right),\\
E_{\STAB}(\rho)&:=\inf_{\sigma\in\SEP}D_{\STAB}(\rho\Vert\sigma).
\end{aligned}
\end{equation}
This quantity provides a stabilizer-restricted analogue of the measured relative entropy of entanglement, in the spirit of restricted-measurement entanglement measures \cite{piani2009relative,vedral1997quantifying}. 

To quantify what is lost by restricting the measurement class to { stabilizer measurements}, we compare \(D_{\STAB}\) with the standard measured relative entropy, namely the same expression optimized over { all POVMs, whose class we denote by \(\ALL\)} \cite{piani2009relative,matthews2009distinguishability}:
\begin{equation}
\label{eq:all-visible-defs}
\begin{aligned}
D_{\ALL}(\rho\Vert\sigma)
&:=\sup_{\mathcal M\in{\ALL}}
D_{\rm KL}\!\left(\mu_{\mathcal M}(\rho)\Vert\mu_{\mathcal M}(\sigma)\right),\\
E_{\ALL}(\rho)&:=\inf_{\sigma\in\SEP}D_{\ALL}(\rho\Vert\sigma).
\end{aligned}
\end{equation}
Thus \(E_{\STAB}\) and \(E_{\ALL}\) quantify the same task---distinguishing a state from the separable set---under stabilizer-restricted and unrestricted measurements, respectively. Throughout this paper, logarithms are base two; all relative entropies and rates are measured in bits.

{ Our first result shows that stabilizer measurements change the scaling of visible entanglement for typical high-dimensional states. Let \(p\) be any fixed odd prime, let \(\ket{\psi_n}\) be Haar-random on
\(\mathbb C^{p^n}\otimes\mathbb C^{p^n}\), and write \(\psi_n:=\ket{\psi_n}\!\bra{\psi_n}\). Standard results on the typicality of Haar-random bipartite pure states~\cite{page1993average,hayden2006aspects,nechita2007asymptotics} imply}
\begin{equation*}
E_{\ALL}(\psi_n)=n\log p+O(1)
\end{equation*}
{ with probability \(1-o(1)\) as \(n\to\infty\). The theorem below shows that stabilizer measurements change this scaling.}

\begin{theorem}\label{thm:haar-visible}
{
For each fixed odd prime \(p\), there exist constants \(C_p,c_p>0\) such that, for Haar-random
\(\ket{\psi_n}\) on \(\mathbb C^{p^n}\otimes\mathbb C^{p^n}\),
\begin{equation}
\label{eq:haar-visible-bound}
E_{\STAB}(\psi_n)\le C_p
\end{equation}
with probability at least \(1-e^{-c_p n^2}\).
}
\end{theorem}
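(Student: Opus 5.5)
The plan is to bound $E_{\STAB}(\psi_n)$ by choosing one fixed separable state, $\sigma=\Id/D$ with $D=p^{2n}$. This state is separable, and it is also the unique state that is invariant under all Clifford unitaries, which is what makes it compatible with the stabilizer structure. We then need to show that
\[
\sup_{\mathcal M\in\STAB} D_{\rm KL}\!\big(\mu_{\mathcal M}(\psi_n)\Vert\mu_{\mathcal M}(\Id/D)\big)\le C_p
\]
with high probability. The first step is structural. For odd $p$, every POVM element $E$ of a stabilizer measurement has a nonnegative discrete Wigner function. Equivalently, $E$ is a nonnegative combination of stabilizer projectors,
\[
E=\textstyle\sum_s c_s\,|s\rangle\langle s|,\qquad c_s\ge 0 .
\]
The reason is that stabilizer ancillas, Clifford unitaries, Pauli measurements and classical processing all preserve Wigner positivity in the Heisenberg picture. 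A naive max-ratio bound $\Tr(E\psi)/\Tr(E\sigma)\le D\max_s|\langle s|\psi\rangle|^2$ is not good enough. Haar typicality together with a union bound over the $p^{O(n^2)}$ stabilizer states only gives $O(\log n)$. We therefore bound KL by the collision (Rényi-2) divergence instead:
\[
D_{\rm KL}(\mu\Vert\nu)\le \log\sum_i \mu_i^2/\nu_i .
\]

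The second step uses the Cauchy--Schwarz inequality in the decomposition weights, $(\sum_s c_s a_s)^2\le(\sum_s c_s)(\sum_s c_s a_s^2)$, where $a_s=|\langle s|\psi\rangle|^2$. This turns the Rényi-2 sum into
\[
\sum_i \frac{\Tr(E_i\psi)^2}{\Tr(E_i)/D}\;\le\; D^2\,\Tr\!\big[(\psi\otimes\psi)\,G_{\mathcal M}\big],\qquad G_{\mathcal M}:=\sum_{i,s}c_{i,s}\,|s\rangle\langle s|^{\otimes 2}.
\]
Here $G_{\mathcal M}$ is a positive operator supported on the symmetric subspace. It satisfies $\Tr_2 G_{\mathcal M}=\sum_i E_i=\Id$, so $\Tr G_{\mathcal M}=D$. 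Its Haar average is $\mathbb E\,\Tr[(\psi\otimes\psi)G]=2\Tr G/(D(D+1))\approx 2/D$. The whole quantity therefore has mean $O(1)$ inside the logarithm. It then suffices to show that
\[
F_G(\psi):=D^2\Tr[(\psi\otimes\psi)G]
\]
is at most a constant, uniformly over all admissible $G$, except with probability $e^{-c_p n^2}$.

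The third step, which I expect to be the main obstacle, is this uniformity over the continuum of stabilizer measurements. I would first reduce the supremum to extreme points. The set of admissible $G$ (nonnegative combinations of $s^{\otimes2}$ with partial trace $\Id$) is convex, and $F_G$ is linear in $G$. So it suffices to control extreme decompositions. I would try to show that, after coarse-graining outcomes (which only decreases the divergence by data processing), each such extreme point is dominated by $O(1)$ times a Clifford-covariant object indexed by a stabilizer group, namely a complete Pauli-basis measurement in some stabilizer basis. There are only $p^{O(n^2)}$ of these.

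For each fixed one, $F_G(\psi)$ is a degree-4 polynomial on the sphere. It reduces to $D^2\sum_{s\in\text{basis}} a_s^2$ times constants. We need exponential concentration at scale $n^2$ around its $O(1)$ mean. For this I would use Lévy's lemma applied to $\sqrt{D\sum_s a_s^2}$, i.e.\ to the $\ell^2$ norm of the vector $(\sqrt D\,\langle s|\psi\rangle^2)_s$, which is Lipschitz with constant $O(\sqrt D)$ on the sphere of dimension $D$. This gives tail probability $\exp(-c\,t^2)$ with $t$ measured relative to a deviation of order $1$, and effective dimension $D=p^{2n}\gg n^2$. In fact the tail should be $\exp(-c' D^{\alpha})$ for a suitable $\alpha>0$. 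That is far stronger than $e^{-c_p n^2}$, and it absorbs the union bound over $p^{O(n^2)}$ stabilizer bases.

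Combining the three steps gives $E_{\STAB}(\psi_n)\le\log(2+o(1))+\log K_p=:C_p$, where $K_p$ is the constant lost in the extreme-point domination. If the reduction to Pauli-basis extremes fails, my fallback is a $\delta$-net over normalized $G$'s in the Wigner representation, with Lipschitz control of $F_G$ in $G$. The risk is that the net size could exceed $e^{O(n^2)}$, and this is exactly where the argument could break.
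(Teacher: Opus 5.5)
There is a genuine gap in your third step, exactly where you expected the obstacle, and neither proposed route closes it.

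\textbf{Why the reduction to non-adaptive bases fails.} Stabilizer measurements include \emph{adaptive} Pauli trees, in which the Pauli measured at a node depends on all earlier outcomes. Their rank-one refinements are orthonormal stabilizer bases that are not eigenbases of a single stabilizer group. For example, measure the \(L=2n\) qudits one at a time, choosing the single-qudit stabilizer basis for qudit \(k\) as an injective function of the outcome on qudit \(k-1\). The \(p^{L-1}\) sibling classes of leaves then have pairwise distinct stabilizer groups. Each stabilizer state lies in exactly one non-adaptive Pauli basis. Hence any convex combination of the \(p^{O(n^2)}\) non-adaptive bases gives some leaf weight at most \(p/D\). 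So the claimed \(O(1)\) domination by Pauli-basis objects is not available.

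\textbf{Why a direct union bound over adaptive bases fails.} Choosing \(Z_L\) or \(X_L\) independently at each of the \(p^{L-1}\) last-level nodes already gives \(2^{p^{L-1}}=e^{\Theta(D)}\) distinct bases. For a fixed basis and any constant \(C\), \(\Prb[D\sum_ip_i^2\ge C]\ge\Prb[p_1\ge\sqrt{C/D}]\approx e^{-\sqrt{CD}}\). So no per-basis tail for the collision sum can absorb that count. Your global-Lipschitz L\'evy bound gives no decay at all, and the hoped-for \(e^{-c'D^\alpha}\) can have at most \(\alpha=1/2\).

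\textbf{The fallback and the relaxation.} The \(\delta\)-net fallback cannot work, since \(G\) ranges over a set of dimension of order \(D^2\). Also, relaxing to all \(G=\sum c_s\,s^{\otimes2}\) with \(\Tr_2G=\Id\) discards the tree structure that makes counting possible. The extreme points of that set are general POVMs built from stabilizer-state projectors, not Pauli bases, and you give no control over them.

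\textbf{What is missing.} The paper uses a multiscale count at the level of low-rank pieces of trees.
\begin{enumerate}[label=(\roman*)]
\item It invokes the normal form of Heimendahl et al.\ to write every \(\STAB\) measurement as a convex mixture of classical post-processings of rank-one adaptive Pauli trees. This, not Wigner positivity, is what justifies \(E=\sum_sc_s|s\rangle\langle s|\): positive-Wigner effects need not lie in the stabilizer cone, so your ``equivalently'' is wrong as stated.
\item It bounds every accepting projector \(P=\sum_{i\in A}\pi_i\) of every tree. Pruning to branches that reach \(A\) leaves at most \(\min(p^j,R)\) nodes at depth \(j\). Each has fewer than \(p^{2(L-j)}\) Pauli choices and \(2^p\) choices of retained children. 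Hence \(\log|\{P:\rank P\le R\}|\le C_pR\log^2(eK/R)\) with \(K=D\).
\item Against the fixed-projector Haar tail \(\Prb[\langle\psi|P|\psi\rangle\ge a]\le e^{-c_0Ka}\), a union bound over dyadic ranks gives \(\langle\psi|P|\psi\rangle\le A_p(r/K)\log^2(eK/r)\) for all such \(P\) simultaneously. The failure probability is \(e^{-c\log^2K}=e^{-c_pn^2}\), set by the rank-one scale.
\item The resulting envelope \(p_{(k)}\le(A_p/K)\log^2(eK/k)\) for \(k\le K/2\) bounds the KL sum by \(O_p(1)\).
\end{enumerate}

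\textbf{Your R\'enyi-2 step.} It is compatible with this argument. The prefactor should be \(D\), not \(D^2\): \(\sum_i\Tr(E_i\psi)^2/(\Tr E_i/D)\le D\,\Tr[(\psi\otimes\psi)G_{\mathcal M}]\), whose Haar mean is \(2D/(D+1)\). The same envelope gives \(D\sum_kp_{(k)}^2=O_p(1)\) for every tree. But the R\'enyi-2 route does not remove the need for the low-rank counting lemma.
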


 Consequently, the stabilizer-visible fraction is \(O(1/n)\): typical high-dimensional states contain extensive entanglement, while stabilizer measurements access only a dimension-independent amount.

{
Werner states form a canonical one-parameter family of bipartite mixed states. 
Their \(U\otimes U\) symmetry makes many entanglement questions analytically tractable. Werner’s construction showed that entanglement does not imply Bell nonlocality \cite{werner1989quantum,vollbrecht2001entanglement,horodecki2009quantum}. 
Let \(p\) be any odd prime and let \(F_p\) denote the swap operator on \(\mathbb C^p\otimes\mathbb C^p\). Define
\begin{equation}
\label{eq:werner-state}
\begin{aligned}
\rho_{\mathrm W,p}(t)&=t\,\tau_{+,p}+(1-t)\tau_{-,p},\\
\tau_{\pm,p}&=\frac{{\Id}\pm F_p}{p(p\pm1)}.
\end{aligned}
\end{equation}
The Werner state \(\rho_{\mathrm W,p}(t)\) is entangled for \(0\le t<1/2\) and separable for \(1/2\le t\le1\) \cite{werner1989quantum,horodecki2009quantum}. 
For fixed \(t<1/2\), the unrestricted visible entanglement is dimension independent:
\begin{equation}
\label{eq:werner-all}
E_{\ALL}\!\left(\rho_{\mathrm W,p}(t)\right)
=
t\log(2t)+(1-t)\log(2(1-t)).
\end{equation}
The right-hand side is the known relative entropy of entanglement
\(E_R(\rho_{\mathrm W,p}(t))\)~\cite{vollbrecht2001entanglement};
the equality \(E_{\ALL}=E_R\) follows from a simple two-outcome
measurement argument given in the Supplemental Material~\cite{supplemental}.
Stabilizer-visible entanglement behaves differently.

\begin{proposition}\label{prop:werner}
For every odd prime \(p\) and \(0\le t<1/2\),
\begin{equation}
\label{eq:werner-stab}
\begin{aligned}
E_{\STAB}\!\left(\rho_{\mathrm W,p}(t)\right)=
&\frac{2t}{p+1}\log(2t) \\
&+\frac{p+1-2t}{p+1}
\log\!\left(\frac{p+1-2t}{p}\right).
\end{aligned}
\end{equation}
In particular,
\begin{equation*}
\sup_{0\le t\le1/2}
E_{\STAB}\!\left(\rho_{\mathrm W,p}(t)\right)
\le \log\!\left(1+\frac{1}{p}\right)
\longrightarrow0
\end{equation*}
as \(p\to\infty\) through odd primes.
\end{proposition}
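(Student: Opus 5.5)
The plan is to reduce everything to a two-parameter classical problem by exploiting symmetry on both sides of the optimization. On the measurement side, the local Clifford group \(\mathcal C_p\) of a single qudit of odd prime dimension is a unitary \(2\)-design, and \(\STAB\) is closed under conjugation by \(C\otimes C\) for \(C\in\mathcal C_p\) (composing a stabilizer POVM with Clifford unitaries keeps it stabilizer). Since \(\rho_{\mathrm W,p}(t)\) is \(U\otimes U\)-invariant, we may replace any candidate separable \(\sigma\) by its \(U\otimes U\) twirl. This twirl is again a separable Werner state, and by joint convexity of \(D_{\rm KL}\) it can only decrease \(D_{\STAB}(\rho\Vert\sigma)\) (here we use that \(D_{\STAB}\) is jointly convex and unitarily covariant under local Cliffords, together with the fact that the Clifford twirl coincides with the Haar \(U\otimes U\) twirl on operators by the \(2\)-design property). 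Hence
\[
E_{\STAB}(\rho_{\mathrm W,p}(t))=\min_{s\in[1/2,1]}D_{\STAB}(\rho_{\mathrm W,p}(t)\Vert\rho_{\mathrm W,p}(s)).
\]
For pairs of Werner states, the outcome distribution of a POVM \(\{E_i\}\) depends only on the pairs \((\Tr E_i\Pi_+,\Tr E_i\Pi_-)\), where \(\Pi_\pm=(\Id\pm F_p)/2\).

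The key structural step is to characterize which such pairs stabilizer effects can realize. I would first show that every effect of a stabilizer POVM lies in the cone generated by pure stabilizer projectors on \(\mathbb C^p\otimes\mathbb C^p\). Pauli-measurement projectors applied to stabilizer ancillas and then traced out yield sums of stabilizer projectors, adaptivity and classical post-processing only add such terms, and Clifford conjugation preserves the cone. It therefore suffices to compute \(\langle\phi|F_p|\phi\rangle\) over two-qudit stabilizer states \(\phi\). There are two cases.
\begin{itemize}
\item For product states \(|a\rangle|b\rangle\), we have \(\langle F_p\rangle=|\langle a|b\rangle|^2\in\{0,1/p,1\}\).
\item For maximally entangled stabilizer states \((\Id\otimes U)|\Phi\rangle\) with \(U\) Clifford (up to Pauli), we have \(\langle F_p\rangle=\Tr(U^{T}\bar{U}\cdots)/p\). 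Using the explicit Weyl--Heisenberg/symplectic form, I expect this to take values of the form \(\pm 1/p\) or \(0\), with the minimum equal to \(-1/p\).
\end{itemize}
The conclusion to aim for is that the antisymmetric fraction \(\Tr(E\Pi_-)/\Tr E\) is at most \((p+1)/(2p)\) for every stabilizer effect \(E\). This is the step I expect to be the main obstacle: one must certify \(\min_\phi\langle\phi|F_p|\phi\rangle=-1/p\) over all stabilizer states, and also justify the cone claim for adaptive protocols with ancillas.

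Given this constraint, the classical problem is as follows. Each outcome has probability \(\alpha_i t+\beta_i(1-t)\) under \(\rho_{\mathrm W,p}(t)\), where \(\alpha_i=\Tr E_i\tau_+\) and \(\beta_i=\Tr E_i\tau_-\), and the admissible ratios \(\beta_i/\alpha_i\) are bounded by the extremal value. By the log-sum inequality, the KL divergence is maximized by merging outcomes into a two-outcome measurement. The first outcome is the extremal effect, whose \(\tau_-\)-weight is chosen to be as large as possible relative to its \(\tau_+\)-weight; the second is its complement. I would then verify that the resulting distributions are \(\bigl(2t/(p+1),\,(p+1-2t)/(p+1)\bigr)\) under \(\rho\) and \(\bigl(1/(p+1),\,p/(p+1)\bigr)\) under \(\rho_{\mathrm W,p}(1/2)\). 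The extremal effect can be realized by an explicit Clifford-rotated measurement, which gives achievability. Finally, I would minimize over \(s\) and show that \(s=1/2\) is optimal: the divergence is monotone in \(s\) because the separable boundary is closest. This yields exactly Eq.~\eqref{eq:werner-stab}.

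The uniform bound is then immediate. The first term \(\frac{2t}{p+1}\log(2t)\) is \(\le0\) for \(t\le1/2\). In the second term \(\frac{p+1-2t}{p+1}\le1\) and \(\log\frac{p+1-2t}{p}\le\log(1+1/p)\). Hence the supremum is at most \(\log(1+1/p)\to0\) as \(p\to\infty\).
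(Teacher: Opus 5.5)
There is a genuine gap at the step you flagged, and it is more than a missing justification: the extremal value you aim for is wrong. Over two-qudit stabilizer states, $\bra{\phi}F_p\ket{\phi}$ is never negative. Its minimum is $0$, attained for example by $\ket{0}\ket{1}$. For $(\Id\otimes U)\ket{\Phi_p}$ it equals $\Tr(\bar U U)/p$, which is already $1$ for $U=\Id$, not ``$\pm1/p$ or $0$''. The clean reason, and the route the paper takes, is Wigner positivity. In odd prime dimension $F_p=\frac1p\sum_z A_z\otimes A_z$, so $\Tr(EF_p)=\frac1p\sum_z W_E(z,z)\ge 0$ for every stabilizer effect $E$. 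The correct constraint is therefore $\Tr(E\Pi_-)\le\frac12\Tr E$, i.e.\ $c:=\Tr(E\Pi_+)/\Tr E\in[1/2,1]$, not an antisymmetric fraction of $(p+1)/(2p)$. This route also makes your cone-of-stabilizer-projectors claim and the case analysis over stabilizer states unnecessary.

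The loosening is not harmless. The distributions you plan to verify, $\bigl(2t/(p+1),(p+1-2t)/(p+1)\bigr)$ and $\bigl(1/(p+1),p/(p+1)\bigr)$, come from $\{\sum_i\ket{ii}\bra{ii},\,\Id-\sum_i\ket{ii}\bra{ii}\}$. Those two effects sit at $c=1$ and $c=1/2$, so they are inconsistent with your claimed extremal ratio. Suppose you run the extreme-point argument with $c\ge(p-1)/(2p)$ instead. The constraints $\sum_k r_k=p^2$ and $\sum_k r_kc_k=p(p+1)/2$ (with $r_k=\Tr E_k$) then place weight $2p^2/(p+1)$ at $c=1$. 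The resulting upper bound is strictly larger: for $p=3$, $t=0$ it gives $\log(8/5)$, against the true value $\log(4/3)$. No stabilizer effect attains that ratio, so your upper and lower bounds never meet.

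Relatedly, the two-outcome reduction is not log-sum ``merging'', since merging only decreases KL. What you need is convexity of $h(c)=B_t(c)\log[B_t(c)/B_s(c)]$ on $[1/2,1]$, where $B_x(c)=\frac{2xc}{p(p+1)}+\frac{2(1-x)(1-c)}{p(p-1)}$ is the per-unit-trace outcome probability under $\rho_{\mathrm W,p}(x)$. Convexity lets you dominate each outcome by its split into the two endpoint types. Combined with $\sum_k r_k(2c_k-1)=\Tr F_p=p$ and $\sum_k 2r_k(1-c_k)=p(p-1)$, this gives the bound $p\,h(1)+p(p-1)\,h(1/2)$, which the computational basis attains. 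Monotonicity in $s$ should then be checked by differentiating the explicit formula rather than by the ``closest boundary'' heuristic. Your twirling reduction and your derivation of the uniform bound $\log(1+1/p)$ are fine.
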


Comparing Eqs.~\eqref{eq:werner-all} and \eqref{eq:werner-stab}, the unrestricted value remains finite for fixed \(t<1/2\), whereas the stabilizer-visible entanglement vanishes uniformly along odd-prime local dimensions. 
The bound in Proposition~\ref{prop:werner} is uniform in \(t\), showing that the collapse is not confined to a special point of the Werner family.
}

A more striking question is whether the hiding can occur for stabilizer states themselves. For pure bipartite stabilizer states, the answer is no: local Clifford operations transform any such state into { maximally entangled pairs} and local product states, so { local stabilizer operations fully reveal and extract its entanglement} \cite{fattal2004entanglement}. We show that the mixed-state case is different.

\begin{theorem}\label{thm:mixed-simple}
In \(3\otimes 3\), let
\(\ket{\Psi^+_{12}}=(\ket{11}+\ket{22})/\sqrt2\) and
\(\rho_* = \frac13\ket{00}\bra{00} + \frac23\ket{\Psi^+_{12}}\bra{\Psi^+_{12}}\).
Then \(\rho_*\) is a mixed stabilizer state and satisfies
\[
E_{\STAB}(\rho_*) < E_{\ALL}(\rho_*).
\]
\end{theorem}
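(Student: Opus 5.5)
The plan has three parts: show that \(\rho_*\) is a mixed stabilizer state, bound \(E_{\ALL}(\rho_*)\) from below with one explicit measurement, and bound \(E_{\STAB}(\rho_*)\) from above with one explicit separable state \(\sigma\). The last part is the substantive one.

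\emph{Step 1: \(\rho_*\) is a mixed stabilizer state.} Let \(\omega=e^{2\pi i/3}\) and
\[
\ket{\Phi_a}=\tfrac{1}{\sqrt3}\sum_{j\in\mathbb Z_3}\omega^{a j^2}\ket{jj},\qquad a\in\mathbb Z_3 .
\]
Each \(\ket{\Phi_a}\) is a pure stabilizer state: it is a maximally entangled stabilizer state acted on by the local quadratic-phase Clifford \(\mathrm{diag}(\omega^{a j^2})\otimes\Id\). Since \(1^2=2^2=1\) in \(\mathbb Z_3\), the relative phase between \(\ket{11}\) and \(\ket{22}\) is \(a\)-independent, while the phase of \(\ket{00}\) relative to them is \(\omega^{-a}\). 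Averaging over \(a\) therefore removes the \(0\)–\(1\) and \(0\)–\(2\) coherences and keeps the \(1\)–\(2\) coherence:
\[
\tfrac13\sum_a\ket{\Phi_a}\!\bra{\Phi_a}=\tfrac13\ket{00}\!\bra{00}+\tfrac23\ket{\Psi^+_{12}}\!\bra{\Psi^+_{12}}=\rho_* .
\]

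\emph{Step 2: lower bound on \(E_{\ALL}\).} Use the two-outcome POVM \(\{P,\Id-P\}\) with \(P=\ket{\Psi^+_{12}}\!\bra{\Psi^+_{12}}\). Every separable \(\sigma\) has \(\Tr(P\sigma)\le 1/2\), because \(P\) projects onto a Schmidt-rank-two maximally entangled state. Also \(\Tr(P\rho_*)=2/3\). The binary divergence \(d(2/3\Vert q)\) is decreasing in \(q\) on \([0,1/2]\), so
\[
E_{\ALL}(\rho_*)\ge 1-h(2/3)\approx0.082>0 .
\]
If this margin proves too small for Step 3, I would compute \(E_{\ALL}(\rho_*)\) more sharply. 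The state has symmetry under local diagonal phases and the \(1\leftrightarrow2\) swap, so the optimal separable \(\sigma\) can be taken twirled. This reduces \(E_R\) to a few-parameter problem, and the two-outcome argument from the Werner case in the Supplemental Material then shows \(E_{\ALL}=E_R\).

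\emph{Step 3: upper bound on \(E_{\STAB}\) (the main obstacle).} Since \(E_{\STAB}(\rho_*)\le D_{\STAB}(\rho_*\Vert\sigma)\) for any separable \(\sigma\), I would choose a separable \(\sigma\) invariant under the same symmetry group \(G\) (local phases \(\omega^{a j^2}\), the swap \(1\leftrightarrow2\) on both sides, and \(X\)-type relabelings preserving the form). A natural candidate is
\[
\sigma=\alpha\ket{00}\!\bra{00}+(1-\alpha)\,\tfrac12\bigl(\ket{11}\!\bra{11}+\ket{22}\!\bra{22}\bigr)+(\text{separable }1\text{–}2\text{ coherence}),
\]
with parameters tuned afterwards.

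The key structural fact I would establish is that the statistics of any stabilizer measurement are determined by its effects. Up to classical post-processing, these effects are nonnegative combinations of stabilizer-code projectors \(\Pi_S\) onto joint eigenspaces of Pauli subgroups in \(\mathbb Z_3^4\). By convexity of \(D_{\rm KL}\) under coarse-graining, this gives
\[
D_{\STAB}(\rho_*\Vert\sigma)\le \sup_{\text{stabilizer bases / code partitions}} D_{\rm KL}\bigl(\{\Tr(\Pi_S\rho_*)\}\Vert\{\Tr(\Pi_S\sigma)\}\bigr).
\]
In two qutrits there are finitely many such partitions. Using \(G\)-invariance they collapse to a small number of orbit types, whose divergences can be enumerated explicitly.

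The crux is that \(P\) is \emph{not} a stabilizer projector. Every stabilizer projector overlapping \(\ket{\Psi^+_{12}}\) spreads its weight over a support that includes the \(\ket{00}\) sector, or else over an affine line of size three. This caps the ratio \(\Tr(\Pi_S\rho_*)/\Tr(\Pi_S\sigma)\) well below what \(P\) achieves. I would first try to show that the resulting supremum, optimized over \(\alpha\), falls strictly below the bound from Step 2. If the crude reduction to projective stabilizer measurements loses too much, the fallback is to handle adaptivity by noting that a Pauli measurement on one qutrit leaves the post-measurement state again \(G\)-covariant, so an inductive two-round bound suffices.
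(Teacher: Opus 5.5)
Your Step 1 is correct and matches the paper's decomposition. The genuine gap is that Steps 2 and 3 cannot close as planned. The Step 2 bound $1-h(2/3)\approx0.082$ is not merely ``possibly too small''; it lies provably below $E_{\STAB}(\rho_*)$. Measuring the commuting Paulis $Z_AZ_B^{-1}$ and $X_AX_B$ and coarse-graining gives the stabilizer test $\{\ket{\xi_0}\bra{\xi_0},\Id-\ket{\xi_0}\bra{\xi_0}\}$, with $\ket{\xi_0}=(\ket{00}+\ket{11}+\ket{22})/\sqrt3$. This test accepts $\rho_*$ with probability $5/9$ and every separable state with probability at most $1/3$, so $E_{\STAB}(\rho_*)\ge d(5/9\Vert1/3)\approx0.149$.

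Your fallback does not repair this, because no two-outcome test can certify $E_{\ALL}(\rho_*)=E_R(\rho_*)=2/3$. The state $\rho_*$ commutes with the separable state $\sigma_*=\Delta(\rho_*)$. So against $\sigma_*$, a test $\{Q,\Id-Q\}$ depends only on the diagonal of $Q$ in the joint eigenbasis, and the binary divergence is convex in that diagonal. The best $0/1$ coarse-graining (likelihood ratios $1,2,0$ on $\ket{00},\ket{\Psi^+_{12}},\ket{\Psi^-_{12}}$) therefore gives only $\log(3/2)<2/3$. The Werner argument works only because $U\otimes U$ symmetry collapses all statistics onto $\Tr(\Pi_{+,p}\,\cdot)$.

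What you need instead is the multi-outcome eigenbasis measurement together with a uniform overlap bound. Since $\rho_*$ is a mixture of the Schmidt-rank-three maximally entangled states $\ket{\xi_a}$, $\Tr(\rho_*\eta)\le1/3$ for every separable $\eta$. Take $p=(1/3,2/3)$ and $r_j=\bra{\varphi_j}\eta\ket{\varphi_j}$ for $\ket{\varphi_j}\in\{\ket{00},\ket{\Psi^+_{12}}\}$. Jensen's inequality then gives $D_{\rm KL}(p\Vert r)\ge\sum_jp_j\log(3p_j)=2/3$. Combined with $E_{\ALL}\le E_R\le D(\rho_*\Vert\sigma_*)=2/3$, this yields $E_{\ALL}(\rho_*)=2/3$.

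Step 3, as written, is a research program rather than a proof. Three claims are unverified: that stabilizer projectors overlapping $\ket{\Psi^+_{12}}$ ``spread their weight'' enough, the orbit enumeration, and the covariance of adaptive branches. Controlling arbitrary adaptive two-qutrit stabilizer trees with classical post-processing is exactly the hard part.

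The idea that disposes of it is Wigner positivity in odd dimension. Every stabilizer effect has $W_{E_k}(u)\ge0$ and $\sum_kW_{E_k}(u)=\Tr A_u=1$. So any stabilizer POVM, adaptive or not, acts as a classical stochastic kernel on phase-space distributions: $\Tr(E_k\rho)=\sum_uW_{E_k}(u)W_\rho(u)$. Take the plain reference $\sigma_*=\Delta(\rho_*)$; no tuned symmetric family is needed. Both states are stabilizer mixtures with nonnegative Wigner functions, so classical data processing gives $E_{\STAB}(\rho_*)\le D_{\STAB}(\rho_*\Vert\sigma_*)\le D_{\rm KL}(W_{\rho_*}\Vert W_{\sigma_*})=\frac13\log3\approx0.53$. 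This is below $2/3$ but far above $0.082$, which again shows that the strict inequality only closes once $E_{\ALL}(\rho_*)$ is computed exactly.
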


\emph{Sketch of proof.---}
First, \(\rho_*\) is explicitly a convex mixture of stabilizer states. Let
\(\omega=e^{2\pi i/3}\), then
\[
\rho_*=\frac13\sum_{a=0}^{2}\ket{{\xi}_a}\bra{{\xi}_a},\qquad
\ket{{\xi}_a}=\frac1{\sqrt3}\sum_{j=0}^{2}\omega^{a j^2}\ket{jj},
\]
and each \(\ket{{\xi}_a}\) is a two-qutrit stabilizer state.  Second, \(\rho_*\)
is maximally correlated.
{ Here \(\Delta\) denotes the complete dephasing channel in the local
computational bases, and \(D\) denotes the quantum relative entropy.}  With
\(\sigma_*=\Delta(\rho_*)=\frac13\sum_i\ket{ii}\bra{ii}\), { the
unrestricted value is}
\[
E_{\ALL}(\rho_*)=D(\rho_*\Vert\sigma_*)=\frac23.
\]
  Third, we use Gross's discrete Wigner representation to obtain an
  efficiently computable Wigner--KL upper bound on \(D_{\STAB}\). For odd-prime local
  dimension, stabilizer measurements admit a nonnegative representation
  in this formalism and therefore induce stochastic maps from phase space
  to measurement outcomes. Consequently, for stabilizer states
  \(\rho_*\) and \(\sigma_*\), the classical data-processing inequality gives
\[
D_{\STAB}(\rho_*\Vert\sigma_*)
\le D_{\rm KL}(W_{\rho_*}\Vert W_{\sigma_*}).
\]
Since
\(\rho_*\) and \(\sigma_*\) are mixtures of stabilizer states, this gives
\begin{align*}
E_{\STAB}(\rho_*)
&\le D_{\STAB}(\rho_*\Vert\sigma_*)\\
&\le D_{\rm KL}\!\left(W_{\rho_*}\Vert W_{\sigma_*}\right)\\
&=\frac13\log 3
<\frac23.
\end{align*}
This proves the claimed strict gap.\hfill\(\square\)

Thus stabilizer-restricted visibility is not a consequence of input magic: it already occurs for a mixed state that is free in the resource theory of magic.

Figure~\ref{fig:mixed-stabilizer-gap} illustrates the same mechanism for the interpolation
\begin{equation}
\label{eq:rho-interpolation}
\rho(s)=(1-s)\sigma_*+s\rho_*,
\qquad 0\le s\le1.
\end{equation}
{ The same argument applies to the family in Eq.~\eqref{eq:rho-interpolation}, with \(\rho_*\) replaced by \(\rho(s)\); see the Supplemental Material~\cite{supplemental} for details.}

\begin{figure}[b]
    \includegraphics[width=0.95\columnwidth]{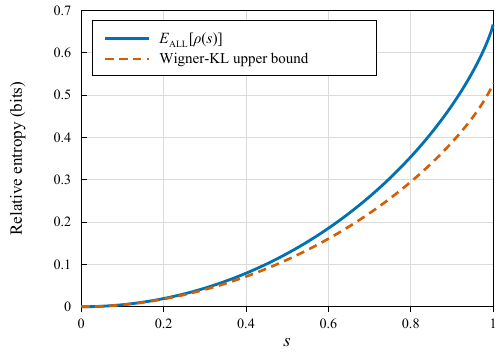}
\caption{Mixed-stabilizer visibility gap. For \(\rho(s)=(1-s)\sigma_*+s\rho_*\), the solid curve gives the exact unrestricted value \(E_{\ALL}(\rho(s))=D(\rho(s)\Vert\sigma_*)\).  { The dashed curve is \(D_{\rm KL}(W_{\rho(s)}\Vert W_{\sigma_*})\), the Wigner--KL upper bound on \(E_{\STAB}(\rho(s))\).  Since this certified upper bound is strictly smaller than \(E_{\ALL}(\rho(s))\) for every \(s>0\), stabilizer measurements are strictly weaker than unrestricted measurements} for this mixed-stabilizer family.}
\label{fig:mixed-stabilizer-gap}
\end{figure}

The central construction below shows that this hidden entanglement can dominate even when
the total unrestricted visible entanglement diverges, and does so using mixed
stabilizer input states.

\begin{theorem}
\label{thm:mixed-visible}
There exists a sequence of mixed stabilizer states
\(\widehat\rho_N\in\mathcal D((\mathbb C^3)^{\otimes N}_A\otimes(\mathbb C^3)^{\otimes N}_B)\) such that
\begin{equation}
\label{eq:visible-separation}
\begin{aligned}
E_{\STAB}(\widehat\rho_N)
&\le \frac{\log 3}{N}\longrightarrow0,\\
E_{\ALL}(\widehat\rho_N)
&=\Omega\!\left(\frac{N}{\log N}\right)\longrightarrow\infty.
\end{aligned}
\end{equation}
\end{theorem}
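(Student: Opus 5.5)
We would build \(\widehat\rho_N\) as an \(N\)-qutrit generalization of the maximally correlated state \(\rho_*\) of Theorem~\ref{thm:mixed-simple}, and then reuse the two-sided argument from its proof sketch.

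\textbf{Construction.} For a family \(\mathcal Q_N\) of quadratic forms \(q:\mathbb F_3^N\to\mathbb F_3\), set
\[
\widehat\rho_N=\frac{1}{|\mathcal Q_N|}\sum_{q\in\mathcal Q_N}\ket{\xi_q}\bra{\xi_q},\qquad
\ket{\xi_q}=3^{-N/2}\sum_{x\in\mathbb F_3^N}\omega^{q(x)}\ket{xx}.
\]
Each \(\ket{\xi_q}\) is a stabilizer state: it is obtained from the maximally entangled state by the diagonal Clifford \(\omega^{q(x)}\). Hence \(\widehat\rho_N\) is a mixed stabilizer state. It is maximally correlated, with coherences \(c_{xy}=\mathbb E_q\,\omega^{q(x)-q(y)}/3^N\). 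Its dephased version \(\widehat\sigma_N=\Delta(\widehat\rho_N)=3^{-N}\sum_x\ket{xx}\bra{xx}\) is separable and is itself a stabilizer mixture.

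\textbf{Upper bound on \(E_{\STAB}\).} As in Theorem~\ref{thm:mixed-simple}, we use \(E_{\STAB}(\widehat\rho_N)\le D_{\STAB}(\widehat\rho_N\Vert\widehat\sigma_N)\). We then bound this by the Wigner--KL quantity \(D_{\rm KL}(W_{\widehat\rho_N}\Vert W_{\widehat\sigma_N})\), which is valid because all stabilizer measurements act as stochastic maps on Gross phase space. Both Wigner functions are nonnegative, since they are averages of Wigner functions of stabilizer states.

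The Wigner function of \(\ket{\xi_q}\) is uniform on an affine Lagrangian subspace determined by the symmetric matrix of \(q\). The Wigner function of \(\widehat\sigma_N\) is uniform on the union over all \(q\) of these Lagrangians. The task is therefore to choose \(\mathcal Q_N\) so that the average of these uniform distributions is close in KL to \(W_{\widehat\sigma_N}\).

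The target constant \(\log 3/N\) suggests the following approach. Choose \(\mathcal Q_N\) with enough symmetry, for example closed under adding linear forms and under a cyclic permutation of the \(N\) sites. Then the likelihood ratio \(W_{\widehat\rho_N}/W_{\widehat\sigma_N}\) is controlled by a single site. On that site the one-qutrit computation of Theorem~\ref{thm:mixed-simple} gives \(\log 3\), and the averaging over the \(N\) positions dilutes it by a factor \(1/N\). The key computation is to express \(W_{\widehat\rho_N}(x,p)\) through the number of \(q\in\mathcal Q_N\) whose Lagrangian passes through \((x,p)\), and then to bound the ratio of this count to its value under \(\widehat\sigma_N\).

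\textbf{Lower bound on \(E_{\ALL}\).} For maximally correlated states, measuring in a basis adapted to the support of \(\widehat\rho_N\) should give
\[
E_{\ALL}(\widehat\rho_N)\ge D(\widehat\rho_N\Vert\widehat\sigma_N)-o(\cdot)=S(\Delta(\widehat\rho_N))-S(\widehat\rho_N).
\]
This parallels the exact identity \(E_{\ALL}(\rho_*)=D(\rho_*\Vert\sigma_*)\) and its Supplemental-Material argument. In the simplest version we would use a two-outcome projective test onto the support of \(\widehat\rho_N\). This test yields at least \(-\log\Tr(\Pi\widehat\sigma)\) against the optimal separable \(\sigma\). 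Since \(\Pi\) projects onto at most \(|\mathcal Q_N|\) maximally entangled-type vectors, any separable \(\sigma\) satisfies \(\Tr(\Pi\sigma)\le|\mathcal Q_N|\,3^{-N}\). This gives \(E_{\ALL}\ge N\log 3-\log|\mathcal Q_N|\).

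\textbf{Main obstacle.} The two requirements pull against each other. The dilution in the Wigner-KL bound needs \(\mathcal Q_N\) to be large and symmetric, roughly covering phase space. The \(E_{\ALL}\) bound needs \(\log|\mathcal Q_N|\) to be at most \(N\log3-\Omega(N/\log N)\).

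We expect the \(\Omega(N/\log N)\) rate to come from a counting or coding-type choice. One candidate is to take \(q\) ranging over forms supported on blocks of size \(\sim\log N\). With such blocks, each site's ratio is still diluted as required, while the ensemble size misses full coverage by an exponential factor \(3^{\Omega(N/\log N)}\). Verifying that such a family simultaneously gives the exact \(\log3/N\) Wigner-KL bound and the entropy gap is the step we expect to be hardest. We would first check it numerically for small \(N\) via the Wigner counts.
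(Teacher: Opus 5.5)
\textbf{There is a genuine gap.} Your framework is the right one: a uniform mixture of maximally correlated stabilizer states $\ket{\xi_q}$, the dephased reference $\widehat\sigma_N$, the Wigner--KL bound, and an eigenbasis measurement. But you leave $\mathcal Q_N$ unspecified and defer both key computations, and the plan as designed cannot succeed, for three reasons.

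\emph{The suggested symmetry destroys the state.} If $\mathcal Q_N$ is closed under adding linear forms, then averaging $\omega^{\ell\cdot(x-y)}$ over all linear $\ell$ kills every coherence $c_{xy}$ with $x\neq y$. So $\widehat\rho_N=\widehat\sigma_N$ is separable.

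\emph{The step $\Tr(\Pi\sigma)\le|\mathcal Q_N|3^{-N}$ is false.} The $\ket{\xi_q}$ are not orthogonal, and their span can contain product vectors. For $N=1$ and $\mathcal Q_1=\{0,x^2\}$ one has $\omega\ket{\xi_0}-\ket{\xi_{x^2}}\propto\ket{00}$, so $\Tr(\Pi\ket{00}\bra{00})=1>2/3$. The final inequality $E_{\ALL}\ge N\log3-\log|\mathcal Q_N|$ still holds, but only through the eigenbasis/Jensen argument you mention first and then drop. Since every $\ket{\xi_q}$ is maximally entangled, $\Tr(\widehat\rho_N\eta)\le 3^{-N}$ for every separable $\eta$. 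This gives $E_{\ALL}(\widehat\rho_N)=N\log3-S(\widehat\rho_N)$ exactly, and $S(\widehat\rho_N)\le\log|\mathcal Q_N|$.

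\emph{Your ``main obstacle'' is an impossibility, not a tension to balance.} With your reference, $W_{\widehat\sigma_N}$ is uniform on the $3^{3N}$ points $\{(x,x,p_A,p_B)\}$. Meanwhile $W_{\widehat\rho_N}$ lives on at most $|\mathcal Q_N|$ Lagrangians of $3^{2N}$ points each, all inside that set. Hence $D_{\rm KL}(W_{\widehat\rho_N}\Vert W_{\widehat\sigma_N})\ge 3N\log3-\log\bigl(|\mathcal Q_N|3^{2N}\bigr)=N\log3-\log|\mathcal Q_N|$. Any family meeting your requirement $\log|\mathcal Q_N|\le N\log3-\Omega(N/\log N)$ therefore has Wigner--KL bound $\Omega(N/\log N)$, not $O(1/N)$.

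\textbf{What the paper does instead.} The missing idea is to keep the ensemble essentially full, so that the Wigner mixture nearly fills the reference support. At the same time the $\ket{\xi_q}$ must be \emph{linearly dependent}, so that $N\log3-S(\widehat\rho_N)$ stays large even though $\log|\mathcal Q_N|$ is nearly $N\log 3$. The paper's construction has $N\log3-\log|\mathcal Q_N|=O(\log N)$, so it violates your requirement.

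The paper builds this blockwise. On $b$ qutrits, set $d=3^b$ and take $q_a(x)=\operatorname{Tr}_{\mathbb F_d/\mathbb F_3}(ax^2)$ for $a\in\mathbb F_d$.
\begin{itemize}
\item Because $ax^2$ cannot separate $x$ from $-x$, the seed $\gamma_b$ has eigenvalue $1/d$ once and $2/d$ with multiplicity $(d-1)/2$. This gives $E_{\ALL}(\rho_b)=1-1/d$.
\item After the Clifford decoder $\ket{x,x}\mapsto\ket{x}\ket{0^b}$, each $W_{\zeta_a}$ is uniform on $L_a=\{(x,2ax)\}$. These Lagrangians pairwise meet only at the origin.
\item Hence $W_{\gamma_b}$ equals the uniform reference value $1/d^2$ everywhere on its support except the origin, where it is $1/d$. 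The Wigner--KL bound is therefore exactly $\log d/d$.
\end{itemize}

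Your heuristic of a single-site ratio diluted by averaging over $N$ positions does not describe this mechanism. KL divergences add over blocks; for example, the Wigner--KL bound for $\rho_*^{\otimes N}$ is $\frac N3\log3$. Smallness comes instead from each block's excess being exponentially small in $b$.

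Finally, take $r_N=\lfloor N/b_N\rfloor$ blocks with $b_N=\lceil2\log_3N\rceil$, so that $3^{b_N}\ge N^2$. Additivity then gives $E_{\STAB}\le r_Nb_N\log3/3^{b_N}\le\log3/N$. Measuring only the blocks gives $E_{\ALL}\ge r_N(1-3^{-b_N})=\Omega(N/\log N)$.
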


\emph{Sketch of proof.---}
For each \(b\), we construct a mixed-stabilizer
maximally correlated block \(\rho_b\) on \(b\) qutrits per party.  The block \(\rho_b\) is
an equal mixture of \(3^b\) maximally correlated stabilizer states obtained from
one vector in each of \(3^b\) mutually unbiased stabilizer bases
\cite{wootters1989optimal,gibbons2004discrete,gross2006hudson}.  The only properties
used here are that these stabilizer states have flat computational-basis
amplitudes and Wigner supports with the prescribed spread intersection pattern.
The block has dephased separable state \(\Delta(\rho_b)\). For this block,
\begin{equation*}
E_{\ALL}(\rho_b)=D(\rho_b\Vert\Delta(\rho_b))=1-3^{-b},
\end{equation*}
because the maximally correlated variational witness is feasible against all
separable states.  On the stabilizer side, the Wigner--KL upper bound gives
\begin{equation*}
E_{\STAB}(\rho_b)
\le D_{\rm KL}\!\left(W_{\rho_b}\Vert W_{\Delta(\rho_b)}\right)
= b\log 3/3^b .
\end{equation*}
Set \(b_N=\lceil2\log_3 N\rceil\), \(r_N=\lfloor N/b_N\rfloor\), and
\begin{equation}
\label{eq:rhohat-construction}
\widehat\rho_N=
\rho_{b_N}^{\otimes r_N}
\otimes(\ket{00}\bra{00})^{\otimes(N-b_Nr_N)}.
\end{equation}
The unrestricted lower bound is obtained by measuring only the nontrivial
blocks; the maximally correlated witness tensorizes for this product
construction, giving
\begin{equation*}
E_{\ALL}(\widehat\rho_N)
\ge r_N(1-3^{-b_N})=
\Omega(N/\log N).
\end{equation*}
For the stabilizer upper bound, use the product separable reference
\(\Delta(\rho_{b_N})^{\otimes r_N}\) and the product stabilizer padding:
\begin{equation*}
E_{\STAB}(\widehat\rho_N)
\le r_N b_N\log 3/3^{b_N}
\le \log 3/N .
\end{equation*}
This proves both the vanishing stabilizer-visible bound and the diverging
unrestricted bound.\hfill\(\square\)

{ Theorem~\ref{thm:mixed-visible} therefore shows that, even for states that are free in the resource theory of magic, an unbounded amount of entanglement visible to unrestricted measurements can become asymptotically invisible to stabilizer measurements.}

\emph{Stabilizer entanglement distillation.---}
We now turn from witnessing entanglement to extracting it. Many practical entanglement
distillation schemes are built from stabilizer codes and syndrome measurements
\cite{bennett1996mixed,dur2007entanglement,matsumoto2003conversion,glancy2006entanglement,shi2025stabilizer,popp2025novel,shi2025measurement}.
We analyze the information-theoretic limits of this stabilizer toolbox by
comparing distillable entanglement under unrestricted \(\LOCC\) with distillable
entanglement under local stabilizer operations and classical communication. Let
$\LSCC$ denote local stabilizer operations and classical communication: Alice
and Bob may use local Clifford circuits, stabilizer ancillas, Pauli
measurements, Pauli corrections, classical feed-forward, and classical
communication. For a state \(\rho\) on tensor powers of \(p\)-dimensional
qudits, define the one-shot \(\epsilon\)-error { \(\LSCC\)} distillable entanglement as
\begin{equation}
\begin{aligned}
E_{\LSCC}^{D,(1),\epsilon}(\rho):=
\sup\{m\log p:\ &m\in\mathbb Z_{\ge0},\ \exists\Lambda\in\LSCC,\\
&{{\frac12\left\|\Lambda(\rho)-\Phi_p^{\otimes m}\right\|_1\le\epsilon}}\}.
\end{aligned}
\end{equation}
Here
\(\Phi_p=\ket{\Phi_p}\!\bra{\Phi_p}\), with
\(\ket{\Phi_p}=p^{-1/2}\sum_{j=0}^{p-1}\ket{jj}\), is the standard
{ \(p\)-dimensional maximally entangled state}; the factor \(\log p\) records the output in
bits. Replacing \(\LSCC\) by \(\LOCC\) gives
\(E_{\LOCC}^{D,(1),\epsilon}\).
This quantity quantifies how much entanglement can be converted into a
maximally entangled state by local stabilizer protocols.

For Haar-random pure states, the unrestricted one-shot \(\LOCC\) yield
remains extensive: for every fixed \(0<\epsilon<1\),
\begin{equation*}
	E_{\LOCC}^{D,(1),\epsilon}(\psi_n)
	={{n\log p+O_{p,\epsilon}(1)}},
\end{equation*}
with probability tending to one as \(n\to\infty\)
\cite{nechita2007asymptotics,buscemi2013general}; stabilizer restrictions change this
behavior qualitatively.

\begin{corollary}\label{thm:haar-distill}
{
For every fixed odd prime \(p\) and every fixed \(0<\epsilon<1\), there exist constants
\(C_{p,\epsilon},c_p>0\) such that, for Haar-random \(\ket{\psi_n}\) on
\(\mathbb C^{p^n}\otimes\mathbb C^{p^n}\), with
\(\psi_n:=\ket{\psi_n}\!\bra{\psi_n}\),
\begin{equation}
\label{eq:haar-distill-bound}
E_{\LSCC}^{D,(1),\epsilon}(\psi_n)\le C_{p,\epsilon}
\end{equation}
with probability at least \(1-e^{-c_p n^2}\).
}
\end{corollary}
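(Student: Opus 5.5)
The plan is to derive Corollary~\ref{thm:haar-distill} from Theorem~\ref{thm:haar-visible} by a one-shot converse. The argument rests on three facts about the stabilizer-measured relative entropy:
\begin{enumerate}[label=(\roman*)]
\item \(D_{\STAB}\) is monotone under \(\LSCC\) maps;
\item \(\LSCC\) maps send separable states to separable states;
\item a stabilizer measurement suffices to certify the fidelity of the output with \(\Phi_p^{\otimes m}\).
\end{enumerate}

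Fix \(\Lambda\in\LSCC\) with \(\frac12\|\Lambda(\psi_n)-\Phi_p^{\otimes m}\|_1\le\epsilon\), and let \(\sigma\in\SEP\) be a near-optimal separable state for \(E_{\STAB}(\psi_n)\). Composing \(\Lambda\) (local Cliffords, Pauli measurements, stabilizer ancillas, feed-forward) with any stabilizer measurement on the output gives a stabilizer measurement on the input. Hence
\[
D_{\STAB}(\Lambda(\psi_n)\Vert\Lambda(\sigma))\le D_{\STAB}(\psi_n\Vert\sigma),
\]
and \(\Lambda(\sigma)\) is separable because \(\Lambda\) is LOCC.

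On the output we use the two-outcome test \(\{\Phi_p^{\otimes m},\Id-\Phi_p^{\otimes m}\}\). This is a stabilizer measurement. \(\Phi_p^{\otimes m}\) is a stabilizer state, so we apply the Clifford that maps it to \(\ket{0}^{\otimes 2m}\), measure all qudits in the computational (Pauli \(Z\)) basis, and coarse-grain the outcomes.

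Under \(\Lambda(\psi_n)\) the accept probability is at least \(1-\epsilon\), since the fidelity is at least \(1-\) the trace distance. Under any separable state it is at most \(p^{-m}\), by the standard bound \(\langle\Phi_d\vert\tau\vert\Phi_d\rangle\le 1/d\) for separable \(\tau\) with \(d=p^m\). The binary KL divergence therefore satisfies
\[
D_{\rm KL}\big((1-\epsilon,\epsilon)\Vert(q,1-q)\big)\ge(1-\epsilon)\log\frac1q-h(\epsilon)\quad\text{for } q\le p^{-m},
\]
where \(h\) is the binary entropy. This gives
\[
(1-\epsilon)\,m\log p-h(\epsilon)\le E_{\STAB}(\psi_n)+\delta
\]
for every \(\delta>0\).

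Invoking Theorem~\ref{thm:haar-visible}, with probability at least \(1-e^{-c_pn^2}\) we have \(E_{\STAB}(\psi_n)\le C_p\). Therefore
\[
m\log p\le\frac{C_p+h(\epsilon)}{1-\epsilon}=:C_{p,\epsilon},
\]
uniformly over all admissible \(\Lambda\) and \(m\). This is the claim, with the same \(c_p\).

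The main obstacle is step (i). We must check that the class \(\STAB\) really is closed under precomposition with \(\LSCC\) channels, including adaptive classical communication between the parties. Concretely, each \(\LSCC\) protocol followed by the stabilizer test must be expressible as a single global stabilizer-measurement procedure. This should hold directly from the definition: classical communication is just classical feed-forward, and local Cliffords and Pauli measurements are global Cliffords and Pauli measurements. One further point needs care. The separability bound must be applied to \(\Lambda(\sigma)\) with branch-dependent output dimensions, or with a failure flag when \(\Lambda\) is non-deterministic. This is handled by always outputting \(2m\) qudits, assigning the reject outcome on failure branches.
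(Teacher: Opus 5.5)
Your proposal is correct and follows essentially the same route as the paper: you use monotonicity of $E_{\STAB}$ under $\LSCC$ (since $\mathcal M\circ\Lambda\in\STAB$ and $\Lambda(\SEP)\subseteq\SEP$), the two-outcome stabilizer test $\{\Phi_p^{\otimes m},\Id-\Phi_p^{\otimes m}\}$, and then Theorem~\ref{thm:haar-visible}. One small fix is needed: the actual acceptance probability is only $\alpha\ge 1-\epsilon$, so you cannot simply plug in $1-\epsilon$. Either bound $h(\alpha)\le\log 2$ as the paper does, which gives $C_{p,\epsilon}=(C_p+\log 2)/(1-\epsilon)$, or note that classically thinning the accept outcome down to probability $1-\epsilon$ keeps the separable acceptance probability below $p^{-m}$ and, by data processing, can only decrease the KL divergence.
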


{ Equation~\eqref{eq:haar-distill-bound} shows that stabilizer access not only hides most of the entanglement from measurement, but also prevents its extraction into copies of \({\Phi}_p\) at an extensive one-shot rate.} The corollary follows directly from Theorem~\ref{thm:haar-visible}, using the monotonicity of \(E_{\STAB}\) under \(\LSCC\) and a Bell stabilizer test for the distilled output; see the Supplemental Material~\cite{supplemental}.

For asymptotic distillation, define
\begin{equation}
\label{eq:asymp-lscc}
E_{\LSCC}^{D}(\rho):=
\lim_{\epsilon\to0}\mathop{{\mathrm{lim}}}\limits_{M\to\infty}
\frac1M E_{\LSCC}^{D,(1),\epsilon}(\rho^{\otimes M}),
\end{equation}
with \(E_{\LOCC}^{D}\) defined analogously. The same mixed-stabilizer construction yields an asymptotic
separation for entanglement distillation.

\begin{theorem}
\label{thm:distillation-separation}
{ For the mixed stabilizer states $\widehat\rho_N$ defined in Eq.~\eqref{eq:rhohat-construction},}
\begin{equation}
\label{eq:distill-separation}
\begin{aligned}
E_{\LOCC}^{D}(\widehat\rho_N)
&=\Omega\!\left(\frac{N}{\log N}\right){\longrightarrow\infty},\\
E_{\LSCC}^{D}(\widehat\rho_N)
&\le\frac{\log 3}{N}{\longrightarrow0}.
\end{aligned}
\end{equation}
\end{theorem}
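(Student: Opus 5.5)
We prove the two bounds in Theorem~\ref{thm:distillation-separation} separately. The \(\LOCC\) lower bound comes from the maximally correlated structure of the blocks \(\rho_b\). The \(\LSCC\) upper bound comes from the Wigner--KL bound already used for Theorem~\ref{thm:mixed-visible}, which we lift to a converse on distillation.

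\emph{LOCC lower bound.} Each block \(\rho_b\) is maximally correlated, so its coherent information can be computed directly. We have \(S(\rho_{b,B})=S(\Delta(\rho_b))=b\log 3\). We also have \(S(\rho_b)=b\log3-D(\rho_b\Vert\Delta(\rho_b))=b\log 3-(1-3^{-b})\). Hence \(I(A\rangle B)_{\rho_b}=S(B)-S(AB)=1-3^{-b}\). By the hashing inequality of Devetak--Winter, \(E^D_{\LOCC}(\rho_b)\ge 1-3^{-b}\). Distillable entanglement is superadditive on tensor products, because we can run the protocols on the blocks independently and discard the padding. This gives
\[
E^D_{\LOCC}(\widehat\rho_N)\ge r_N(1-3^{-b_N})=\Omega(N/\log N).
\]
In fact, for maximally correlated states the hashing value equals \(D(\rho_b\Vert\Delta(\rho_b))\), consistent with \(E_{\ALL}\).

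\emph{LSCC upper bound.} We follow the template of Corollary~\ref{thm:haar-distill}. The first step is a one-shot converse. Suppose \(\Lambda\in\LSCC\) maps \(\widehat\rho_N^{\otimes M}\) to within \(\epsilon\) of \(\Phi_3^{\otimes m}\). Take any product separable reference \(\sigma\) whose components are mixtures of stabilizer states, such as \(\sigma=(\Delta(\rho_{b_N})^{\otimes r_N}\otimes\ket{00}\bra{00}^{\otimes\cdots})^{\otimes M}\). Then \(\Lambda(\sigma)\) is separable. Composing \(\Lambda\) with the Bell stabilizer test \(\{\Phi_3^{\otimes m},\Id-\Phi_3^{\otimes m}\}\), which is a stabilizer measurement because \(\Phi_3^{\otimes m}\) is a stabilizer projector, gives a stabilizer measurement on the input. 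Its outcome has probability at least \(1-\epsilon\) under \(\widehat\rho_N^{\otimes M}\) and at most \(3^{-m}\) under any separable state. Data processing for the binary KL divergence then gives
\[
(1-\epsilon)m\log 3-h(\epsilon)\le D_{\STAB}(\widehat\rho_N^{\otimes M}\Vert\sigma).
\]

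The second step bounds the right-hand side. Stabilizer measurements have nonnegative Wigner representations, so the Wigner--KL bound applies, and the Wigner function is multiplicative under tensor products. Therefore
\[
D_{\STAB}(\widehat\rho_N^{\otimes M}\Vert\sigma)\le M\,D_{\rm KL}(W_{\widehat\rho_N}\Vert W_{\sigma_1})\le M\log 3/N,
\]
where \(\sigma_1\) is the single-copy reference. This uses the block computation \(b\log3/3^b\) from the proof of Theorem~\ref{thm:mixed-visible}; the padding contributes zero. Dividing by \(M\), letting \(M\to\infty\), and then \(\epsilon\to0\) yields \(E^D_{\LSCC}(\widehat\rho_N)\le\log3/N\).

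The main obstacle is the converse step: ensuring that the relevant quantity is monotone under \(\LSCC\) and that the Wigner--KL bound remains valid after the \(\LSCC\) map. We handle this by pulling the whole \(\LSCC\) protocol together with the final Bell test back to a single global stabilizer POVM on the input. This is legitimate because \(\LSCC\) operations composed with Pauli measurements are adaptive stabilizer measurements, so no monotonicity of \(E_{\STAB}\) itself is required, only the definition of \(D_{\STAB}\). Separability of \(\Lambda(\sigma)\) is needed only to bound the test outcome probability by \(3^{-m}\) under the reference, and it holds because \(\LSCC\subset\LOCC\).
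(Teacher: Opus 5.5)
Your proposal is correct and follows essentially the same route as the paper. The \(\LOCC\) side is the identical coherent-information-plus-hashing argument. On the \(\LSCC\) side, pulling the Bell test back through \(\Lambda\) against the fixed PWF separable reference \(\widehat\sigma_N^{\otimes M}\) is just the paper's monotonicity lemma and robust normalization bound (its asymptotic converse) written out in one chain, followed by the same tensorized Wigner--KL estimate \(r_Nb_N\log 3/3^{b_N}\le\log 3/N\). The only differences are cosmetic: you work with \(D_{\STAB}\) at a fixed reference instead of passing through \(E_{\STAB}\), and you use \(h(\epsilon)\) (valid for \(\epsilon\le 1/2\)) where the paper uses \(\log 2\).
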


\emph{Sketch of proof.---}
The \(\LOCC\) lower bound follows from the same blocks used in
Theorem~\ref{thm:mixed-visible}. For each block, the coherent information is
\(I(A\rangle B)_{\rho_b}=1-3^{-b}\); hence the hashing inequality
\cite{devetak2005distillation} gives
\[
E_{\LOCC}^{D}(\rho_b)\ge 1-3^{-b}.
\]
Distilling the \(r_N\) blocks independently
therefore yields
\[
E_{\LOCC}^{D}(\widehat\rho_N)\ge r_N(1-3^{-b_N})=\Omega(N/\log N).
\]
For the \(\LSCC\) converse, we adapt the standard relative-entropy
method for bounding entanglement-distillation rates
\cite{vedral1998entanglement,rains2001semidefinite,lami2024distillable}
to stabilizer-restricted measurements.  Monotonicity under \(\LSCC\),
together with a robust normalization bound for approximate maximally
entangled targets, yields
\begin{equation}
	\label{eq:asymp-converse-main}
	E_{\LSCC}^{D}(\rho)
	\le
	\limsup_{M\to\infty}\frac{1}{M}
	E_{\STAB}(\rho^{\otimes M}).
\end{equation}
For \(\rho=\widehat\rho_N\), the regularized Wigner--KL upper bound proved in
the Supplemental Material~\cite{supplemental} gives
\[
\limsup_{M\to\infty}\frac{1}{M}
E_{\STAB}(\widehat\rho_N^{\otimes M})
\le
\frac{r_Nb_N\log 3}{3^{b_N}}
\le \frac{\log 3}{N},
\]
which proves the stated \(\LSCC\) upper bound.\hfill\(\square\)

This result is the distillation counterpart of the
visible-entanglement separation in
Eq.~\eqref{eq:visible-separation}: stabilizer restrictions can
limit not only how much entanglement is visible but also how much
can be distilled, even when the input states are free in the resource
theory of magic.

\section*{Conclusion}
Our results demonstrate that stabilizer restrictions can severely limit operational access to entanglement, even for magic-free states. For our explicit family of mixed stabilizer states, both the entanglement visible to unrestricted measurements and the entanglement distillable by LOCC grow as $\Omega(N/\log N)$, whereas their stabilizer-restricted counterparts vanish asymptotically. We further establish analogous limitations for Haar-random pure states and Werner states, covering typical high-dimensional pure states and a highly symmetric mixed-state family.

Since many quantum codes support fault-tolerant implementations of stabilizer operations, our results reveal a fundamental information-theoretic limitation on entanglement distillation using these operations alone. More broadly, our results establish a fundamental operational separation between entanglement and magic: magic-free states can carry an unbounded amount of entanglement that becomes asymptotically inaccessible within the stabilizer subtheory. An important direction for future work is to identify the structural origin of this gap and determine the minimum nonstabilizer resource required to close it.

\emph{Acknowledgments.---} This work was partially supported by the Quantum Science and Technology--National Science and Technology Major Project (Grant No.~2021ZD0302901) and the National Natural Science Foundation of China (Grant No.~62102388). We thank OpenAI's ChatGPT (GPT-5) for assistance with language editing
and with strengthening the rigor and checking the details of the proofs
of Lemmas~\ref{lem:normal-form} and~\ref{lem:counting}.

\bibliography{Asymptotic_Entanglement_Hiding_under_Stabilizer_Restrictions}

\clearpage
\onecolumngrid
\setcounter{equation}{0}
\setcounter{figure}{0}
\setcounter{table}{0}
\setcounter{theorem}{0}
\renewcommand{\theequation}{S\arabic{equation}}
\renewcommand{\thefigure}{S\arabic{figure}}
\renewcommand{\thetable}{S\arabic{table}}
\renewcommand{\thetheorem}{S\arabic{theorem}}
\renewcommand{\theHequation}{S.\arabic{equation}}
\renewcommand{\theHfigure}{S.\arabic{figure}}
\renewcommand{\theHtable}{S.\arabic{table}}
\renewcommand{\theHtheorem}{S.\arabic{theorem}}

\begin{center}
{\large\bfseries Supplemental Material for ``Asymptotic Entanglement Hiding under Stabilizer Restrictions''}
\end{center}

\section{Relation to previous work}

\emph{Restricted entanglement theory.—}
Restricted-measurement relative entropies of entanglement based on
\(\LOCC\), separable (\(\SEP\)), and PPT measurements have been studied
extensively
\cite{piani2009relative,li2014relative,berta2024entanglement,
	lami2024distillable}.
Restrictions to Gaussian operations and to computationally
efficient protocols and measurements have also been studied
\cite{eisert2002distilling,leone2025entanglement,yanguez2025efficient}.
In a complementary direction, cone-restricted information theory replaced the
positive-semidefinite cone in one-shot entropy programs by restricted
convex cones \cite{george2024cone}.
Our quantity \(E_{\STAB}\) lies within Piani's framework, with the
measurement class restricted to stabilizer measurements. Our
primary focus, however, is the resulting operational mismatch between
the resource theories of entanglement and magic.

\emph{Stabilizer state discrimination under stabilizer restrictions.—}
Kwon introduced ``nonstabilizerness without magic'' by exhibiting
mutually orthogonal stabilizer states that cannot be perfectly
distinguished by stabilizer operations \cite{kwon2025nonstabilizerness}.
Ao \emph{et al.} traced such free-state discrimination gaps to a general
convex-geometric mechanism and established their persistence in
sequential asymptotic discrimination \cite{ao2026resourcefulness}.
By contrast, our work studies how stabilizer restrictions constrain
operational access to entanglement, revealing a fundamental separation
between entanglement and magic as resources.

\emph{Interplay between entanglement and magic.—}
Gu \emph{et al.} identified a computational transition between
entanglement-dominated and magic-dominated regimes: the entanglement
tasks considered there admit efficient, state-agnostic protocols in the
former but become intractable in the latter \cite{gu2025magic}.
Our setting instead compares the operational power of unrestricted LOCC
and LSCC without imposing efficiency constraints. We demonstrate
magic-free asymptotic entanglement hiding, thereby establishing a
resource-theoretic separation between entanglement and magic, distinct
from their computational separation between state regimes.
More recently, Ganardi \emph{et al.} developed a framework for
heterogeneous local resource theories over quantum networks
\cite{ganardi2026manipulating}.
\(\LSCC\) fits this framework when the local resource is magic, but its
general bounds do not capture our Bell-pair distillation setting:
the local-resource bound is trivial because the target marginals are
free, whereas the entanglement bound ignores the stabilizer
restriction. Our converse captures both constraints jointly.

\section{Definitions and notation}

\subsection{Stabilizer formalism}
\label{sec:stab-formalism}

{ We use the standard stabilizer formalism for qudits of odd-prime
	local dimension \(p\), following Refs.~\cite{gross2006hudson,veitch2014resource}.}
For a single qudit of prime local dimension \(\revp\), let
\(\omega_\revp=e^{2\pi i/\revp}\) and define
\[
X(a)\ket x=\ket{x+a},\qquad
Z(b)\ket x=\omega_\revp^{bx}\ket x,
\qquad a,b,x\in\mathbb F_\revp .
\]
For \(k\) qudits we use the tensor-product notation
\[
X(a)=X(a_1)\otimes\cdots\otimes X(a_k),
\qquad
Z(b)=Z(b_1)\otimes\cdots\otimes Z(b_k),
\qquad a,b\in\mathbb F_\revp^k .
\]
The \(k\)-qudit Pauli group is
\[
\mathcal P_k=\{\omega_\revp^c Z(b)X(a):a,b\in\mathbb F_\revp^k,\ c\in\mathbb F_\revp\}.
\]
A Clifford unitary is a unitary normalizing this Pauli group,
\[
U\mathcal P_kU^\dagger=\mathcal P_k .
\]
A pure stabilizer state is a state of the form
{
\[
\ket\phi=U\ket{0}^{\otimes k},
\]
}
where \(U\) is a Clifford unitary.  A mixed stabilizer state is a convex combination of
pure stabilizer states {\cite{gross2006hudson,veitch2014resource}}.

\begin{definition}[Stabilizer operations]
\label{def:stab-operation}
{
A stabilizer operation is a quantum channel implementable by a finite adaptive
protocol composed of (i) adjoining stabilizer-state ancillas, (ii) Clifford
unitaries, (iii) Pauli measurements, (iv) discarding subsystems, and (v)
classical randomness and feed-forward conditioned on previous measurement
outcomes \cite{veitch2014resource}.
}
\end{definition}

{ In the context of fault-tolerant quantum computation and the resource theory of
	magic, stabilizer operations form a standard free toolbox; \(\STAB\) is the corresponding
	restricted measurement class.}
{ Retaining and classically processing the measurement record
of such a protocol gives the restricted measurement class used below.}

\begin{definition}[Stabilizer measurements]
\label{def:stab-measurement}
{
A finite-outcome POVM is called a stabilizer measurement if its outcome
distribution can be realized by a stabilizer circuit with classical control:
stabilizer-state preparation, Clifford unitaries, adaptive Pauli measurements
whose later choices may depend on earlier outcomes, discarding of systems, and
classical feed-forward and post-processing of measurement outcomes
\cite{veitch2014resource}. The class of all such POVMs is denoted by
\(\STAB\).
}
\end{definition}

\subsection{Wigner functions}
\label{sec:stab-wigner-conventions}

{ The positive-Wigner comparison uses Gross's discrete
phase-space representation \cite{gross2006hudson}.}
For \(k\) qudits of odd-prime local dimension, write the discrete phase space as
\[
V_k=(\mathbb F_\revp^2)^k=\mathbb F_\revp^k\oplus\mathbb F_\revp^k .
\]
For \(u=(a,b)\in V_k\), define the Gross displacement operators
\[
T_u=\omega_\revp^{-2^{-1}a\cdot b}Z(b)X(a),
\]
where \(2^{-1}\) is the inverse of \(2\) in \(\mathbb F_\revp\).  The Gross
phase-point operators are
\begin{equation}
\label{eq:sm-phase-point-operators}
A_0=\revp^{-k}\sum_{v\in V_k}T_v,
\qquad
A_u=T_uA_0T_u^\dagger .
\end{equation}
The Wigner function of a state \(\rho\) and the dual Wigner representation of a
POVM element, or effect, \(E\) are
\begin{equation}
\label{eq:sm-wigner-functions}
W_\rho(u)=\revp^{-k}\Tr(A_u\rho),\qquad
W_E(u)=\Tr(A_uE).
\end{equation}
We use these standard properties:
\begin{enumerate}[label=(\roman*)]
\item The phase-point operators satisfy
\begin{equation}
\label{eq:sm-phase-point-identities}
\Tr A_u=1,\qquad
\Tr(A_uA_v)=\revp^k\delta_{u,v},\qquad
\sum_{u\in V_k}A_u=\revp^k{\Id} .
\end{equation}
\item The probability of the effect \(E\) on the state \(\rho\) is
\begin{equation}
\label{eq:sm-wigner-born-rule}
\Tr(E\rho)=\sum_{u\in V_k}W_E(u)W_\rho(u).
\end{equation}
	\item Clifford unitaries act by affine symplectic permutations of phase space:
	for each Clifford \(U\) there is a permutation \(\pi_U\) such that
	\(U^\dagger A_uU=A_{\pi_U(u)}\).  Consequently, Clifford conjugation only
	permutes Wigner distributions and preserves classical KL divergence.
	\item In odd prime dimension, stabilizer states and stabilizer effects have
	{ nonnegative} Wigner functions
	{\cite{gross2006hudson,veitch2012negative}}.
\end{enumerate}

A quantum state \(\rho\) is positive-Wigner, abbreviated PWF, if and only if
\(W_\rho(u)\ge0\) for every phase-space point \(u\).  A POVM element \(E\) is
PWF if and only if \(W_E(u)\ge0\) for every \(u\).  Let \(\PWF\) denote the
class of finite-outcome POVMs whose effects are all PWF.  In the
odd-prime-power phase-space setting,
\begin{equation}
\label{eq:sm-measurement-class-inclusion}
\STAB\subseteq\PWF\subseteq\ALL .
\end{equation}
{ The first inclusion follows from Wigner positivity of
stabilizer protocols
\cite{veitch2012negative,mari2012positive}.}
Here \(\ALL\) denotes the class of all finite-outcome POVMs.

\subsection{Restricted measured relative entropy}

{ We compare unrestricted measurements with stabilizer measurements through
measured relative entropy.}  A finite-outcome POVM
\(\mathcal M=\{E_y\}_{y\in\mathcal Y}\) maps a state \(\rho\) to the classical
outcome distribution
\[
\mu_{\mathcal M}(\rho)(y)=\Tr(E_y\rho).
\]
The distinguishability of two such outcome distributions is measured by the
classical relative entropy, or KL divergence,
\[
D_{\rm KL}(P\Vert Q)=\sum_y P(y)\log\frac{P(y)}{Q(y)},
\]
with the usual conventions \(0\log(0/q)=0\) and
\(p\log(p/0)=+\infty\) for \(p>0\).
All logarithms are base two.
Optimizing this classical distinguishability over an allowed measurement class
gives the measured relative entropy, originally studied in quantum information by
Donald and by Hiai and Petz \cite{donald1986relative,hiai1991proper}.  For
restricted measurement families, this viewpoint underlies distinguishability
under measurement restrictions \cite{matthews2009distinguishability}.  { For
stabilizer measurements, we define \(D_{\STAB}\) by restricting the same optimization
to \(\STAB\).}
\begin{definition}[Measured relative entropy under restricted measurements]
\label{def:measured-relative-entropy}
	For states \(\rho,\sigma\), define the unrestricted measured relative entropy
	\begin{equation}
	\label{eq:sm-unrestricted-measured-relative-entropy}
	D_{\ALL}(\rho\Vert\sigma)
	:=\sup_{\mathcal M\in\ALL}
	D_{\rm KL}\!\left(\mu_{\mathcal M}(\rho)\Vert
	\mu_{\mathcal M}(\sigma)\right),
	\end{equation}
	and define the stabilizer-measured relative entropy
	\begin{equation}
	\label{eq:sm-stabilizer-measured-relative-entropy}
	D_{\STAB}(\rho\Vert\sigma)
	:=\sup_{\mathcal M\in\STAB}
	D_{\rm KL}\!\left(\mu_{\mathcal M}(\rho)\Vert
	\mu_{\mathcal M}(\sigma)\right).
	\end{equation}
\end{definition}

Applying measured relative entropy to distinguish \(\rho\) from the separable
set gives a measured relative entropy of entanglement, in the
restricted-measurement sense of Piani \cite{piani2009relative}.  At the
single-copy measurement level, the next quantities ask how much entanglement
remains distinguishable from the separable set under the allowed measurement
class.
\begin{definition}[Visible entanglement]
\label{def:visible-entanglement}
	For a bipartite state \(\rho\), the unrestricted and stabilizer-visible
	entanglement quantities are
	\begin{equation}
	\label{eq:sm-visible-entanglement}
	\begin{aligned}
	E_{\ALL}(\rho)
	&:=D_{\ALL}(\rho\Vert\SEP)
	:=\inf_{\sigma\in\SEP}D_{\ALL}(\rho\Vert\sigma),\\
	E_{\STAB}(\rho)
	&:=D_{\STAB}(\rho\Vert\SEP)
	:=\inf_{\sigma\in\SEP}D_{\STAB}(\rho\Vert\sigma).
	\end{aligned}
	\end{equation}
\end{definition}

We also use the usual relative entropy of entanglement,
\[
E_R(\rho):=D(\rho\Vert\SEP)
:=\inf_{\sigma\in\SEP}D(\rho\Vert\sigma),
\]
where \(D\) denotes quantum relative entropy.

{ We record the unrestricted Haar result used
in the main text. It is an immediate consequence of Nechita's largest-eigenvalue asymptotics
for induced random states \cite{nechita2007asymptotics}.}
\begin{proposition}[Unrestricted Haar visible entanglement]
	\label{prop:haar-all}
	{ Let \(\psi:=\ket\psi\!\bra\psi\), where \(\ket\psi\) is Haar-random on \(\mathbb C^d\otimes\mathbb C^d\). Then}
	\begin{equation}
	\label{eq:sm-haar-unrestricted-visibility}
	E_{\ALL}(\psi)=\log d+O(1)
	\end{equation}
	{ with probability \(1-o(1)\) as \(d\to\infty\).}
\end{proposition}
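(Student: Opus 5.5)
We need \(E_{\ALL}(\psi)=\log d+O(1)\) with high probability. The plan is to prove a matching upper and lower bound, both controlled by the largest Schmidt coefficient \(\lambda_{\max}\) of \(\ket\psi\).

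\emph{Upper bound.} Write \(\ket\psi=\sum_i\sqrt{\lambda_i}\ket{ii}\). Any measured relative entropy is bounded by the quantum relative entropy, so
\[
E_{\ALL}(\psi)\le E_R(\psi)=S(\Tr_B\psi)\le\log d
\]
for every pure state. This uses the standard pure-state formula \(E_R=\) entanglement entropy, witnessed by the dephased separable state \(\sum_i\lambda_i\ket{ii}\bra{ii}\). No randomness is needed for this direction.

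\emph{Lower bound via a two-outcome test.} For any separable \(\sigma\) we have \(\langle\psi|\sigma|\psi\rangle\le\lambda_{\max}\), the standard maximal-overlap bound for product states, extended to separable states by convexity. Now apply the two-outcome POVM \(\{\psi,\Id-\psi\}\). The outcome distributions are \((1,0)\) under \(\psi\) and \((q,1-q)\) under \(\sigma\), with \(q\le\lambda_{\max}\). Hence
\[
D_{\ALL}(\psi\Vert\sigma)\ge\log(1/q)\ge-\log\lambda_{\max}
\]
uniformly in \(\sigma\), and therefore \(E_{\ALL}(\psi)\ge-\log\lambda_{\max}\).

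\emph{Random input.} It remains to show \(\lambda_{\max}\le C/d\) with probability \(1-o(1)\). The Schmidt coefficients of a Haar-random state on \(\mathbb C^d\otimes\mathbb C^d\) are the eigenvalues of an induced random state, i.e. a normalized Wishart matrix \(WW^\dagger/\Tr(WW^\dagger)\) with \(W\) a \(d\times d\) Ginibre matrix. Nechita's largest-eigenvalue asymptotics \cite{nechita2007asymptotics} give \(d\,\lambda_{\max}\to4\) almost surely, since the Marchenko--Pastur edge at aspect ratio \(1\) is \(4\). Hence \(\lambda_{\max}\le5/d\) with probability \(1-o(1)\), which yields \(E_{\ALL}(\psi)\ge\log d-\log5\).

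Combining the two bounds gives \(\log d-\log 5\le E_{\ALL}(\psi)\le\log d\) with probability \(1-o(1)\), which is the claim. The only nontrivial input is the random-matrix edge estimate, which the paper takes from \cite{nechita2007asymptotics}. If a quantitative rate were wanted, I would instead use Gaussian concentration of the operator norm of \(W\) together with concentration of \(\Tr(WW^\dagger)\).
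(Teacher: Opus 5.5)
Your proposal is correct and follows essentially the same route as the paper's proof. Both use the upper bound $E_{\ALL}(\psi)\le E_R(\psi)\le\log d$, the lower bound $E_{\ALL}(\psi)\ge-\log\lambda_{\max}(\psi_A)$ from the two-outcome test $\{\psi,\Id-\psi\}$ together with the maximal separable overlap $\lambda_{\max}$, and Nechita's edge asymptotics $d\,\lambda_{\max}\to4$; your explicit constant $\log 5$ and the pure-state identity $E_R=S(\Tr_B\psi)$ just spell out the same steps in slightly more detail.
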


\begin{proof}
The upper bound is immediate:
		\(E_{\ALL}(\psi)\le E_R(\psi)\le\log d.\)
	
	For the lower bound, { let \(\psi_A:=\Tr_B\psi\) and}
	measure \(\{\Pi_\psi,{\Id}-\Pi_\psi\}\), where
	\(\Pi_\psi=|\psi\rangle\langle\psi|\).  For any separable \(\sigma\),
	\[
	D_{\rm KL}\big((1,0)\,\|\,(\Tr\Pi_\psi\sigma,1-\Tr\Pi_\psi\sigma)\big)
	=
	-\log \Tr(\Pi_\psi\sigma).
	\]
	Moreover
	\[
	\sup_{\sigma\in\SEP}\Tr(\Pi_\psi\sigma)
	=
	\lambda_{\max}(\psi_A),
	\]
	where \(\lambda_{\max}(\psi_A)\) is the largest squared Schmidt coefficient of \(\psi\).  Therefore
	\[
	E_{\ALL}(\psi)\ge -\log \lambda_{\max}(\psi_A).
	\]
	{ By Nechita's largest-eigenvalue asymptotics
	{\cite{nechita2007asymptotics}},}
	\(d\,\lambda_{\max}(\psi_A)\to 4\) almost surely.  Thus
	\[
	E_{\ALL}(\psi)\ge \log d-O(1)
	\]
	{ with probability \(1-o(1)\) as \(d\to\infty\).}  Combining the two bounds gives the claim.
\end{proof}

\subsection{Entanglement distillation}

Entanglement distillation asks how many { nearly perfect maximally entangled states} can be
extracted from a bipartite input state using local operations and classical
communication ({ \(\LOCC\)}).  We also use local stabilizer operations and classical
communication (\(\LSCC\)): the subclass of { \(\LOCC\)} protocols in which each party's
local quantum steps belong to the stabilizer toolbox above, with classical
communication and feed-forward.

\begin{definition}[One-shot distillable entanglement]
\label{def:distill-one-shot}
	Fix a prime local dimension \(p\), and let
	\[
	\Phi_p=\ket{\Phi_p}\!\bra{\Phi_p},
	\qquad
\ket{\Phi_p}=p^{-1/2}\sum_{j=0}^{p-1}\ket{jj}.
	\]
	For an operation class \(\mathcal C\in\{\LOCC,\LSCC\}\), the one-shot
	\(\epsilon\)-error \(\mathcal C\)-distillable entanglement of a state \(\rho\),
	measured in bits {\cite{fang2019non,regula2019one}}, is
	\begin{equation}
	\label{eq:sm-one-shot-distillable-entanglement}
	E_{\mathcal C}^{D,(1),\epsilon}(\rho)=
	\sup\left\{m\log p:m\in\mathbb Z_{\ge0},\ \exists\Lambda\in\mathcal C,
	\frac12\|\Lambda(\rho)-{\Phi_p^{\otimes m}}\|_1\le\epsilon\right\}.
	\end{equation}
	Here \(\Lambda\) ranges over channels from the input systems to \(m\) output
	qudits for Alice and \(m\) output qudits for Bob.
\end{definition}

\begin{definition}[Asymptotic distillable entanglement]
\label{def:stab-distill}
	For the same operation class \(\mathcal C\in\{\LOCC,\LSCC\}\), the asymptotic
	\(\mathcal C\)-distillable entanglement is the optimal { i.i.d.}\ rate with
	vanishing error {\cite{devetak2005distillation}}:
	\begin{equation}
	\label{eq:sm-asymptotic-distillable-entanglement}
	E_{\mathcal C}^{D}(\rho)=\lim_{\epsilon\to0}\mathop{{\mathrm{lim}}}\limits_{M\to\infty}
	\frac1M E_{\mathcal C}^{D,(1),\epsilon}(\rho^{\otimes M}).
	\end{equation}
	The cases \(\mathcal C=\LOCC\) and \(\mathcal C=\LSCC\) give the usual { \(\LOCC\)}-distillable entanglement and the stabilizer-distillable entanglement,
	respectively.
\end{definition}
Thus \(E_{\LSCC}^{D}\) is the operational analogue of stabilizer-visible
entanglement: it asks { how much entanglement remains distillable} when
each local quantum operation is restricted to the stabilizer toolbox.

{ We record the unrestricted Haar one-shot distillation
result. It follows directly from the pure-state one-shot distillation bound
of Buscemi and Datta \cite{buscemi2013general}, combined with Nechita's
largest-eigenvalue asymptotics \cite{nechita2007asymptotics}.}
\begin{proposition}[Unrestricted Haar one-shot distillation]
	\label{prop:haar-locc-distill}
	Fix an odd prime \(p\) and \(0<\epsilon<1\). For Haar-random
	\(\ket{\psi_n}\) on
	\(\mathbb C^{p^n}\otimes\mathbb C^{p^n}\),
	\begin{equation}
	\label{eq:sm-haar-unrestricted-distillation}
		E_{\LOCC}^{D,(1),\epsilon}(\psi_n)
		=n\log p+O_{p,\epsilon}(1)
	\end{equation}
	with probability \(1-o(1)\) as \(n\to\infty\).
\end{proposition}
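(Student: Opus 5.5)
We prove the two directions separately, with \(d=p^n\).

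\emph{Upper bound.} The one-shot \(\LOCC\)-distillable entanglement of a pure state is governed by its Schmidt coefficients. Following Buscemi and Datta~\cite{buscemi2013general}, the yield is at most a smoothed max-type entropy of the reduced state \(\psi_{n,A}\), up to an \(\epsilon\)-dependent additive correction. More crudely, we can reach the same conclusion by fidelity counting. Suppose an \(\LOCC\) map \(\Lambda\) outputs a state \(\epsilon\)-close to \(\Phi_p^{\otimes m}\). Since \(\LOCC\) cannot increase Schmidt rank, the output has Schmidt rank at most \(d\). Any state of Schmidt rank at most \(d\) has fidelity at most \(d/p^m\) with \(\Phi_p^{\otimes m}\). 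Closeness then forces \(1-\epsilon\le p^{n-m}\) (using \(F\ge 1-\epsilon\) in the appropriate form), so
\[
m\log p\le n\log p+\log\frac{1}{1-\epsilon}.
\]
This holds deterministically, for every \(\psi_n\), so no probabilistic input is needed on this side.

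\emph{Lower bound.} We use the achievability half of Buscemi--Datta for pure states. Their result states that one-way \(\LOCC\) (in fact one-way protocols via Schmidt-basis hashing or Lo--Popescu conversion) achieves \(m\log p\ge H_{\min}^{\epsilon'}(A)_{\psi}-O(\log(1/\epsilon))\). In particular it achieves at least \(-\log\lambda_{\max}(\psi_{n,A})-\log(1/\epsilon)-\log p\), the last \(\log p\) loss coming from rounding to an integer number of \(p\)-dimensional maximally entangled states.

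A self-contained way to get this: by Nielsen majorization with \(\epsilon=0\), \(\psi\) converts deterministically to \(\Phi_p^{\otimes m}\) exactly when
\[
\lambda^{\downarrow}(\psi_A)\prec(p^{-m},\dots,p^{-m}).
\]
That condition holds whenever \(\lambda_{\max}(\psi_A)\le p^{-m}\), since then every partial sum of the \(k\) largest eigenvalues is at most \(k\,p^{-m}\). So choose
\[
m=\bigl\lfloor -\log_p\lambda_{\max}(\psi_{n,A})\bigr\rfloor.
\]
This achieves the yield exactly, with zero error, and zero error is feasible for every \(\epsilon>0\).

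We then invoke Nechita's asymptotics \(d\,\lambda_{\max}(\psi_{n,A})\to4\) almost surely, as already used in Proposition~\ref{prop:haar-all}. Hence with probability \(1-o(1)\) we have \(\lambda_{\max}\le 5/d\), which gives \(m\log p\ge n\log p-\log 5-\log p\).

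Combining the two bounds gives \(n\log p+O_{p,\epsilon}(1)\). The only delicate point is converting almost-sure convergence along the sequence \(d=p^n\) into a probability \(1-o(1)\) statement. This follows because almost-sure convergence implies convergence in probability. Alternatively, a concentration bound for \(\lambda_{\max}\) from Hayden--Leung--Winter~\cite{hayden2006aspects} gives a quantitative version.
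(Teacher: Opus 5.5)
Your proof is correct and follows the paper's argument essentially verbatim. For the upper bound you use the deterministic Schmidt-number monotonicity together with the overlap bound $\Tr(\Phi_p^{\otimes m}\tau)\le p^{n-m}$, and for the lower bound you use exact Nielsen conversion with $m=\lfloor-\log_p\lambda_{\max}(\psi_{n,A})\rfloor$ combined with Nechita's $d\,\lambda_{\max}\to4$. The Buscemi--Datta remarks are not needed. For the generally mixed LOCC output, the monotone that justifies the $d/p^m$ overlap bound is the Schmidt number rather than the Schmidt rank.
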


\begin{proof}
	{
	Set \(d=p^n\). For any achievable \(m\), write
	\(\tau=\Lambda(\psi_n)\). Since \(\LOCC\) cannot increase Schmidt number,
	\(\tau\) has Schmidt number at most \(p^n\). Since
	\(\Phi_p^{\otimes m}\) has Schmidt rank \(p^m\),
	\[
	\Tr\!\left(\Phi_p^{\otimes m}\tau\right)\le p^{n-m}.
	\]
	The trace-distance condition gives
	\[
	\Tr\!\left(\Phi_p^{\otimes m}\tau\right)\ge1-\epsilon.
	\]
	Therefore
	\[
	m\log p\le n\log p-\log(1-\epsilon),
	\]
	and hence
	\[
	E_{\LOCC}^{D,(1),\epsilon}(\psi_n)
	\le n\log p-\log(1-\epsilon)
	=n\log p+O_\epsilon(1).
	\]

	For the lower bound, set
	\(m_n:=\lfloor-\log_p\lambda_{\max}(\psi_{n,A})\rfloor\).
	Nielsen's pure-state conversion criterion {\cite{nielsen1999conditions}}
	gives an exact \(\LOCC\) conversion to
	\(\Phi_p^{\otimes m_n}\), and hence
	\[
	E_{\LOCC}^{D,(1),\epsilon}(\psi_n)
	\ge m_n\log p
	\ge-\log\lambda_{\max}(\psi_{n,A})-\log p.
	\]
	Nechita's largest-eigenvalue asymptotics
	{\cite{nechita2007asymptotics}} give
	\(d\,\lambda_{\max}(\psi_{n,A})\to4\) almost surely, so the right-hand side
	is \(n\log p-O_p(1)\) with probability \(1-o(1)\). Combining the two bounds
	proves the claim.
	}
\end{proof}

\section{Proof of Theorem 1}

{ In this section, we prove the main-text bound in
Eq.~\eqref{eq:haar-visible-bound}. The unrestricted case was established
in Proposition~\ref{prop:haar-all}; here we prove the complementary
stabilizer-restricted bound.}

Fix { an odd prime} \(\revp\).  Let
\[
\mathcal H=(\mathbb C^\revp)^{\otimes L},
\qquad
\revK=\dim\mathcal H=\revp^L .
\]
For the bipartite application in the main text,
\[
\mathcal H=\mathbb C^{\revp^n}\otimes\mathbb C^{\revp^n},
\qquad
L=2n,
\qquad
\revK=\revp^{2n}.
\]
\begin{theorem}[Theorem~1 of the main text]
	\label{thm:haar-stab-main}
	Fix { an odd prime} \(\revp\).  There are constants \(C_\revp,c_\revp>0\), depending only on
	\(\revp\), such that { for Haar-random \(\ket\psi\) on
	\(\mathbb C^{\revp^n}\otimes\mathbb C^{\revp^n}\), with \(\psi:=\ket\psi\!\bra\psi\)},
	\begin{equation}
	\label{eq:sm-haar-stabilizer-visible}
	\Prb\left[E_{\STAB}(\psi)\le C_\revp\right]
	\ge
	1-\exp[-c_\revp\log^2\revK],
	\qquad
	\revK=\revp^{2n}.
	\end{equation}
\end{theorem}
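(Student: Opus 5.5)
We bound \(E_{\STAB}(\psi)\) above by the stabilizer-measured relative entropy to one fixed separable reference, and then control that quantity uniformly over all stabilizer measurements. The reference is the maximally mixed state \(\sigma_0=\Id/\revK\). It is separable, so \(E_{\STAB}(\psi)\le D_{\STAB}(\psi\Vert\sigma_0)\).

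For any POVM \(\{E_y\}\) the outcome distributions are \(P(y)=\Tr(E_y\psi)\) and \(Q(y)=\Tr E_y/\revK\). By convexity of \(D_{\rm KL}\) we may refine any stabilizer measurement into its extremal branches. After the adaptive protocol is unfolded, an extremal stabilizer effect is proportional to a stabilizer projector \(\Pi_S\): a Clifford-conjugated projector onto a common eigenspace of a commuting Pauli subgroup of size \(\revp^r\). It carries weight at most one, and \(\Tr\Pi_S=\revK\revp^{-r}\). The key quantity is therefore
\[
\lambda(\psi):=\max_{S}\frac{\Tr(\Pi_S\psi)}{\Tr(\Pi_S)/\revK}=\max_S \revp^{r}\,\bra\psi\Pi_S\ket\psi .
\]
Since \(P(y)\le\lambda Q(y)\) for every outcome, we get \(D_{\rm KL}(P\Vert Q)\le\log\lambda\) for every stabilizer measurement. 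So it suffices to show \(\lambda(\psi)\le 2^{C_\revp}\) with the stated probability.

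To bound \(\lambda\), I would use concentration of measure on the sphere \(S^{2\revK-1}\), together with a union bound over the finitely many stabilizer projectors. For a fixed \(\Pi_S\) of rank \(R=\revK\revp^{-r}\), the quantity \(\bra\psi\Pi_S\ket\psi\) is Beta\((R,\revK-R)\)-distributed with mean \(R/\revK\). Its tails are of Levy/chi-square type:
\[
\Prb\bigl[\bra\psi\Pi_S\ket\psi\ge t\,R/\revK\bigr]\le \exp[-c\,t\,R]\quad\text{for } t\ge 2,
\]
and for rank-one projectors (\(r=L\)) this is \(\le\exp[-c\,t]\cdot O(1)\). The number of stabilizer projectors with a given \(r\) is at most \(\revp^{O(L^2)}=\exp[O(\log^2\revK)]\), by the standard count of stabilizer groups. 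With \(t=C\) fixed, the union bound fails only for projectors of small rank \(R\), and for these I will not use a fixed \(t\).

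The main obstacle is exactly the rank-one and low-rank case. There \(\revp^L\,|\langle s|\psi\rangle|^2\) has an \(\mathrm{Exp}(1)\)-like tail, and a union bound over \(\revp^{O(L^2)}\) stabilizer states forces \(\lambda\) up to order \(L^2\log\revp\sim\log^2\revK\), not \(O(1)\). The fix is to use the fact that we need \(D_{\rm KL}\), not the max-ratio. Outcomes with large ratio \(P/Q\) contribute \(P\log(P/Q)\), and their total \(P\)-mass is tiny. Concretely, I would split \(D_{\rm KL}(P\Vert Q)\le \log t+\sum_{y:P(y)>tQ(y)}P(y)\log\frac{P(y)}{Q(y)}\). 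The second term is controlled by \(\Tr(\Pi_{\rm bad}\psi)\cdot O(\log\revK)\), where \(\Pi_{\rm bad}\) is the sum of the effects with large ratio.

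The remaining task is to show that, with probability \(1-\exp[-c\log^2\revK]\), every stabilizer measurement has total bad mass \(O(1/\log\revK)\). A stabilizer measurement has at most \(\revK\) orthogonal extremal branches. I would therefore bound, uniformly over orthogonal families of stabilizer projectors, the mass that \(\psi\) places on those branches where it exceeds \(t\) times its average. To do this I would use the structure of a stabilizer measurement: one Clifford followed by computational-basis measurement of some qudits. This reduces the problem to a fixed Clifford \(U\) with \(\revp^{O(L^2)}\) choices, and to the dephased vector \(\{|\langle x|U\psi\rangle|^2\}\), which is a Dirichlet/exponential vector. For that vector, Lipschitz concentration of \(\sum_x g(\revK|\langle x|U\psi\rangle|^2)\) with truncated \(g(s)=s\log s\) gives deviation probability \(\exp[-c\revK^{\alpha}]\). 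This easily survives the \(\exp[O(\log^2\revK)]\) union bound.

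The conclusion is that \(D_{\STAB}(\psi\Vert\Id/\revK)\) is at most the maximal dephased KL, which is \(\mathbb{E}[s\log s]\) for an \(\mathrm{Exp}(1)\) variable \(s\), plus \(o(1)\). This is a constant \(C_\revp\). Marginalizations and adaptivity only decrease KL by data processing, which completes the bound.
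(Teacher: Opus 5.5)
Your argument breaks at the reduction to ``one Clifford followed by computational-basis measurement of some qudits'' with $p^{O(L^2)}$ choices. That normal form covers only non-adaptive measurements, whereas $\STAB$ explicitly lets later Pauli measurements depend on earlier outcomes.

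After unfolding (the paper's Lemma~\ref{lem:normal-form}, via Heimendahl et al.), a general stabilizer measurement is a mixture of post-processed rank-one bases. These bases are given by branchwise-commuting adaptive Pauli trees, and such a basis is generally not a Clifford image of the computational basis. For example, measure $Z_1$, then measure $X_2$ on outcome $0$ and $Z_2$ otherwise. A rank-one projective measurement cannot be simulated by mixing and post-processing measurements that do not refine it. So your closing claim that adaptivity only decreases the KL by data processing is false.

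The count is also exponentially larger than $p^{O(L^2)}$. Fix the first $L-1$ layers. Each of the $K/p$ depth-$(L-1)$ nodes can then independently choose among the $p+1$ bases of its $p$-dimensional branch, giving $(p+1)^{K/p}=\exp[\Omega(K)]$ distinct bases. An $\exp[-cK^{\alpha}]$ concentration bound with $\alpha<1$ cannot absorb this. Nor can any fixed-basis tail bound for $\{D_{\rm KL}>C\}$ be pushed below $2^{-(K-1)}$, which is the chance that one fixed leaf carries mass at least $1/2$.

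Worse, your predicted value $\mathbb{E}[s\log s]+o(1)$ is actually exceeded. Choose the best of the $p+1$ bases at every last-layer node and apply the law of large numbers over the $K/p$ nearly independent Gaussian blocks. With high probability this gives a value larger than $\mathbb{E}[s\log s]$ by a constant depending on $p$. Adaptivity is therefore the heart of the problem, not a technicality; your plan is sound only for the non-adaptive subclass.

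What is missing is a way to union-bound without enumerating whole measurements. Your bad-mass split points the right way, but it needs uniform control of $\bra\psi P\ket\psi$ over sums $P$ of $k$ leaves of a single adaptive tree. Union-bounding over arbitrary $k$-sets of stabilizer states only gives $\sum_{i\le k}p_{(i)}\lesssim (k/K)\log^2K$. That yields an $O(\log\log K)$ bound on the KL, no better than your $\lambda$ bound.

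The paper's key ingredient is the pruned-tree count of Lemma~\ref{lem:counting}. At depth $j$, at most $\min(p^j,R)$ nodes lead to accepted leaves, and each has fewer than $p^{2(L-j)}$ choices of next Pauli. Hence there are at most $\exp[C_pR\log^2(eK/R)]$ accepting projectors of rank at most $R$. Match this, dyadic scale by scale, against the fixed-projector Haar tail $\exp[-c_0Ka]$ at $a=A_p(R/K)\log^2(eK/R)$. This gives the envelope $p_{(k)}\le(A_p/K)\log^2(eK/k)$ for all $k\le K/2$, simultaneously over all adaptive bases.

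Because the logarithm is $\log(eK/k)$ rather than $\log K$, the head $\sum_{k\le K/2}p_{(k)}\log(Kp_{(k)})$ is bounded by a convergent integral. The tail $k>K/2$ contributes at most $\log 2$. The failure probability $\exp[-c_p\log^2K]$ comes from the smallest scale $R=1$.
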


\emph{Sketch of proof.---}
The intuition is that a Haar-random state cannot concentrate substantial
weight on any low-dimensional subspace accessible to a stabilizer
measurement.  We formalize this intuition in { five steps}.

\emph{(i) Separable reference.---}
The maximally mixed state
\({\Id}/\revK
=({\Id}_{\revp^n}/\revp^n)
\otimes({\Id}_{\revp^n}/\revp^n)\)
is separable.  Hence
\(E_{\STAB}(\psi)\le
D_{\STAB}(\psi\Vert{\Id}/\revK)\), so it suffices to bound the
distinguishability of \(\psi\) from the maximally mixed state under
stabilizer measurements.

\emph{(ii) Normal-form reduction.---}
By Lemma~\ref{lem:normal-form}, based on
Ref.~\cite[Theorem~4]{heimendahl2020axiomatic}, every stabilizer
measurement is a convex mixture of classical post-processings of rank-one
adaptive Pauli measurements.  Since mixing and classical post-processing
cannot increase the relevant relative entropy, it is enough to bound every
such rank-one measurement.

\emph{(iii) { Large-Index Tail}.---}
Fix a rank-one adaptive stabilizer measurement
\(\mathcal B=\{\pi_i\}_{i=1}^{\revK}\) and set
\(p_i:=\Tr(\pi_i\psi)\).  Since the maximally mixed state produces the
uniform distribution,
\begin{equation}
\label{eq:sm-basis-kl-deficit}
D_{\rm KL}\!\left(
\mu_{\mathcal B}(\psi)
\middle\Vert
\mu_{\mathcal B}({\Id}/\revK)
\right)
=
\sum_{i=1}^{\revK}p_i\log(\revK p_i).
\end{equation}
Order the probabilities as
\(p_{(1)}\ge\cdots\ge p_{(\revK)}\).  For \(k>\revK/2\), normalization gives
\(p_{(k)}\le1/k\le2/\revK\), so this large-index tail contributes at most
\(\log2\).  It therefore remains to control the leading probabilities
\(p_{(k)}\) with \(k\le\revK/2\).

\emph{(iv) Low-rank accepting projectors.---}
Let \(\pi_{(i)}\) be the outcome projector corresponding to \(p_{(i)}\).
Coarse-graining the \(k\) most likely outcomes gives the rank-\(k\)
accepting projector
\[
P_k:=\sum_{i=1}^k\pi_{(i)},
\qquad
\bra{\psi}P_k\ket{\psi}
=\sum_{i=1}^k p_{(i)}.
\]

We first count the low-rank accepting projectors that can arise from
rank-one adaptive Pauli-measurement trees.  Lemma~\ref{lem:counting} shows
that the logarithm of the number of such projectors with rank at most \(r\)
is bounded by \(C_\revp r\log^2(e\revK/r)\).
Lemma~\ref{lem:uniform-low-rank} then combines this counting estimate with
Haar concentration to show that, with high probability, no low-rank
accepting subspace captures substantially more weight from \(\psi\) than
its dimension fraction, up to a logarithmic factor.  More precisely, with
probability at least \(1-\exp[-c_\revp\log^2\revK]\),
\begin{equation}
\label{eq:sm-low-rank-envelope-overview}
\bra{\psi}P\ket{\psi}
\le
A_\revp\frac{r}{\revK}\log^2\frac{e\revK}{r}
\end{equation}
holds simultaneously for every admissible rank-\(r\) accepting projector
with \(1\le r\le\revK/2\). Thus we have
\begin{equation}
\label{eq:sm-ordered-probability-envelope}
p_{(k)}
\le
\frac1k\sum_{i=1}^k p_{(i)}
\le
\frac{A_\revp}{\revK}\log^2\frac{e\revK}{k}, \quad 1\le {k}\le\revK/2.
\end{equation}

\emph{(v) Summation.---}
Substituting this probability envelope into the entropy deficit gives
\begin{equation}
\label{eq:sm-entropy-deficit-summation}
\sum_{k\le\revK/2}p_{(k)}\log(\revK p_{(k)})
\le
\frac{A_\revp}{\revK}
\sum_{k\le\revK/2}
L_k^2\log_+(A_\revp L_k^2)
=O_\revp(1),
\qquad
L_k:=\log\frac{e\revK}{k},
\end{equation}
where \(\log_+x:=\max\{0,\log x\}\); the last sum is bounded by the
corresponding finite integral.  Together with the \(\log2\) tail bound,
this gives a uniform \(O_\revp(1)\) bound for every rank-one adaptive
stabilizer measurement.  The normal-form reduction and classical data
processing then imply
\(E_{\STAB}(\psi)\le C_\revp\) on the same high-probability event.
\hfill\(\square\)

A rank-one adaptive stabilizer measurement on \(L\) qudits is a
branchwise-commuting adaptive Pauli-measurement tree: along
each complete branch, the corresponding leaf is specified by a maximal
independent family of pairwise commuting Pauli constraints, and the resulting
leaves form an orthonormal stabilizer basis
\[
\mathcal B=\{\pi_i\}_{i=1}^\revK,
\qquad
\pi_i=\ket{\phi_i}\bra{\phi_i}.
\]
For \(A\subseteq\{1,\ldots,\revK\}\), the associated accepting projector is
\[
P_A=\sum_{i\in A}\pi_i .
\]
Let \(\mathcal P_{\rm ad}\) denote the set of all such accepting projectors,
over all rank-one adaptive stabilizer measurements and all subsets \(A\).

The normal-form reduction states that any stabilizer measurement can be reduced,
up to classical randomness and post-processing, to a rank-one adaptive
Pauli-measurement tree.
\begin{lemma}
	\label{lem:normal-form}
	Let \(\revp\) be an odd prime and let
	\(\mathcal H=(\mathbb C^\revp)^{\otimes L}\), \(\revK=\revp^L\).
	For every finite-outcome stabilizer measurement
	\(\mathcal M=\{E_y\}_{y\in\mathcal Y}\), there exist probabilities
	\(\lambda_\theta\), rank-one adaptive stabilizer measurements
	\[
	\mathcal B_\theta=\{\pi_{\theta,i}\}_{i=1}^\revK,
	\]
	and stochastic maps \(T_\theta(y|i)\) such that, for every state \(\rho\),
	\begin{equation}
	\label{eq:sm-normal-form-outcomes}
	\mu_{\mathcal M}(\rho)(y)
	=
	\sum_\theta \lambda_\theta
	\sum_{i=1}^\revK
	T_\theta(y|i)\Tr(\pi_{\theta,i}\rho).
	\end{equation}
	Consequently, for any states \(\rho,\tau\),
	\begin{equation}
	\label{eq:sm-normal-form-data-processing}
	D_{\rm KL}\!\left(
	\mu_{\mathcal M}(\rho)
	\middle\Vert
	\mu_{\mathcal M}(\tau)
	\right)
	\le
	\sum_\theta \lambda_\theta
	D_{\rm KL}\!\left(
	\mu_{\mathcal B_\theta}(\rho)
	\middle\Vert
	\mu_{\mathcal B_\theta}(\tau)
	\right),
	\end{equation}
	with the usual extended-value convention.
\end{lemma}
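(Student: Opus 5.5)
We prove the normal form by induction on the structure of the stabilizer protocol that implements \(\mathcal M\), and then obtain \eqref{eq:sm-normal-form-data-processing} from data processing and convexity of the classical KL divergence.

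\emph{Step 1: purify the protocol.} First we put an arbitrary stabilizer protocol into a canonical shape. By Definition~\ref{def:stab-measurement}, the protocol is a finite adaptive sequence of steps: stabilizer ancillas, Clifford unitaries, Pauli measurements, discards, classical randomness, and feed-forward. We make three changes.
\begin{itemize}
\item All classical randomness is drawn at the start. This produces the mixture index \(\theta\) with weights \(\lambda_\theta\).
\item Discarding a subsystem is replaced by simply ignoring it. Its later evolution is then irrelevant to the outcome.
\item Every mixed stabilizer ancilla is decomposed into pure ones, with the choice of pure ancilla absorbed into \(\theta\).
\end{itemize}
For fixed \(\theta\), the protocol is then a deterministic adaptive tree. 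It acts on \(\rho\otimes\ket{s}\bra{s}\), where \(\ket s\) is a pure stabilizer ancilla on \(L'\) extra qudits. Each node applies a Clifford and then measures a Pauli observable, and the choices at each node depend on earlier outcomes. The final output \(y\) is a function of the transcript. Following the standard reduction of \cite[Theorem~4]{heimendahl2020axiomatic}, each Clifford can be pushed forward by conjugation. Every node then becomes a measurement of a Pauli operator on \(\rho\otimes\ket s\bra s\) itself.

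\emph{Step 2: reduce to the input register and complete the tree.} Measuring Pauli operators on a product with a pure stabilizer ancilla is equivalent to an adaptive Pauli-measurement tree on the \(L\) input qudits alone. We argue this in two cases. A Pauli that commutes with the stabilizer group of \(\ket s\) modulo that group has a deterministic or uniformly random outcome. The uniformly random outcomes are again absorbed into \(\theta\) and \(T_\theta\). Otherwise, the update can be tracked by the stabilizer tableau, and the effective measured operator on the input is a Pauli (or trivial). This is exactly the Gottesman--Knill bookkeeping, now with \(\rho\) kept symbolic on \(L\) qudits.

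Each branch of the resulting tree imposes a commuting family of independent Pauli constraints. Any branch whose family is not maximal we extend by measuring additional commuting independent Paulis until there are \(L\) of them, and we record the extra outcomes as dummy information. The leaves then form an orthonormal stabilizer basis \(\{\pi_{\theta,i}\}_{i=1}^{K}\). The original output \(y\) is a function of the refined leaf \(i\), possibly together with randomness already in \(\theta\). This defines a stochastic map \(T_\theta(y|i)\) and gives \eqref{eq:sm-normal-form-outcomes}.

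\emph{Step 3: deduce the divergence bound.} We use joint convexity of \(D_{\rm KL}\) over the mixture in \(\theta\), together with the classical data-processing inequality applied to \(T_\theta\). This gives \eqref{eq:sm-normal-form-data-processing}. The extended-value convention is consistent: if the left side is infinite, then some \(\theta\) term on the right is infinite.

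The main obstacle is Step 2. We must check rigorously that outcomes of ancilla-entangled Pauli measurements reduce to adaptive Pauli measurements on the input, and that the randomness introduced is independent of \(\rho\). The key point is that the classically random outcomes must be genuinely input-independent before they can be moved into \(\theta\). Here we would rely on the tableau argument of \cite{heimendahl2020axiomatic}, rather than rederive it from scratch.
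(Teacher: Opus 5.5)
Your proposal is correct and follows essentially the paper's route: invoke the normal-form reduction of Heimendahl et al., move the external randomness and the input-independent uniformly random outcomes of eliminated measurements into \(\theta\), complete each branch to a rank-one stabilizer basis, and conclude by data processing. Your joint convexity plus per-\(\theta\) data processing is equivalent to the paper's data processing on the joint distribution of \((\theta,i)\).

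One correction to Step~2: the deterministic/uniform/genuine case split must be taken relative to the stabilizer group tracked along each branch (ancilla stabilizers plus all previously measured Paulis), not only that of \(\ket s\). As written, it misclassifies Paulis commuting with that group, which act as input Paulis. It is the elimination of measurements failing to commute with earlier ones that makes each branch a commuting family, and the cited reduction supplies exactly this, with the paper additionally tracking transcript labels explicitly.
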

\begin{proof}
	Retain the complete classical transcript \(x\in\mathcal X\) before the
	final classical post-processing, including any external random seed and all
	Pauli-measurement outcomes, and let
	\(\{\mathcal J_x\}_{x\in\mathcal X}\) denote the resulting fine-grained
	stabilizer instrument.  Thus, for some stochastic map \(q(y|x)\),
	\begin{equation}
	\label{eq:sm-normal-form-transcript}
	\mu_{\mathcal M}(\rho)(y)
	=
	\sum_{x\in\mathcal X}q(y|x)\Tr\mathcal J_x(\rho).
	\end{equation}
	When applying the normal-form reduction, we condition on the original
	external random seed; its input-independent probability is absorbed into the
	convex weights below, while its value remains part of the original transcript
	\(x\).

	We apply the constructive normal-form reduction of
	Ref.~\cite[Theorem~4, Lemmas~7 and~8, and Eqs.~(28)--(32)]{heimendahl2020axiomatic}
	while retaining the transcript labels.  Here and below, \(x\) always labels a
	transcript of the original protocol, whereas \(a\) labels a branch of the
	current reduced instrument.  After every reduction step, there exist
	input-independent probabilities \(\lambda_\theta\), with
	\(\sum_\theta\lambda_\theta=1\), reduced fine-grained instruments
	\(\{\mathcal J_{\theta,a}\}_{a\in\mathcal A_\theta}\), and deterministic
	reconstruction maps \(g_\theta:\mathcal A_\theta\to\mathcal X\) such that
	\begin{equation}
	\label{eq:sm-normal-form-label-invariant}
	\mathcal J_x
	=
	\sum_\theta\lambda_\theta
	\sum_{a:g_\theta(a)=x}\mathcal J_{\theta,a}
	\qquad (x\in\mathcal X).
	\end{equation}
	Initially, \(\theta\) records only the original external random seed (or is
	a singleton if none is used), \(a\) records the original Pauli-measurement
	outcomes, and \(g_\theta(a)\) restores the complete original transcript, so
	Eq.~\eqref{eq:sm-normal-form-label-invariant} holds immediately.

	Consider one elimination step within a current component \(\theta\) of
	weight \(\lambda_\theta\).  Conditional on reaching an affected branch, the
	eliminated Pauli outcome \(z\in\mathbb F_\revp\) is uniform and independent of
	the input state.  The reduction replaces \(\theta\) by \((\theta,z)\), with
	\(\lambda_{(\theta,z)}=\lambda_\theta/\revp\), and changes the corresponding
	branch Kraus operator from \(K_z\) to
	\(K'_z=\sqrt{\revp}\,K_z\); for every fixed \((\theta,z)\), the construction
	of Ref.~\cite{heimendahl2020axiomatic} yields a valid stabilizer protocol.
	The new reconstruction map inserts \(z\) at the
	corresponding eliminated node of the original transcript.  Hence the weighted
	contribution of the affected branch is unchanged:
	\[
	\frac{\lambda_\theta}{\revp}
	K'_z\rho K_z^{\prime\dagger}
	=
	\lambda_\theta K_z\rho K_z^\dagger.
	\]
	On an unaffected branch, the Kraus operator is unchanged, the reconstruction
	map ignores \(z\), and all \(\revp\) values of the unused variable reconstruct
	the same original transcript.  Therefore
	\[
	\sum_{z\in\mathbb F_\revp}
	\frac{\lambda_\theta}{\revp}K\rho K^\dagger
	=
	\lambda_\theta K\rho K^\dagger.
	\]
	Applying any branchwise partial trace preserves both identities, so they hold
	at the level of the fine-grained CP maps \(\mathcal J_{\theta,a}\).
	Thus Eq.~\eqref{eq:sm-normal-form-label-invariant} is preserved by every
	branchwise replacement in Eqs.~(28)--(32) of Ref.~\cite{heimendahl2020axiomatic}.
	Iterating the reduction gives the claimed label-preserving convex
	decomposition.

	At the end of the reduction, for each fixed \(\theta\), the remaining branches
	\(a\) are associated with mutually orthogonal stabilizer-code projectors
	\(P_{\theta,a}\) satisfying
	\[
	P_{\theta,a}P_{\theta,b}=0\quad(a\neq b),
	\qquad
	\sum_a P_{\theta,a}={\Id}.
	\]
	The conditional Clifford dilation and discarding following branch \(a\) are
	trace preserving on the corresponding subnormalized post-measurement state,
	so
	\[
	\Tr\mathcal J_{\theta,a}(\rho)
	=
	\Tr(P_{\theta,a}\rho P_{\theta,a})
	=
	\Tr(P_{\theta,a}\rho).
	\]
	Define
	\(
	S_\theta(y|a):=q\bigl(y\,|\,g_\theta(a)\bigr).
	\)
	Taking traces in Eq.~\eqref{eq:sm-normal-form-label-invariant} and substituting
	into Eq.~\eqref{eq:sm-normal-form-transcript} gives
	\begin{equation}
	\label{eq:sm-normal-form-code-projectors}
	\mu_{\mathcal M}(\rho)(y)
	=
	\sum_\theta\lambda_\theta
	\sum_a S_\theta(y|a)\Tr(P_{\theta,a}\rho).
	\end{equation}
	
	For each \((\theta,a)\), complete the commuting Pauli constraints defining \(P_{\theta,a}\) to a maximal commuting Pauli family.  Measuring the added Paulis below the leaf \(a\) gives a rank-one refinement
	\[
	P_{\theta,a}
	=
	\sum_{i\in I_{\theta,a}}\pi_{\theta,i}.
	\]
	Since the completion may depend on \(a\), the collection
	\(
	\mathcal B_\theta
	=
	\{\pi_{\theta,i}\}_{i=1}^{\revK}
	\)
	is a rank-one adaptive stabilizer measurement.  For \(i\in I_{\theta,a}\), define \(T_\theta(y|i):=S_\theta(y|a)\).  Substituting the refinement into the preceding expression gives
	\[
	\mu_{\mathcal M}(\rho)(y)
	=
	\sum_\theta\lambda_\theta
	\sum_{i=1}^{\revK}
	T_\theta(y|i)\Tr(\pi_{\theta,i}\rho).
	\]

	Finally, define
	\[
	r_\rho(\theta,i)
	:=
	\lambda_\theta\Tr(\pi_{\theta,i}\rho),
	\qquad
	r_\tau(\theta,i)
	:=
	\lambda_\theta\Tr(\pi_{\theta,i}\tau).
	\]
	The classical channel \((\theta,i)\mapsto y\), with transition probabilities \(T_\theta(y|i)\), sends \(r_\rho\) and \(r_\tau\) to \(\mu_{\mathcal M}(\rho)\) and \(\mu_{\mathcal M}(\tau)\), respectively.  Hence, by classical data processing and the fact that both joint distributions have the same marginal \(\lambda_\theta\),
	\[
	\begin{aligned}
		D_{\rm KL}\!\left(
		\mu_{\mathcal M}(\rho)
		\middle\Vert
		\mu_{\mathcal M}(\tau)
		\right)
		&\leq
		D_{\rm KL}(r_\rho\Vert r_\tau)\\
		&=
		\sum_\theta\lambda_\theta
		D_{\rm KL}\!\left(
		\mu_{\mathcal B_\theta}(\rho)
		\middle\Vert
		\mu_{\mathcal B_\theta}(\tau)
		\right),
	\end{aligned}
	\]
	which proves the claim.
\end{proof}

\subsection{Counting accepting projectors}

The preceding sketch shows why low-rank accepting projectors are the objects to
control.  The next lemma estimates how many such projectors can arise from
rank-one adaptive Pauli-measurement trees.

\begin{lemma}
	\label{lem:counting}
	There exists \(C_\revp>0\), depending only on \(\revp\), such that for all
	\(1\le R\le \revK/2\),
	\begin{equation}
	\label{eq:sm-counting-projectors}
	\log\left|\{P\in\mathcal P_{\rm ad}:\rank P\le R\}\right|
	\le C_\revp R\log^2\frac{e\revK}{R}.
	\end{equation}
\end{lemma}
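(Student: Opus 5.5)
The plan is to bound the number of accepting projectors by an explicit encoding of each \(P\in\mathcal P_{\rm ad}\) with \(\rank P=r\le R\) as a bit string of length \(O_\revp\bigl(r\log^2(e\revK/r)\bigr)\). Since \(r\log^2(e\revK/r)\) is increasing in \(r\) for \(r\le\revK/2\) (up to constants), summing over \(r\le R\) only adds a \(\log R\) term, which is absorbed into the constant.

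\emph{Normalization of trees.} First I would put every rank-one adaptive Pauli-measurement tree into a canonical form. Pauli measurements whose operator already lies in the current stabilizer group, up to phase, are deterministic. So they can be deleted, with the outcome relabelling absorbed into the leaf labels. After this, every node at depth \(d\) carries a stabilizer group with \(d\) independent generators, every internal node has exactly \(\revp\) children, and all leaves sit at depth exactly \(L=\log_\revp\revK\). The leaf at a given branch is the stabilizer state fixed by the constraints along its root-to-leaf path. Hence \(P_A\) is determined by the \emph{pruned subtree} \(\mathcal T_A\): the union of the root-to-leaf paths ending in \(A\), together with the Pauli measured at each internal node of \(\mathcal T_A\) and the outcomes labelling its edges. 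Measurements outside \(\mathcal T_A\) are irrelevant.

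\emph{Size of the pruned subtree.} \(\mathcal T_A\) has \(r\) leaves and depth \(L\). The number of its nodes at depth \(d\) is at most \(\min(\revp^d,r)\). Encoding which of the \(\revp\) children of each node are retained costs \(O(\revp)\) bits per node. The total number of nodes is
\[
\sum_{d=0}^{L}\min(\revp^d,r)\le C\,r\bigl(L-\log_\revp r+1\bigr)=O_\revp\!\left(r\log\tfrac{e\revK}{r}\right).
\]

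\emph{Cost of specifying the Paulis.} This is the key step and the main obstacle. Naively, a Pauli costs \(O(L)=O(\log\revK)\) bits per node, which would give \(r\log\revK\log(e\revK/r)\) overall. That is too large when \(r\) is close to \(\revK\). The fix is to specify the measured Pauli only relative to the current stabilizer group \(S_d\) at depth \(d\). The new operator must commute with \(S_d\), and only its class in \(S_d^{\perp}/S_d\) matters: two representatives differing by an element of \(S_d\) yield the same refined subgroup, with outcome labels shifted by a known amount that is recorded in the edge labels. Since \(|S_d^\perp/S_d|=\revp^{2(L-d)}\), a node at depth \(d\) costs \(O((L-d)\log\revp)\) bits. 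Summing over the nodes of \(\mathcal T_A\), the levels with \(d\ge\log_\revp r\) contribute
\[
\sum_{d\ge\log_\revp r} r\,(L-d)=O\!\left(r\log^2\tfrac{e\revK}{r}\right),
\]
and the levels with \(d<\log_\revp r\) contribute
\[
\sum_{d<\log_\revp r}\revp^d(L-d)=O\!\left(r\,(L-\log_\revp r+1)\right).
\]
The outcome labels along the edges add \(O(\log\revp)\) bits per node.

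\emph{Conclusion.} The map from encodings to projectors is surjective onto \(\{P\in\mathcal P_{\rm ad}:\rank P=r\}\). Its domain has size \(\exp[O_\revp(r\log^2(e\revK/r))]\), and summing over \(r\le R\) gives Eq.~\eqref{eq:sm-counting-projectors}. The main point I would check carefully is the quotient step: the coset bookkeeping must be made canonical, for instance by fixing a symplectic basis of \(S_d^\perp/S_d\) determined by the path so far. With that choice the decoding is well defined given only the previously decoded data.
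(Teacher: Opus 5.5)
Your proposal is correct and follows essentially the same route as the paper. Both arguments delete dependent measurements, prune to the paths ending at accepting leaves, bound the retained nodes at depth \(j\) by \(\min(p^j,R)\), charge each node \(O_p(L-j+1)\) bits via the quotient \(S^\perp/S\) plus a retained-children subset, and then sum. The only difference is cosmetic: you count rank exactly \(r\) and sum over \(r\le R\), whereas the paper pads each depth to \(\min(p^j,R)\) with dummy symbols and so counts all ranks \(\le R\) at once.
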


\begin{proof}
	Let
	\[
	P=\sum_{i\in A}\pi_i,
	\qquad |A|=\rank P\le R,
	\]
	where \(\{\pi_i\}_{i=1}^\revK\) are the rank-one leaves of a branchwise-commuting adaptive Pauli tree.
	{ We first put the adaptive Pauli tree in reduced form.
	Suppose that, at a given node, the previously measured Pauli labels span
	\(S\).  If the next commuting Pauli has a label in \(S\), then, up to phase,
	it is a product of the previously measured Paulis, and its outcome is
	determined by the preceding outcome record.  We may therefore delete this
	measurement and substitute its unique possible outcome into all subsequent
	branch-dependent choices.  Repeating this procedure removes all dependent
	measurements without changing the rank-one leaves or the accepting projector
	\(P\).  Thus every remaining measurement is independent of the preceding
	Pauli labels.  After \(j\) independent commuting Pauli measurements, every
	outcome branch has dimension \(\revp^{L-j}\).  Since every leaf is rank one,
	every complete branch of the reduced tree contains exactly \(L\)
	measurements.}
	
	We consider only the union of the root-to-leaf paths ending at leaves in \(A\).
	Branches which contain no accepting leaf are removed.
	
	{ Choices below the removed branches do not affect \(P\) and
	are fixed canonically, with all resulting descendants rejected; hence they
	introduce no additional counting factor.}
	
	Let \(V_j\) be the set of retained nodes at depth \(j\).
	Since a depth-\(j\) node has \(\revp^j\) possible outcome prefixes,
	\[
	|V_j|\le \revp^j.
	\]
	On the other hand, every retained depth-\(j\) node contains at least one
	accepting rank-one leaf below it, and distinct retained nodes at the same depth
	have disjoint sets of descendant leaves.  Since there are at most \(R\)
	accepting leaves,
	\[
	|V_j|\le R.
	\]
	Therefore
	\[
	|V_j|\le \min(\revp^j,R).
	\]
	
	{ By the reduced-form convention, at a retained node of
	depth \(j\), the Pauli labels
	previously measured along that branch span a \(j\)-dimensional isotropic
	subspace
	\[
	S\subseteq \mathbb F_\revp^{2L}.
	\]
	The preceding eigenvalue labels are already contained in the outcome string
	specifying the node.  A new independent Pauli measurement must commute with
	the preceding measurements, so its label lies in \(S^\perp\setminus S\).
	Conditional on the preceding outcomes, labels differing by an element of
	\(S\) induce the same refinement, up to relabeling of the outcomes.  Hence the
	possible next Pauli measurements are overcounted by the nonzero elements of
	\(S^\perp/S\).  Since
	\[
	\dim(S^\perp/S)=2(L-j),
	\]
	their number is at most
	\[
	|S^\perp/S|-1=\revp^{2(L-j)}-1<\revp^{2(L-j)}.
	\]
	After choosing the Pauli measurement, we record the subset of its \(\revp\)
	outcomes whose child subtrees contain at least one accepting leaf.  This gives
	at most \(2^\revp\) choices and also determines which children are retained.
	Both choices may depend on the complete preceding outcome string, so
	branch-dependent adaptivity is included.}
	
	{ To count the resulting pruned trees, order the retained
	nodes at each depth by their outcome strings and pad the list at depth \(j\)
	with dummy symbols to length \(\min(\revp^j,R)\).  Since the retained-outcome
	subset at each actual node determines its retained children, no separate factor
	for the tree topology is required.  The number of such descriptions is at most
	\[
	\prod_{j=0}^{L-1}
	\left(1+2^\revp\revp^{2(L-j)}\right)^{\min(\revp^j,R)}.
	\]
	Different descriptions may yield the same projector, which only overcounts
	\(\mathcal P_{\rm ad}\).  Taking logarithms therefore gives
	\[
	\begin{aligned}
	\log\left|\{P\in\mathcal P_{\rm ad}:\rank P\le R\}\right|
	&\le
	\sum_{j=0}^{L-1}\min(\revp^j,R)
	\log\!\left(1+2^\revp\revp^{2(L-j)}\right)\\
	&\le
	C_\revp\sum_{j=0}^{L-1}\min(\revp^j,R)(L-j+1).
	\end{aligned}
	\]}
	
	To estimate the sum, let
	\(
	J=\lfloor \log_\revp R\rfloor ,
	\)
	then \(\revp^J\le R<\revp^{J+1}\).  For \(j\le J\), \(\min(\revp^j,R)=\revp^j\), and writing
	\(m=J-j\) gives
	\[
	\sum_{j=0}^{J}\revp^j(L-j+1)
	=
	\sum_{m=0}^{J}\revp^{J-m}(L-J+m+1)
	\le
	C_\revp R(L-J+1).
	\]
	For \(j>J\), \(\min(\revp^j,R)=R\), so
	\[
	\sum_{j=J+1}^{L-1}R(L-j+1)
	\le
	C_\revp R(L-J+1)^2.
	\]
	Since \(\revK=\revp^L\) and \(\revp^J\le R<\revp^{J+1}\), the quantity \(L-J+1\) is bounded
	above and below by constant multiples, depending only on \(\revp\), of
	\(\log(e\revK/R)\).
	Combining the two estimates yields
	\[
	\sum_{j=0}^{L-1}\min(\revp^j,R)(L-j+1)
	\le
	C_\revp R\log^2\frac{e\revK}{R},
	\]
	after adjusting \(C_\revp\).  This proves the lemma.
\end{proof}

\subsection{Uniform Haar concentration bound for low-rank projectors}
We combine a fixed-projector Haar tail bound with the counting estimate by a
union bound.

\begin{lemma}~\cite[Lemma~1]{sen2005random}
	\label{lem:fixed-tail}
	There is a universal constant \(c_0>0\) with the following property.  Let \(P\)
	be a fixed rank-\(r\) projector on \(\mathbb C^\revK\), and set
	\[
	X_P=\bra\psi P\ket\psi
	\]
	{ for Haar-random \(\ket\psi\)}.  If \(4r/\revK<a\le 1/2\), then
	\begin{equation}
	\label{eq:sm-fixed-projector-tail}
	\Prb[X_P\ge a]\le \exp[-c_0\revK a].
	\end{equation}
\end{lemma}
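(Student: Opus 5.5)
We prove the tail bound in Lemma~\ref{lem:fixed-tail}, \(\Prb[X_P\ge a]\le\exp[-c_0\revK a]\) for \(4r/\revK<a\le1/2\), by representing the Haar vector through complex Gaussians and then applying a Chernoff bound to a ratio of chi-square variables.

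\emph{Gaussian representation.} By unitary invariance we may take \(P\) to be the projector onto the first \(r\) coordinates. Write \(\ket\psi=g/\|g\|\), where \(g\in\mathbb C^\revK\) has i.i.d.\ standard complex Gaussian entries. Set
\[
Y=\sum_{i\le r}|g_i|^2,\qquad Z=\sum_{i>r}|g_i|^2 .
\]
These are independent, with \(Y\sim\mathrm{Gamma}(r,1)\) and \(Z\sim\mathrm{Gamma}(\revK-r,1)\), and \(X_P=Y/(Y+Z)\). Equivalently, \(X_P\sim\mathrm{Beta}(r,\revK-r)\), which one could also handle directly.

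\emph{Reduction to a linear event.} The event \(X_P\ge a\) is the same as \((1-a)Y-aZ\ge0\). For any \(0<\lambda<1/(1-a)\), Markov's inequality applied to \(e^{\lambda[(1-a)Y-aZ]}\), together with the Gamma moment generating functions, gives
\[
\Prb[X_P\ge a]\le\bigl(1-\lambda(1-a)\bigr)^{-r}(1+\lambda a)^{-(\revK-r)} .
\]

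\emph{Choice of \(\lambda\).} Take \(\lambda=1/(2(1-a))\in[1/2,1]\). The first factor is then \(2^r\). Since \(a\le1/2\), we have \(\lambda a\le a\le1/2\), so \(\log(1+\lambda a)\ge c\,\lambda a\ge c\,a/2\). Using \(r<\revK/2\) (which follows from \(4r/\revK<a\le1/2\)), the second factor is at most \(\exp[-c'\revK a]\). Hence
\[
\Prb[X_P\ge a]\le\exp\bigl[r\ln2-c'\revK a\bigr].
\]

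\emph{Absorbing the \(2^r\) factor.} This is the only delicate step. The hypothesis gives \(r<\revK a/4\), so \(r\ln2\le(\ln2/4)\,\revK a\). The crude choice of \(\lambda\) above need not give \(c'>\ln2/4\), so the constants must be tuned. One option is to choose \(\lambda=\beta/(1-a)\) with \(\beta\) close to \(1\). The first factor then becomes \((1-\beta)^{-r}\), while the second gains at least \(\exp[-\revK\cdot c(\beta)\,a]\), where \(c(\beta)\approx\beta/(1+\beta/2)\) up to the \(\log(1+x)\ge x/(1+x)\) loss, and we need
\[
\tfrac14\ln\frac{1}{1-\beta}<c(\beta).
\]

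The cleaner route is to optimize \(\lambda\) exactly. This yields the standard Beta/KL Chernoff bound
\[
\Prb[X_P\ge a]\le\exp\bigl[-\revK\,D_{\mathrm{KL}}(a\Vert r/\revK)\bigr]
\]
(natural log; up to the \(r\) versus \(\revK\) bookkeeping). With \(q=r/\revK<a/4\),
\[
D_{\mathrm{KL}}(a\Vert q)\ge a\ln\frac{a}{q}-a+q\ge a(\ln4-1)>0,
\]
which gives a universal \(c_0=\ln4-1-o(1)\). If the exact optimization is messy, we would simply cite Ref.~\cite{sen2005random}, which states exactly this lemma.
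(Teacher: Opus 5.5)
Your reduction to $X_P\sim\mathrm{Beta}(r,\revK-r)$ and the Markov/MGF bound are correct. This Chernoff route is a genuinely different, self-contained alternative to the paper, which gives no independent argument: it applies Sen's Lemma~1 to $\operatorname{ran}P$ with deviation parameter $t=\revK a/r$, checks $4<t\le\revK/(2r)<\revK/r$, and reads off $\exp[-c_0tr]=\exp[-c_0\revK a]$.

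However, the step you single out as the ``cleaner route'' is wrong. Optimizing your bound gives $\lambda^*=(a-q)/(a(1-a))$ with $q=r/\revK$. Hence $1-\lambda^*(1-a)=q/a$ and $1+\lambda^*a=(1-q)/(1-a)$, so
\[
\Prb[X_P\ge a]\le\Bigl(\frac aq\Bigr)^{r}\Bigl(\frac{1-a}{1-q}\Bigr)^{\revK-r}=\exp\bigl[-\revK\,D_{\rm KL}(q\Vert a)\bigr],
\]
with the arguments in the opposite order from yours. The bound $\exp[-\revK D_{\rm KL}(a\Vert q)]$ that you wrote is false, and this is not a bookkeeping issue. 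Take $r=1$ and $\revK>8$. Then $X_P\sim\mathrm{Beta}(1,\revK-1)$ and $\Prb[X_P\ge 1/2]=2^{-(\revK-1)}$, whereas $\revK D_{\rm KL}(1/2\Vert 1/\revK)=\frac{\revK}{2}\ln\frac{\revK^2}{4(\revK-1)}$ grows like $\frac12\revK\ln\revK$. So your claimed bound falls below the true probability for large $\revK$. Your estimate $D_{\rm KL}(a\Vert q)\ge a(\ln4-1)$ therefore bounds the wrong quantity, and the constant $c_0$ you state is not supported by the argument as written.

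The repair is short. Since $-\ln y\ge 1-y$, you have $(1-q)\ln\frac{1-q}{1-a}\ge a-q$. With $\beta=a/q>4$ this gives
$D_{\rm KL}(q\Vert a)\ge a-q-q\ln(a/q)=a\bigl(1-\frac{1+\ln\beta}{\beta}\bigr)\ge a\bigl(\frac34-\frac{\ln 2}{2}\bigr)$,
because $(1+\ln\beta)/\beta$ is decreasing for $\beta>1$. This yields $c_0\approx 0.40$ exactly, with no $o(1)$.

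Alternatively, your first choice $\lambda=1/(2(1-a))$ already closes, contrary to your worry. Here $\lambda a\le 1/2$ gives $\ln(1+\lambda a)\ge\frac23\lambda a\ge a/3$, and $r<\revK a/4\le\revK/8$ gives $\revK-r>7\revK/8$. The exponent is then at most $\bigl(\frac{\ln2}{4}-\frac{7}{24}\bigr)\revK a<-0.11\,\revK a$, so no tuning of $\lambda$ is needed.

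With either fix, your argument is a complete, elementary proof with an explicit constant. The paper's one-line citation does not provide this, though its reduction to Sen's lemma is shorter.
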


\begin{proof}
	{
		The case \(r=0\) is trivial, so assume \(r\ge1\).
		Apply Lemma~1 of Ref.~\cite{sen2005random} to the
		\(r\)-dimensional subspace \(\operatorname{ran}P\subset\mathbb C^\revK\),
		whose orthogonal projector is \(P\), with the deviation parameter
		\[
		t:=\frac{\revK a}{r}.
		\]
		Since \(4r/\revK<a\le1/2\), we have
		\[
		4<t\le\frac{\revK}{2r}<\frac{\revK}{r},
		\]
		so the assumptions of the cited lemma are satisfied. Therefore,
		\[
		\Prb[X_P\ge a]
		\le \exp[-c_0tr]
		=\exp[-c_0\revK a]
		\]
		for a universal constant \(c_0>0\).
	}
\end{proof}

\begin{lemma}
	\label{lem:uniform-low-rank}
	There exist \(A_\revp,c_\revp>0\), depending only on \(\revp\), such that with probability
	at least
	\[
	1-\exp[-c_\revp\log^2 \revK],
	\]
	the following holds simultaneously for every \(P\in\mathcal P_{\rm ad}\) with
	\(1\le r=\rank P\le \revK/2\):
	\begin{equation}
	\label{eq:sm-uniform-low-rank-bound}
	\bra\psi P\ket\psi
	\le
	A_\revp\frac r\revK\log^2\frac{e\revK}{r}.
	\end{equation}
\end{lemma}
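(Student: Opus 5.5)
We prove Lemma~\ref{lem:uniform-low-rank} by a union bound over the finitely many admissible projectors, organized dyadically by rank, combining the fixed-projector tail of Lemma~\ref{lem:fixed-tail} with the counting estimate of Lemma~\ref{lem:counting}. Set \(L_r:=\log(eK/r)\) and define the threshold \(a_r:=A_p\frac{r}{K}L_r^2\), with \(A_p\) to be chosen large. For a fixed rank-\(r\) projector, Lemma~\ref{lem:fixed-tail} gives \(\Prb[X_P\ge a_r]\le\exp[-c_0 K a_r]=\exp[-c_0A_p rL_r^2]\), provided \(4r/K<a_r\le1/2\). The lower condition holds once \(A_p>4\), since \(L_r\ge1\). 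If instead \(a_r>1/2\), we truncate: we use \(a=1/2\), or simply observe that the event is trivial when \(a_r\ge1\). For the intermediate range \(1/2<a_r<1\) we apply the tail at \(a=1/2\), which gives probability \(\le e^{-c_0K/2}\). This is tiny, but we must make sure the union bound still closes, so this regime needs separate checking.

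Next we group ranks dyadically, \(R\in\{2^s\}\), and cover every \(r\in(R/2,R]\) by the bound of Lemma~\ref{lem:counting} with rank at most \(R\). The number of such projectors is at most \(\exp[C_pR\log^2(eK/R)]\). For \(r>R/2\) the exponent from Lemma~\ref{lem:fixed-tail} is at least \(c_0A_p(R/2)\log^2(eK/R)\), using \(L_r\ge L_R\). Choosing \(A_p\ge 4C_p/c_0\) therefore makes the probability of failure for the class \((R/2,R]\) at most \(\exp[-C_pR\log^2(eK/R)]\), up to the natural-log versus base-2 constant, which is absorbed into \(A_p\).

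Summing over dyadic \(R\) from \(1\) to \(K/2\), the worst term is \(R=1\), giving \(\exp[-C_p\log^2(eK)]\). Larger \(R\) give smaller terms, because \(R\log^2(eK/R)\ge\log^2(eK)\cdot\Omega(1)\) for \(R\le K/2\); this monotonicity check is the routine computation to verify. There are at most \(\log K\) dyadic classes, so the total is \(\le\log K\cdot e^{-C_p'\log^2K}\le e^{-c_p\log^2K}\) for large \(K\), and small \(K\) is handled by adjusting constants.

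The main obstacle is the regime where \(a_r\) exceeds \(1/2\), i.e.\ \(r\) comparable to \(K/\log^2K\) up to \(K/2\). There the claimed bound is at least \(1/2\), and we can only use the tail at \(a=1/2\), which gives \(e^{-c_0K/2}\). We need this to beat the count \(\exp[C_pR\log^2(eK/R)]\). When \(a_r\ge 1/2\) we have \(R\log^2(eK/R)\lesssim K/A_p\), so choosing \(A_p\) large relative to \(C_p/c_0\) again makes \(c_0K/2\) dominate. Hence this case closes with the same choice of constants, and the bound holds simultaneously for all admissible \(P\) on the complementary event.
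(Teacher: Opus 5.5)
Your proposal is correct and is essentially the paper's argument: a union bound over dyadic rank scales that combines Lemma~\ref{lem:counting} with Lemma~\ref{lem:fixed-tail}, where the regime with threshold in \((1/2,1)\) is handled by applying the tail at \(a=1/2\) and using \(K>\tfrac12 A_pR\log^2(eK/R)\). The paper instead uses per-scale thresholds \(A_pG_R/K\) and rounds \(r\) up to some \(R\in[r,2r)\), which is equivalent to your per-rank thresholds.

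A few bookkeeping points need tidying. The inequality \(K>\tfrac12 A_pR\log^2(eK/R)\) follows from \(a_r<1\), not from \(a_r\ge1/2\) as you wrote. At \(a=1/2\) you must also check \(4r/K<1/2\), which holds because \(r/K<1/A_p\). Finally, ranks between the largest power of two below \(K/2\) and \(\lfloor K/2\rfloor\) need an extra top scale \(\lfloor K/2\rfloor\), as in the paper, since Lemma~\ref{lem:counting} is stated only for \(R\le K/2\).
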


\begin{proof}
	Write
		\[
		G_R:=R\log^2\frac{e\revK}{R}.
		\]
		Let \(\mathcal R\) be the powers of two not exceeding \(\revK/2\), together
		with the final scale \(\lfloor \revK/2\rfloor\).  For each
		\(R\in\mathcal R\), let
		\(\mathcal E_R\) be the event that some
		\(P\in\mathcal P_{\rm ad}\) with \(\rank P\le R\) satisfies
		\[
		\bra\psi P\ket\psi>A_\revp\frac{G_R}{\revK}.
		\]
		Choose \(A_\revp\) large enough so that
		\[
		\Prb[\mathcal E_R]\le \exp[-c_\revp G_R]
		\]
		for every \(R\in\mathcal R\).
		
		If \(A_\revp G_R/\revK\ge 1\), then \(\mathcal E_R\) is empty.  Suppose
		\(A_\revp G_R/\revK\le 1/2\).  Since \(R\le \revK/2\), we have
		\(\log(e\revK/R)\ge \log(2e)\).  Thus, for \(A_\revp\) large enough,
		\[
		A_\revp\frac{G_R}{\revK}
		=
		A_\revp\frac R\revK\log^2\frac{e\revK}{R}
		\ge \frac{4R}{\revK}
		\ge \frac{4r}{\revK}
		\]
		for every \(r=\rank P\le R\).  Lemmas~\ref{lem:fixed-tail} and
		\ref{lem:counting} then give
		\[
		\Prb[\mathcal E_R]
		\le
		\exp[(C_\revp-c_0A_\revp)G_R]
		\le
		\exp[-c_\revp G_R],
		\]
		for sufficiently large \(A_\revp\).
		
		It remains only to handle
		\[
		\frac12<A_\revp\frac{G_R}{\revK}<1.
		\]
		Then \(\mathcal E_R\) is contained in the event that some \(P\in\mathcal P_{\rm ad}\)
		with \(\rank P\le R\) has \(\bra\psi P\ket\psi>1/2\).  Moreover,
		\[
		\frac R\revK
		<
		\frac{1}{A_\revp\log^2(e\revK/R)}
		\le
		\frac{1}{A_\revp\log^2(2e)}.
		\]
		For sufficiently large \(A_\revp\), \(R/\revK\le 1/8\), and hence
		\(1/2\ge 4r/\revK\) for all \(r\le R\).  Applying Lemma~\ref{lem:fixed-tail} with
		threshold \(1/2\), and then Lemma~\ref{lem:counting}, yields
		\[
		\Prb[\mathcal E_R]\le \exp[C_\revp G_R-c_1\revK]
		\]
		for a universal constant \(c_1>0\).  Since \(A_\revp G_R/\revK<1\), we have
		\(\revK>A_\revp G_R\), so
		\[
		C_\revp G_R-c_1\revK
		\le
		(C_\revp-c_1A_\revp)G_R
		\le
		-c_\revp G_R
		\]
		for sufficiently large \(A_\revp\).  Hence \(\Prb[\mathcal E_R]\le \exp[-c_\revp G_R]\) in all
		cases.
		
		Now take the union bound over \(R\in\mathcal R\).  Since
		\[
		G_R
		=
		R\log^2\frac{e\revK}{R}
		\ge c\log^2 \revK
		\qquad (1\le R\le \revK/2),
		\]
		after adjusting constants for bounded \(\revK\), and since there are only
		\(O(\log \revK)\) scales,
		\[
		\Prb\!\left[\bigcup_{R\in\mathcal R} \mathcal E_R\right]
		\le
		\exp[-c_\revp\log^2 \revK],
		\]
		after decreasing \(c_\revp\).
		
		On the complement of \(\bigcup_{R\in\mathcal R}\mathcal E_R\), take any
		\(P\in\mathcal P_{\rm ad}\) with \(1\le r=\rank P\le \revK/2\).  Choose
		\(R\in\mathcal R\) such that \(r\le R<2r\), which is possible by the
		powers-of-two scales together with the final scale \(\lfloor \revK/2\rfloor\).
		Then
		\[
		\bra\psi P\ket\psi
		\le
		A_\revp\frac R\revK\log^2\frac{e\revK}{R}
		\le
		2A_\revp\frac r\revK\log^2\frac{e\revK}{r}.
		\]
		Absorbing the final factor \(2\) into \(A_\revp\) proves the claim.
	
\end{proof}

\subsection{{ Bound on Stabilizer-Visible Entanglement for Haar-Random States}}

\begin{lemma}
	\label{lem:basis-entropy}
	On the good event of Lemma~\ref{lem:uniform-low-rank}, every rank-one adaptive
	stabilizer measurement \(\mathcal B=\{\pi_i\}_{i=1}^\revK\) satisfies
	\begin{equation}
	\label{eq:sm-basis-entropy-bound}
D_{\rm KL}\!\left(
	\mu_{\mathcal B}(\psi)
	\middle\Vert
	\mu_{\mathcal B}({\Id}/\revK)
	\right)
	\le C_\revp .
	\end{equation}
\end{lemma}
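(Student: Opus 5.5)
Fix a rank-one adaptive stabilizer measurement \(\mathcal B=\{\pi_i\}_{i=1}^\revK\) and write \(p_i=\Tr(\pi_i\psi)\). Since \(\mu_{\mathcal B}({\Id}/\revK)\) is uniform, the divergence in Eq.~\eqref{eq:sm-basis-entropy-bound} equals \(\sum_i p_i\log(\revK p_i)\), as in Eq.~\eqref{eq:sm-basis-kl-deficit}. The plan is to sort the probabilities as \(p_{(1)}\ge\cdots\ge p_{(\revK)}\) and split the sum at \(k=\revK/2\). Terms with \(\revK p_{(k)}\le1\) are nonpositive and can be dropped, so only positive contributions need bounding.

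\emph{Tail.} For \(k>\revK/2\), normalization and ordering give \(kp_{(k)}\le\sum_{i\le k}p_{(i)}\le1\). Hence \(\revK p_{(k)}\le \revK/k<2\), so each term is at most \(p_{(k)}\log2\). Summing over these \(k\) contributes at most \(\log 2\).

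\emph{Head.} For \(1\le k\le\revK/2\), let \(P_k=\sum_{i\le k}\pi_{(i)}\). This is a rank-\(k\) accepting projector of \(\mathcal B\), hence lies in \(\mathcal P_{\rm ad}\). On the good event of Lemma~\ref{lem:uniform-low-rank}, \(\bra\psi P_k\ket\psi\le A_\revp\frac{k}{\revK}L_k^2\) with \(L_k=\log(e\revK/k)\). Averaging then gives the envelope \(p_{(k)}\le \frac1k\sum_{i\le k}p_{(i)}\le A_\revp L_k^2/\revK\), as in Eq.~\eqref{eq:sm-ordered-probability-envelope}. Since \(x\mapsto x\log_+(\revK x)\) is nondecreasing, we can replace each \(p_{(k)}\) by its envelope:
\[
\sum_{k\le\revK/2}p_{(k)}\log_+(\revK p_{(k)})\le \frac{A_\revp}{\revK}\sum_{k\le \revK/2}L_k^2\log_+(A_\revp L_k^2).
\]

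\emph{The sum.} This is the only genuinely computational step. The summand \(f(k)=L_k^2\log_+(A_\revp L_k^2)\) is decreasing in \(k\), so the sum is at most \(f(1)+\int_1^{\revK/2}f(x)\,\dd x\). For the integral, substitute \(x=\revK t\) and bound by
\[
\revK\int_0^{1/2}\log^2(e/t)\,\log_+\!\bigl(A_\revp\log^2(e/t)\bigr)\,\dd t.
\]
This integral is finite, since near \(t=0\) the integrand grows only polylogarithmically in \(1/t\). It is therefore a constant depending only on \(A_\revp\), hence on \(\revp\). The boundary term satisfies \(f(1)/\revK=O(\log^3\revK/\revK)=O_\revp(1)\).

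Adding head and tail gives a bound \(C_\revp\) that is independent of \(\mathcal B\), since \(\mathcal B\) entered only through membership of \(P_k\) in \(\mathcal P_{\rm ad}\). The main point requiring care is this uniformity: the good event of Lemma~\ref{lem:uniform-low-rank} is simultaneous over all of \(\mathcal P_{\rm ad}\), so the same event works for every \(\mathcal B\). Once that is noted, the rest is the integral estimate.
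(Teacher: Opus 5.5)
Your proposal is correct and follows the paper's proof step for step. It uses the same uniform-reference identity, the same cumulative-projector envelope \(p_{(k)}\le A_\revp L_k^2/\revK\) for \(k\le\revK/2\) obtained from \(P_k\in\mathcal P_{\rm ad}\), the same monotonicity replacement by \(\log_+\), and the same \(\log 2\) bound on the tail \(k>\revK/2\). The only cosmetic difference is that the paper bounds the whole sum directly by \(\int_0^{1/2}\log^2(e/x)\log_+\!\bigl(A_\revp\log^2(e/x)\bigr)\,\mathrm{d}x\), which is valid because the summand is nonincreasing in \(k\); your separate boundary term \(f(1)/\revK\) is therefore unnecessary but harmless.
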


\begin{proof}
	Let
	\(
	p_i=\Tr(\pi_i\psi),
	\)
	and order these probabilities so that
	\(
	p_{(1)}\ge p_{(2)}\ge\cdots\ge p_{(\revK)}.
	\)
	Since \(\mu_{\mathcal B}({\Id}/\revK)\) is uniform,
	\begin{equation}
	\label{eq:sm-basis-entropy-identity}
D_{\rm KL}\!\left(
	\mu_{\mathcal B}(\psi)
	\middle\Vert
	\mu_{\mathcal B}({\Id}/\revK)
	\right)
	=
	\sum_{k=1}^\revK p_{(k)}\log(\revK p_{(k)}).
	\end{equation}
	
	For \(1\le k\le \revK/2\), the projector
	\[
	P_k=\sum_{i=1}^k\pi_{(i)}
	\]
	belongs to \(\mathcal P_{\rm ad}\) and has rank \(k\).  Lemma~
	\ref{lem:uniform-low-rank} gives
	\begin{equation}
	\label{eq:sm-cumulative-probability-bound}
	\sum_{i=1}^k p_{(i)}
	=
	\bra\psi P_k\ket\psi
	\le
	A_\revp\frac k\revK\log^2\frac{e\revK}{k}.
	\end{equation}
	Therefore
	\begin{equation}
	\label{eq:sm-pointwise-probability-envelope}
	p_{(k)}
	\le
	\frac{A_\revp}{\revK}\log^2\frac{e\revK}{k}
	=:b_k.
	\end{equation}
	If \(p_{(k)}<1/\revK\), the term \(p_{(k)}\log(\revK p_{(k)})\) is nonpositive.  If
	\(p_{(k)}\ge1/\revK\), then \(x\mapsto x\log(\revK x)\) is increasing on
	\([1/\revK,\infty)\), so
	\[
	p_{(k)}\log(\revK p_{(k)})
	\le
	b_k\log_+(\revK b_k).
	\]
	Consequently, with \(L_k=\log(e\revK/k)\),
	\begin{equation}
	\label{eq:sm-leading-entropy-sum}
	\sum_{k\le \revK/2}p_{(k)}\log(\revK p_{(k)})
	\le
	\frac{A_\revp}{\revK}
	\sum_{k\le \revK/2}
	L_k^2\log_+(A_\revp L_k^2).
	\end{equation}
	The right side is bounded by the
	finite integral
	\[
	\int_0^{1/2}
	\log^2\frac ex\,
	\log_+\!\left(A_\revp\log^2\frac ex\right)\,{\dd}x,
	\]
	whose finiteness follows from the change of variables \(u=\log(e/x)\), since
	\({\dd}x=e^{1-u}{\dd}u\) and the tail is exponentially decaying.
	
	For \(k>\revK/2\), monotonicity gives
	\[
	kp_{(k)}\le \sum_{i=1}^k p_{(i)}\le1,
	\]
	so \(p_{(k)}\le1/k\le2/\revK\).  Hence 
	\(\log(\revK p_{(k)})\leq \log 2\), and
	\[
	\sum_{k>\revK/2}p_{(k)}\log(\revK p_{(k)})
	\le
	\log 2\sum_{k>\revK/2}p_{(k)}
	\le
	\log 2.
	\]
	Combining the two estimates proves the lemma.
\end{proof}

\begin{proof}[Proof of Theorem~\ref{thm:haar-stab-main}]
	The maximally mixed state
	\[
	{\Id}/\revK=({\Id}_{\revp^n}/\revp^n)\otimes({\Id}_{\revp^n}/\revp^n)
	\]
	is separable.  Hence
	\[
	E_{\STAB}(\psi)
	\le
	\sup_{\mathcal M\in\STAB}
D_{\rm KL}\!\left(
	\mu_{\mathcal M}(\psi)
	\middle\Vert
	\mu_{\mathcal M}({\Id}/\revK)
	\right).
	\]
	Work on the good event of Lemma~\ref{lem:uniform-low-rank}.  For any
	\(\mathcal M\in\STAB\), Lemma~\ref{lem:normal-form} gives a convex combination
	of post-processed rank-one adaptive stabilizer measurements.  By the
	data-processing inequality in Lemma~\ref{lem:normal-form} and by
	Lemma~\ref{lem:basis-entropy}, each such measurement contributes at most
	\(C_\revp\).  Therefore
	\[
D_{\rm KL}\!\left(
	\mu_{\mathcal M}(\psi)
	\middle\Vert
	\mu_{\mathcal M}({\Id}/\revK)
	\right)
	\le C_\revp .
	\]
	Combining with the probability estimate in Lemma~\ref{lem:uniform-low-rank} proves the
	theorem.
\end{proof}

\section{Proof of Proposition 2: Werner states}
\label{sec:werner}

{
We recall the Werner parametrization in Eq.~\eqref{eq:werner-state} of the main
text and prove the unrestricted and stabilizer-visible formulas in
Eqs.~\eqref{eq:werner-all} and \eqref{eq:werner-stab}.

Let \(p\) be an odd prime, let \(F_p\) be the swap operator on
\(\mathbb C^p\otimes\mathbb C^p\), and set
\(\Pi_{\pm,p}=({\Id}\pm F_p)/2\).  Define
\begin{equation}
\label{eq:sm-werner-parametrization}
\tau_{\pm,p}=\frac{\Pi_{\pm,p}}{\Tr\Pi_{\pm,p}}
=\frac{{\Id}\pm F_p}{p(p\pm1)},\qquad
\rho_{\mathrm W,p}(t)=t\,\tau_{+,p}+(1-t)\tau_{-,p}.
\end{equation}
Werner states are separable for \(t\ge1/2\), and entangled for
\(0\le t<1/2\) \cite{werner1989quantum,vollbrecht2001entanglement}.

The unrestricted value follows from the known relative entropy of entanglement
of Werner states \cite{vollbrecht2001entanglement}.  In the present
parametrization,
\[
E_R\!\left(\rho_{\mathrm W,p}(t)\right)
=
t\log(2t)+(1-t)\log(2(1-t)),
\qquad 0\le t<1/2,
\]
with the convention \(0\log0=0\).
The same value is attained by unrestricted measurements.  Indeed, for any
separable \(\sigma\), \(\Tr(\Pi_{+,p}\sigma)\ge1/2\), so the two-outcome measurement
\(\{\Pi_{+,p},\Pi_{-,p}\}\) minimizes the separable-side binary distribution at the
boundary parameter \(1/2\).  Hence
	\begin{equation}
	\label{eq:sm-werner-unrestricted-value}
E_{\ALL}\!\left(\rho_{\mathrm W,p}(t)\right)
=E_R\!\left(\rho_{\mathrm W,p}(t)\right)
=
t\log(2t)+(1-t)\log(2(1-t)),
\qquad 0\le t<1/2.
	\end{equation}

\begin{theorem}
	\label{thm:werner}
	For every odd prime \(p\) and \(0\le t<1/2\),
	\begin{equation}
	\label{eq:sm-werner-stabilizer-value}
	E_{\STAB}\!\left(\rho_{\mathrm W,p}(t)\right)
	=
	\frac{2t}{p+1}\log(2t)
	+
	\frac{p+1-2t}{p+1}
	\log\left(\frac{p+1-2t}{p}\right).
	\end{equation}
	Consequently,
	\begin{equation}
	\label{eq:sm-werner-uniform-bound}
	\sup_{0\le t\le1/2}
	E_{\STAB}\!\left(\rho_{\mathrm W,p}(t)\right)
	\le
	\log\left(1+\frac1p\right)
	\end{equation}
	for every odd prime \(p\).
\end{theorem}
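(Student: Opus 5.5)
The value on the right of \eqref{eq:sm-werner-stabilizer-value} is the binary divergence $D_{\rm KL}\bigl((a,1-a)\Vert(b,1-b)\bigr)$ with $a=2t/(p+1)$ and $b=1/(p+1)$. So the plan is to prove matching lower and upper bounds, both built around one local stabilizer test. In that test Alice and Bob each measure in the computational basis and accept if the outcomes agree. Its accepting projector is $Q=\sum_j\ket{jj}\bra{jj}$. Since $Q\le\Pi_{+,p}$ and $\Tr Q=p$, we get $\Tr(Q\tau_{+,p})=2/(p+1)$ and $\Tr(Q\tau_{-,p})=0$. Hence $\Tr(Q\rho_{\mathrm W,p}(t))=2t/(p+1)$, and the separable boundary state $\rho_{\mathrm W,p}(1/2)$ gives $1/(p+1)$. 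The uniform bound \eqref{eq:sm-werner-uniform-bound} then follows by maximizing the explicit formula over $t\in[0,1/2]$. It is decreasing in $t$, so the maximum sits at $t=0$, where the formula equals $\log((p+1)/p)$.

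\emph{Lower bound.} Fix any separable $\sigma$. The two-qudit Clifford group, acting as $U\otimes U$, is a unitary $2$-design in prime dimension. Twirling, $\bar\sigma=\mathbb E_U(U\otimes U)\sigma(U\otimes U)^\dagger$, therefore produces a Werner state $\rho_{\mathrm W,p}(t')$. It is separable, so $t'\ge1/2$. For each Clifford $U$, the measurement $\{(U\otimes U)^\dagger Q(U\otimes U),\,\Id-(\cdot)\}$ is a stabilizer measurement, and $\rho_{\mathrm W,p}(t)$ gives it the same statistics $(a,1-a)$ for every $U$. Joint convexity of KL in its second argument then yields
\[
D_{\STAB}(\rho_{\mathrm W,p}(t)\Vert\sigma)\ge D_{\rm KL}\bigl((a,1-a)\Vert(b',1-b')\bigr),\qquad b'=2t'/(p+1).
\]
Because $a<1/(p+1)\le b'$, this binary divergence is increasing in $b'$ on that range. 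It is therefore minimized at $t'=1/2$, which gives the claimed lower bound for $E_{\STAB}$.

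\emph{Upper bound.} Take $\sigma=\rho_{\mathrm W,p}(1/2)$. By Lemma~\ref{lem:normal-form} it suffices to bound $D_{\rm KL}$ for rank-one stabilizer bases $\{\ket{\phi_i}\}$ on two qudits. For such a basis, $\Tr(\phi_i\rho_{\mathrm W,p}(t))$ is an affine function of $f_i:=\bra{\phi_i}F_p\ket{\phi_i}$, with $t$-dependent coefficients. So the likelihood ratio of outcome $i$ against $\sigma$ depends only on $f_i$.

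The key step is to classify the values $f_i$ that occur for two-qudit stabilizer states. These should be controlled by the overlaps $|\langle a|b\rangle|^2\in\{0,1/p,1\}$ for product states, and by $U^TU$-type traces for maximally entangled stabilizer states. From this classification I expect two facts. First, $f=1$ gives ratio $2t$, while every other attainable value gives ratio at least $1$ or exactly $1$. In particular, at $f=1/p$ the $\rho$-probability equals $1/p^2$, independent of $t$. Second, the total $\sigma$-mass of outcomes with ratio $2t$ in one basis is at most $1/(p+1)$; intuitively, at most $p$ such symmetric-product outcomes fit in one basis.

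Given these facts, I would argue that the KL divergence of the basis is at most that of the binary coarse-graining separating the low-ratio outcomes from the rest. A grouping or log-sum argument then shows this binary value is maximized when the low-ratio mass is as large as allowed, which is exactly the $(a,b)$ pair above.

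I expect this classification, and the constraint on how much mass with each value of $f$ a single stabilizer basis can carry, to be the main obstacle. The delicate point is excluding stabilizer states with $f<1/p$, especially near-antisymmetric ones, that could add divergence on the $\tau_-$ side. If a direct classification becomes messy, I would fall back to the discrete Wigner representation: show that both Werner states are PWF, and bound the divergence through a Clifford-twirl-invariant phase-space reduction.
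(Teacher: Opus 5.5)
Your identification of the target as the binary divergence $D_{\rm KL}((a,1-a)\Vert(b,1-b))$ is correct, and so are the uniform bound and the lower bound. The twirled agree/disagree test is a valid variant of the paper's argument. The paper reduces the reference to $\rho_{\mathrm W,p}(u)$ with $u\ge1/2$ and uses the full computational-basis product measurement; your test coarse-grains that measurement without loss, because all diagonal outcomes share one likelihood ratio and all off-diagonal outcomes share another.

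\textbf{The gap is in the upper bound.} Your step ``the KL divergence of the basis is at most that of the binary coarse-graining separating the low-ratio outcomes from the rest'' goes the wrong way. By data processing, coarse-graining can only decrease $D_{\rm KL}$, and it does so strictly when outcomes with different likelihood ratios are merged. With $f_i=\bra{\phi_i}F_p\ket{\phi_i}$, outcomes with $f_i=0$ have ratio $(p+1-2t)/p>1$, outcomes with $f_i=1/p$ have ratio $1$, and both can occur in one stabilizer basis. For example, let Alice measure in the computational basis, and let Bob measure in the computational basis if Alice got $0$ and in the Fourier basis otherwise. This basis has one outcome with $f=1$, $p-1$ outcomes with $f=0$, and $p(p-1)$ outcomes with $f=1/p$. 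Its divergence strictly exceeds that of your binary coarse-graining, so maximizing over binary coarse-grainings does not bound $D_{\STAB}$.

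Two further problems sit in the same part of the argument. The classification of $f$-values is only conjectured. The ``delicate point'' is also misidentified: $f=0$ is attained (e.g.\ by $\ket{01}$) and is exactly what the optimal measurement uses. What must be excluded is $f<0$.

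The missing ingredients are the two the paper uses.
\begin{itemize}
\item \emph{Positivity.} $\Tr(EF_p)\ge0$ for every stabilizer effect $E$. This follows from $F_p=\frac1p\sum_z A_z\otimes A_z$ and Wigner positivity, and it gives $f\in[0,1]$ with no classification needed.
\item \emph{Convexity.} The per-outcome term $h(f)=P(f)\log[P(f)/Q(f)]$ is convex in $f$, since $P,Q$ are affine in $f$ and $(P,Q)\mapsto P\log(P/Q)$ is jointly convex. Here $P,Q$ are the outcome probabilities under $\rho_{\mathrm W,p}(t)$ and $\rho_{\mathrm W,p}(1/2)$. Hence $h(f)\le f\,h(1)+(1-f)\,h(0)$ on $[0,1]$.
\end{itemize}
Summing with $\sum_i f_i=\Tr F_p=p$ and $\sum_i 1=p^2$ gives $\sum_i h(f_i)\le p\,h(1)+(p^2-p)\,h(0)=D_{\rm KL}((a,1-a)\Vert(b,1-b))$.

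The paper runs this argument directly for arbitrary PWF POVMs, using $r_k=\Tr E_k$ and $c_k$, so your normal-form reduction is not needed. Even in your three-value picture $f\in\{0,1/p,1\}$, the bound comes from $h(1)+(p-1)h(0)\ge p\,h(1/p)=0$, which is convexity, not from any coarse-graining.
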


\begin{proof}
	Let \(s_+=\Tr\Pi_{+,p}=p(p+1)/2\) and
	\(s_-=\Tr\Pi_{-,p}=p(p-1)/2\).
	{ Let \(\operatorname{Cliff}(p)\) denote the single-qudit
	Clifford group modulo global phases.}  The bilateral
	Clifford twirl
	\(\mathcal T_C(\cdot)=
	\bigl|{\operatorname{Cliff}(p)}\bigr|^{-1}
	\sum_{U\in{\operatorname{Cliff}(p)}}(U\otimes U)(\cdot)
	(U^\dagger\otimes U^\dagger)\) coincides with the
	Haar \(U\otimes U\) twirl in prime dimension by the Clifford unitary
	2-design property \cite{gross2007evenly}.  It therefore fixes Werner states and maps separable states to
	separable Werner states.  Since
	\(D_{\STAB}\) is monotone under bilateral Clifford preprocessing, for every
	separable \(\sigma\),
	\begin{equation}
	\label{eq:sm-werner-twirl-monotonicity}
	D_{\STAB}\!\left(\rho_{\mathrm W,p}(t)\middle\Vert \mathcal T_C(\sigma)\right)
	\le D_{\STAB}\!\left(\rho_{\mathrm W,p}(t)\middle\Vert \sigma\right).
	\end{equation}
	Thus the minimization over separable states may be restricted to Werner
	states \(\rho_{\mathrm W,p}(u)\) with \(u\ge1/2\).
	
	Fix \(u\ge1/2\) and a stabilizer POVM \(\mathcal M=\{E_k\}_k\).  Discard
	outcomes with \(\Tr E_k=0\), and set
	\(r_k=\Tr E_k\) and \(c_k=\Tr(E_k\Pi_{+,p})/r_k\).  Since
	\(0\le\Pi_{+,p}\le{\Id}\), \(c_k\le1\).  Also
	\(c_k=\frac12(1+\Tr(E_kF_p)/\Tr E_k)\).  In odd prime dimension,
	\begin{equation}
	\label{eq:sm-swap-wigner-expansion}
	F_p=\frac1p\sum_z A_z\otimes A_z,\qquad
	\Tr(E_kF_p)=\frac1p\sum_z W_{E_k}(z,z)\ge0.
	\end{equation}
	Hence \(c_k\in[1/2,1]\).
	
	For \(x\in[0,1]\), the outcome probability on \(\rho_{\mathrm W,p}(x)\) is
	\begin{equation}
	\label{eq:sm-werner-outcome-probability}
	\Tr\!\left(E_k\rho_{\mathrm W,p}(x)\right)=r_kB_x(c_k),\qquad
	B_x(c)=\frac{x c}{s_+}+\frac{(1-x)(1-c)}{s_-}.
	\end{equation}
	With \(h(c)=B_t(c)\log[B_t(c)/B_u(c)]\), this gives
	\[
	D_{\rm KL}\!\left(\mu_{\mathcal M}(\rho_{\mathrm W,p}(t))\Vert
	\mu_{\mathcal M}(\rho_{\mathrm W,p}(u))\right)=\sum_k r_k h(c_k).
	\]
	A direct differentiation gives
	\[
	h''(c)=(\log e)B_t(c)
	\left(\frac{B_t'(c)}{B_t(c)}-\frac{B_u'(c)}{B_u(c)}\right)^2\ge0.
	\]
	Thus \(h\) is convex on \([1/2,1]\), and
	\(h(c)\le(2c-1)h(1)+2(1-c)h(1/2)\) on this interval.  Using
	\(\sum_kE_k={\Id}\), one obtains
	\[
	\sum_k r_k(2c_k-1)=\Tr F_p=p,\qquad
	\sum_k 2r_k(1-c_k)=2\Tr\Pi_{-,p}=p(p-1).
	\]
	Consequently every stabilizer POVM satisfies
	\begin{equation}
	\label{eq:sm-werner-measurement-upper-bound}
	D_{\rm KL}\!\left(\mu_{\mathcal M}(\rho_{\mathrm W,p}(t))\Vert
	\mu_{\mathcal M}(\rho_{\mathrm W,p}(u))\right)
	\le p h(1)+p(p-1)h(1/2).
	\end{equation}
	
	This upper bound is attained by the computational-basis product measurement
	\(\{|i,j\rangle\!\langle i,j|\}_{i,j=1}^p\), which is a stabilizer
	measurement.  For \(E_{ij}=|i,j\rangle\!\langle i,j|\), one has
	\(r_{ij}=1\) and \(c_{ij}=(1+\delta_{ij})/2\).  Hence \(c=1\) on the \(p\)
	diagonal outcomes and \(c=1/2\) on the \(p(p-1)\) off-diagonal outcomes, so
	Eq.~\eqref{eq:sm-werner-measurement-upper-bound} is tight.  Therefore
	\begin{equation}
	\label{eq:sm-werner-stabilizer-relative-entropy}
	D_{\STAB}\!\left(\rho_{\mathrm W,p}(t)\Vert\rho_{\mathrm W,p}(u)\right)
	=
	\frac{2t}{p+1}\log\frac{t}{u}
	+
	\frac{p+1-2t}{p+1}
	\log\frac{p+1-2t}{p+1-2u}.
	\end{equation}
	Its derivative in \(u\) is
	\(2(\log e)(u-t)/[u(p+1-2u)]>0\) for \(u\ge1/2\) and
	\(t<1/2\).  The
	minimum over separable Werner states is therefore attained at \(u=1/2\),
	which gives the stated formula for
	\(E_{\STAB}(\rho_{\mathrm W,p}(t))\).
	
	For the uniform bound, let \(f(t)\) denote the displayed formula.  For
	\(0<t\le1/2\),
	\[
	f'(t)=\frac{2}{p+1}\log\frac{2pt}{p+1-2t}\le0.
	\]
	Thus \(f\) is nonincreasing on \([0,1/2]\). We get
	\(E_{\STAB}(\rho_{\mathrm W,p}(t))\le f(0)=\log(1+1/p)\).
\end{proof}
}

\section{Proofs of Theorems 3 and 4: mixed-stabilizer visibility gaps}
\label{sec:mixed}

{ This section analyzes the interpolation in
Eq.~\eqref{eq:rho-interpolation} and proves the asymptotic visibility separation
\eqref{eq:visible-separation} for the construction
\eqref{eq:rhohat-construction} of the main text.}

The two-qutrit example is the mixed stabilizer state
\begin{equation}
\label{eq:sm-rho-star-construction}
\rho_*=\frac13\ket{00}\bra{00}
+\frac23\ket{\Psi^+_{12}}\bra{\Psi^+_{12}},
\qquad
\ket{\Psi^+_{12}}=\frac{\ket{11}+\ket{22}}{\sqrt2}.
\end{equation}
It has the one-shot gap
\[
E_{\ALL}(\rho_*)=\frac23,
\qquad
E_{\STAB}(\rho_*)\le \frac13\log 3.
\]
The amplified construction uses larger qutrit stabilizer blocks \(\rho_b\), for
which
\[
E_{\ALL}(\rho_b)=1-3^{-b},
\qquad
E_{\STAB}(\rho_b)\le \frac{b\log3}{3^b}.
\]
Choosing \(b_N=\lceil2\log_3 N\rceil\) and taking many independent blocks gives
the asymptotic separation of Theorem~4.

{ Throughout this section, \(\Delta\) denotes the complete dephasing channel in the local computational bases.}

Both arguments use the dephased separable reference
\(\sigma=\Delta(\rho)\).  For the maximally correlated states below this
reference gives the unrestricted value, while the stabilizer side is bounded by
\[
E_{\STAB}(\rho)
\le D_{\STAB}(\rho\Vert\sigma).
\]
The states used below are PWF, and stabilizer measurements act on their Wigner
functions through a classical stochastic kernel.  Classical data processing
therefore yields the computable Wigner--KL upper bound
\[
D_{\STAB}(\rho\Vert\sigma)
\le D_{\rm KL}(W_\rho\Vert W_\sigma).
\]
\subsection{Wigner--KL upper bounds for positive-Wigner states}

The following lemma establishes this bound.

\begin{lemma}[Wigner--KL upper bound for PWF states]
\label{lem:pwf-dpi}
Let \(\rho\) and \(\sigma\) be PWF states on a tensor-product qutrit phase
space.  Then
\begin{equation}
\label{eq:sm-pwf-data-processing}
D_{\STAB}(\rho\Vert\sigma)
\le D_{\rm KL}(W_\rho\Vert W_\sigma).
\end{equation}
\end{lemma}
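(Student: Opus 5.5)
The plan is to show that every stabilizer POVM acts on PWF states as a classical stochastic map from phase space to outcomes, and then apply classical data processing. Fix $\mathcal M=\{E_y\}_{y\in\mathcal Y}\in\STAB$ on $k$ qutrits. By the Born rule in Wigner form, Eq.~\eqref{eq:sm-wigner-born-rule}, we have $\mu_{\mathcal M}(\rho)(y)=\sum_{u\in V_k}W_{E_y}(u)W_\rho(u)$, and likewise for $\sigma$. The candidate kernel is therefore $T(y|u):=W_{E_y}(u)=\Tr(A_uE_y)$.

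Two properties must be checked. The first is normalization: $\sum_y T(y|u)=\Tr(A_u\Id)=1$ by Eq.~\eqref{eq:sm-phase-point-identities}. The second is nonnegativity, $W_{E_y}(u)\ge0$ for every effect of a stabilizer measurement. This is the inclusion $\STAB\subseteq\PWF$ of Eq.~\eqref{eq:sm-measurement-class-inclusion}, and it is where the real content sits.

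Since the paper quotes that inclusion only by citation, I would sketch a direct argument via the normal form of Lemma~\ref{lem:normal-form}. Each effect decomposes as $E_y=\sum_\theta\lambda_\theta\sum_i T_\theta(y|i)\pi_{\theta,i}$, a nonnegative combination of rank-one stabilizer projectors. By property (iv), $W_{\pi}(u)=\Tr(A_u\pi)=p^k W_\pi(u)\cdot$(state normalization) is nonnegative for each stabilizer state $\pi$. Nonnegativity is preserved under nonnegative combinations, so $W_{E_y}\ge0$. This avoids tracking ancillas and discards separately, because the normal form already absorbs them. I expect this step to be the main obstacle, specifically making sure the normal form covers ancilla preparation and partial trace on an enlarged space; that holds because Lemma~\ref{lem:normal-form} is stated for arbitrary finite-outcome stabilizer measurements on the input space.

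With $T$ stochastic, and $W_\rho$ and $W_\sigma$ genuine probability distributions on $V_k$ (nonnegative by the PWF hypothesis, and summing to $p^{-k}\Tr(p^k\Id\,\rho)=1$), the outcome distributions are $\mu_{\mathcal M}(\rho)=T W_\rho$ and $\mu_{\mathcal M}(\sigma)=TW_\sigma$. Classical data processing for KL divergence, including the extended-value convention when supports are mismatched, gives $D_{\rm KL}(\mu_{\mathcal M}(\rho)\Vert\mu_{\mathcal M}(\sigma))\le D_{\rm KL}(W_\rho\Vert W_\sigma)$. Taking the supremum over $\mathcal M\in\STAB$ yields Eq.~\eqref{eq:sm-pwf-data-processing}.
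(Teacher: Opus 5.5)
Your proof is correct and is essentially the paper's argument: the Wigner-form Born rule makes $W_{E_y}(u)$ a stochastic kernel (nonnegative because $\STAB\subseteq\PWF$, normalized because $\Tr A_u=1$), and classical data processing followed by a supremum over $\mathcal M\in\STAB$ finishes. The only difference is how you justify $\STAB\subseteq\PWF$: you derive it from Lemma~\ref{lem:normal-form}, which writes each effect as a nonnegative combination of rank-one stabilizer projectors, together with Gross's positivity for pure stabilizer states, whereas the paper simply cites the inclusion from the literature; both routes are valid.
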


\begin{proof}
For a stabilizer POVM \(\{E_k\}\), each effect is PWF, so
\(W_{E_k}(u)\ge0\).  Completeness gives
\[
\sum_k W_{E_k}(u)=W_{{\Id}}(u)=\Tr A_u=1,
\]
so \(\Gamma_{k|u}:=W_{E_k}(u)\) is a stochastic kernel from phase-space points
to outcomes.  By Eq.~\eqref{eq:sm-wigner-born-rule},
\[
\mu_{\mathcal M}(\rho)(k)=\Tr(E_k\rho)
=\sum_u W_{E_k}(u)W_\rho(u)
=\sum_u\Gamma_{k|u}W_\rho(u),
\]
and the same channel maps \(W_\sigma\) to \(\mu_{\mathcal M}(\sigma)\).
Classical data processing gives
for every \(\mathcal M\in\STAB\),
\begin{equation*}
D_{\rm KL}(\mu_{\mathcal M}(\rho)\Vert\mu_{\mathcal M}(\sigma))
\le D_{\rm KL}(W_\rho\Vert W_\sigma).
\end{equation*}
Taking the supremum over stabilizer measurements proves the claim.
\end{proof}

\begin{lemma}[Regularized Wigner--KL upper bound for PWF states]
\label{lem:pwf-regularized}
Let \(\rho\) be a PWF state on a bipartite tensor-product qutrit system and let
\(\sigma\) be a PWF separable state on the same system.  Then
\begin{equation}
\label{eq:sm-regularized-pwf-upper-bound}
\limsup_{M\to\infty}\frac1M
E_{\STAB}(\rho^{\otimes M})
\le D_{\rm KL}(W_\rho\Vert W_\sigma).
\end{equation}
\end{lemma}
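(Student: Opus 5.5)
The plan is to bound the regularized quantity copy by copy, using the product separable reference \(\sigma^{\otimes M}\) and the single-copy Wigner--KL bound of Lemma~\ref{lem:pwf-dpi}. The argument rests on three observations.

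\emph{Step 1: product reference.} Since \(\sigma\) is separable, \(\sigma^{\otimes M}\) is separable with respect to the cut \(A_1\cdots A_M\,|\,B_1\cdots B_M\). Hence, for every \(M\),
\[
E_{\STAB}(\rho^{\otimes M})\le D_{\STAB}(\rho^{\otimes M}\Vert\sigma^{\otimes M}).
\]

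\emph{Step 2: positivity is preserved under tensor products.} The phase space of the \(M\)-copy system is the direct sum \(V^{\oplus M}\), and the Gross phase-point operators factorize: \(A_{(u_1,\dots,u_M)}=A_{u_1}\otimes\cdots\otimes A_{u_M}\). This holds because the displacement operators \(T_u\) factorize across qudits, including the phase \(\omega^{-2^{-1}a\cdot b}\), so \(A_0\) factorizes as well. Consequently
\[
W_{\rho^{\otimes M}}=W_\rho^{\otimes M},\qquad W_{\sigma^{\otimes M}}=W_\sigma^{\otimes M},
\]
and both are nonnegative product distributions. In particular, \(\rho^{\otimes M}\) and \(\sigma^{\otimes M}\) are PWF states on a tensor-product qutrit phase space, so Lemma~\ref{lem:pwf-dpi} applies. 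It gives
\[
D_{\STAB}(\rho^{\otimes M}\Vert\sigma^{\otimes M})\le D_{\rm KL}(W_\rho^{\otimes M}\Vert W_\sigma^{\otimes M}).
\]
The factorization check is the only step requiring care, and it follows directly from the definitions in Eq.~\eqref{eq:sm-phase-point-operators}.

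\emph{Step 3: additivity.} Classical KL divergence is additive on product distributions, so
\[
D_{\rm KL}(W_\rho^{\otimes M}\Vert W_\sigma^{\otimes M})=M\,D_{\rm KL}(W_\rho\Vert W_\sigma).
\]
This holds with the extended-value convention: if the single-copy divergence is \(+\infty\), the claim is trivial.

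Combining the three steps gives \(\tfrac1M E_{\STAB}(\rho^{\otimes M})\le D_{\rm KL}(W_\rho\Vert W_\sigma)\) for every \(M\). Taking the \(\limsup\) yields Eq.~\eqref{eq:sm-regularized-pwf-upper-bound}, and in fact the bound holds uniformly in \(M\).

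I do not expect a real obstacle. The one point to confirm is that the bipartite cut and the qudit ordering in the \(M\)-fold tensor product do not affect the Wigner representation. Reordering qudits permutes phase-space coordinates, and this leaves both PWF-ness and KL divergence invariant.
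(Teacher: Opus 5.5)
Your proof is correct and follows the paper's argument exactly: you use the separable product reference \(\sigma^{\otimes M}\), apply Lemma~\ref{lem:pwf-dpi} to the \(M\)-copy PWF pair, and invoke additivity of classical KL divergence, which gives the bound uniformly in \(M\). Your explicit check that the Gross phase-point operators factorize, so that \(W_{\rho^{\otimes M}}=W_\rho^{\otimes M}\), fills in a step the paper uses only implicitly.
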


\begin{proof}
For every \(M\ge1\), the product reference \(\sigma^{\otimes M}\) is separable
and PWF.  Lemma~\ref{lem:pwf-dpi} gives
\[
E_{\STAB}(\rho^{\otimes M})
\le
D_{\STAB}\!\left(\rho^{\otimes M}\Vert\sigma^{\otimes M}\right)
\le
D_{\rm KL}\!\left(W_{\rho}^{\otimes M}\Vert W_{\sigma}^{\otimes M}\right)
=M D_{\rm KL}(W_\rho\Vert W_\sigma).
\]
The equality follows from additivity of classical KL divergence for product
distributions.  Dividing by \(M\) and
taking the limsup proves the claim.
\end{proof}

\subsection{A two-qutrit mixed-stabilizer gap}

\begin{theorem}
\label{thm:rho-star-gap}
Let \(\rho_*\) be the state defined in
Eq.~\eqref{eq:sm-rho-star-construction}.  Then \(\rho_*\) is a mixed
stabilizer state and
\begin{equation}
\label{eq:sm-rho-star-gap}
E_{\STAB}(\rho_*)\le \frac13\log 3<E_{\ALL}(\rho_*)=\frac23.
\end{equation}
\end{theorem}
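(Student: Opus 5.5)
The proof has three parts: show that \(\rho_*\) is a stabilizer mixture, compute \(E_{\ALL}(\rho_*)\) exactly, and bound \(E_{\STAB}(\rho_*)\) through Lemma~\ref{lem:pwf-dpi}.

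\textbf{Stabilizer mixture.} Let \(\omega=e^{2\pi i/3}\). The vectors \(\ket{\xi_a}=3^{-1/2}\sum_j\omega^{aj^2}\ket{jj}\) are stabilizer states, since \(j\mapsto aj^2\) is quadratic on \(\mathbb F_3\). Concretely, \(\ket{\xi_a}\) is obtained from \(\ket{\Phi_3}\) by the Clifford phase gate \(\sum_j\omega^{aj^2}\ket j\bra j\) on Alice's side. Next I will average \(\ket{\xi_a}\bra{\xi_a}\) over \(a\). The coefficient of \(\ket{jj}\bra{kk}\) is \(\frac19\sum_a\omega^{a(j^2-k^2)}=\frac13\delta_{j^2,k^2}\). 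In \(\mathbb F_3\), \(j^2=k^2\) holds exactly when \(j=k\) or \(\{j,k\}=\{1,2\}\). This reproduces \(\frac13\ket{00}\bra{00}+\frac23\ket{\Psi^+_{12}}\bra{\Psi^+_{12}}\), which is \(\rho_*\).

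\textbf{Unrestricted value.} The state \(\rho_*\) is maximally correlated, so I will use the fact that for maximally correlated states \(E_R(\rho)=D(\rho\Vert\Delta(\rho))=S(\Delta(\rho))-S(\rho)\). Here \(S(\Delta\rho_*)=\log3\) and \(S(\rho_*)=h(1/3)=\log3-\frac23\), so \(E_R(\rho_*)=\frac23\). To get \(E_{\ALL}=\frac23\) I need both directions.
\begin{itemize}
\item The upper bound is \(E_{\ALL}\le E_R\).
\item For the lower bound I will measure the projector \(\Pi=\ket{\Psi^+_{12}}\bra{\Psi^+_{12}}\), together with \(\ket{00}\bra{00}\) and the remaining projector. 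For any separable \(\sigma\) this gives \(\Tr(\Pi\sigma)\le\frac12\).
\item A two-outcome argument alone gives roughly \(\frac23\log\frac{2/3}{1/2}+\dots\), which may not reach \(\frac23\). The cleaner route is the measured-relative-entropy variational formula \(D_{\ALL}(\rho\Vert\sigma)=\sup_{\omega>0}\Tr(\rho\log\omega)-\log\Tr(\sigma\omega)\).
\item I will take the witness \(\omega=\ket{00}\bra{00}+2\Pi+(\text{small})(\Id-\ket{00}\bra{00}-\Pi)\), i.e.\ \(\omega=\rho_*/\Delta(\rho_*)\) restricted to the support. Then \(\Tr(\rho_*\log\omega)=\frac23\). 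For product (hence separable) \(\sigma\), \(\Tr(\sigma\omega)\le1\): on the span of \(\ket{jj}\), the matrix with ones on the \(\{1,2\}\) block acts on product vectors with value at most \(\sum_j|x_jy_j|^2+2\mathrm{Re}\,\overline{x_1y_1}x_2y_2\le(\sum|x_jy_j|)^2\le1\).
\end{itemize}
This is the ``maximally correlated variational witness'' mentioned in the main text, and it gives \(E_{\ALL}(\rho_*)\ge\frac23\).

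\textbf{Stabilizer bound.} First, \(E_{\STAB}(\rho_*)\le D_{\STAB}(\rho_*\Vert\sigma_*)\) with \(\sigma_*=\Delta(\rho_*)=\frac13\sum_i\ket{ii}\bra{ii}\), which is separable. Both \(\rho_*\) and \(\sigma_*\) are stabilizer mixtures and hence PWF, so Lemma~\ref{lem:pwf-dpi} gives \(D_{\STAB}\le D_{\rm KL}(W_{\rho_*}\Vert W_{\sigma_*})\). It remains to compute the Wigner functions.
\begin{itemize}
\item A pure stabilizer state has Wigner function uniform on an affine Lagrangian subspace of \(\mathbb F_3^4\), with mass \(1/9\) at each of 9 points.
\item Each \(\ket{\xi_a}\) is stabilized by \(X\otimes X^{-1}\) (up to phase) and by a \(Z\)-type operator twisted by \(a\). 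Its support is therefore \(\{(q,q,p_1,p_2): p_1+p_2=\ell_a(q)\}\), where \(\ell_a\) is affine in \(q\) with slope depending on \(a\).
\item \(W_{\sigma_*}\) is uniform on \(\{(q,q,p_1,p_2)\}\), which has 27 points, each of weight \(1/27\).
\item \(W_{\rho_*}\) is the average of the three supports, all contained in this 27-point set.
\end{itemize}
The main obstacle is the intersection pattern of these supports. For fixed \(q\), each support is a line \(p_1+p_2=c_a(q)\). If for \(q=0\) all three values \(c_a(0)\) coincide, while for \(q=1,2\) they are pairwise distinct, then the Wigner weights are as follows:
\begin{itemize}
\item at \(q=0\), three points carry weight \(1/9\), giving ratio \(3\) against \(1/27\);
\item at \(q\ne0\), all points carry weight \(1/27\), giving ratio \(1\).
\end{itemize}
This gives \(D_{\rm KL}=3\cdot\frac19\log3=\frac13\log3\).

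I will verify this intersection pattern by writing \(W\) explicitly via \(A_u\), using the quadratic phase \(aj^2\), which shifts the momentum by \(2aq\) and so vanishes at \(q=0\). The final comparison is \(\frac13\log_2 3\approx0.528<\frac23\).
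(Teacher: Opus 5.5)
Your proposal is correct and follows the paper's proof. It uses the same decomposition into the $\ket{\xi_a}$ and the same Cauchy--Schwarz overlap bound $\Tr(\rho_*\eta)\le 1/3$; your witness $\omega\approx 3\rho_*$ in the variational formula is the paper's eigenbasis-measurement-plus-Jensen argument in another form. It also uses the same Wigner--KL bound from Lemma~\ref{lem:pwf-dpi}, and your support pattern $p_1+p_2=2aq$ reproduces the paper's weights: $1/9$ on 3 points and $1/27$ on 18 points.

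One slip needs fixing. $\ket{\xi_a}$ is stabilized by $Z_AZ_B^{-1}$ and $\omega^{-a}(Z^aX)\otimes(Z^aX)$, not by $X\otimes X^{-1}$ and a twisted $Z$-type operator. It is the former that gives the support $\{(q,q,p_1,p_2):p_1+p_2=2aq\}$ you then correctly use.
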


\begin{proof}
Let
\begin{equation}
\label{eq:sm-rho-star-reference}
\sigma_*=\Delta(\rho_*)=\frac13\sum_{i=0}^2\ket{ii}\bra{ii}.
\end{equation}
First, \(\rho_*\) is { a mixed stabilizer state}.  Let
\(\omega=e^{2\pi i/3}\), and define
\begin{equation*}
\ket{{\xi}_a}=\frac1{\sqrt3}\sum_{j=0}^{2}\omega^{a j^2}\ket{jj},
\qquad a=0,1,2.
\end{equation*}
Each \(\ket{{\xi}_a}\) is a two-qutrit stabilizer state: with
\(X\ket j=\ket{j+1}\) and \(Z\ket j=\omega^j\ket j\), it is stabilized by
\begin{equation*}
Z_AZ_B^{-1},
\qquad
\omega^{-a}(Z_A^aX_A)\otimes(Z_B^aX_B).
\end{equation*}
Moreover,
\begin{equation}
\label{eq:sm-rho-star-stabilizer-mixture}
\begin{aligned}
\frac13\sum_{a=0}^2\ket{{\xi}_a}\bra{{\xi}_a}
&=\frac13\sum_{j,k=0}^{2}
\left(\frac13\sum_{a=0}^{2}\omega^{a(j^2-k^2)}\right)
\ket{jj}\bra{kk}\\
&=\frac13\ket{00}\bra{00}
+\frac23\ket{\Psi^+_{12}}\bra{\Psi^+_{12}}
=\rho_*.
\end{aligned}
\end{equation}
Thus \(\rho_*\) is a convex mixture of stabilizer states.

Next, \(\rho_*\) is maximally correlated, so the dephased state
\(\sigma_*=\Delta(\rho_*)\) gives the relative entropy of entanglement
{\cite{rains1999bound,rains2000erratum,zhu2017coherence}}.  The unrestricted quantity reaches the same
value { by using a projective measurement containing the nonzero eigenvectors}
\(\{\ket{\varphi_j}\}=\{\ket{00},\ket{\Psi^+_{12}}\}\) { of \(\rho_*\)}.  For a product vector
\(\ket a\ket b\),
\[
\bra{a,b}\rho_*\ket{a,b}
=\frac13\left(|a_0b_0|^2+|a_1b_1+a_2b_2|^2\right)
\le\frac13,
\]
by Cauchy--Schwarz, and convexity gives the same bound for every separable
state.  Hence the
classical reference probabilities
\(r_j=\langle\varphi_j|\eta|\varphi_j\rangle\) in this eigenbasis obey
\(\sum_j p_jr_j=\Tr(\rho_*\eta)\le1/3\), where
\(\eta\) is separable and \(p=(1/3,2/3)\).  Jensen's inequality gives
\[
D_{\rm KL}(p\Vert r)
\ge
\sum_j p_j\log(3p_j)
=D(\rho_*\Vert\sigma_*).
\]
Together with \(E_{\ALL}\le E_R\), this proves
\begin{equation}
\label{eq:sm-rho-star-unrestricted-value}
E_{\ALL}(\rho_*)=D(\rho_*\Vert\sigma_*).
\end{equation}
On the correlated subspace, \(\sigma_*\) is \({\Id}_3/3\), while \(\rho_*\) has
nonzero eigenvalues \(1/3\) and \(2/3\).  Therefore
\begin{equation*}
D(\rho_*\Vert\sigma_*)=\frac13\log1+\frac23\log2=\frac23.
\end{equation*}

For the stabilizer upper bound, since \(\sigma_*\) is separable,
\begin{equation*}
E_{\STAB}(\rho_*)\le D_{\STAB}(\rho_*\Vert\sigma_*).
\end{equation*}
Both \(\rho_*\) and \(\sigma_*\) are mixtures of stabilizer states, hence have
{ nonnegative} Gross Wigner functions.  Lemma~\ref{lem:pwf-dpi}, applied directly
to the full two-qutrit Wigner distributions, gives
\begin{equation*}
D_{\STAB}(\rho_*\Vert\sigma_*)
\le D_{\rm KL}(W_{\rho_*}\Vert W_{\sigma_*}).
\end{equation*}
A direct evaluation of these two Wigner distributions gives
\[
W_{\sigma_*}=1/27
\]
on \(27\) phase-space points and zero elsewhere, while \(W_{\rho_*}\) has weight
\(1/9\) on \(3\) of those points, weight \(1/27\) on \(18\) of those points, and
zero elsewhere.  Therefore
\begin{equation}
\label{eq:sm-rho-star-wigner-kl}
D_{\rm KL}(W_{\rho_*}\Vert W_{\sigma_*})
=3\cdot\frac19\log3
=\frac13\log3
<\frac23.
\end{equation}
\end{proof}

For the interpolation shown in Fig.~\ref{fig:mixed-stabilizer-gap} of the main text, set
\begin{equation*}
\rho(s)=(1-s)\sigma_*+s\rho_*,
\qquad 0\le s\le1,
\end{equation*}
where \(s=0\) is the separable reference and \(s=1\) is the mixed-stabilizer
state \(\rho_*\).  By the maximally correlated formula, the solid blue curve is
the exact unrestricted value
\begin{equation}
\label{eq:sm-interpolation-unrestricted-value}
E_{\ALL}(\rho(s))=D(\rho(s)\Vert\sigma_*)
=\frac{1+s}{3}\log(1+s)+\frac{1-s}{3}\log(1-s).
\end{equation}
The dashed orange curve is the Wigner--KL upper bound
\(U_W(s):=D_{\rm KL}(W_{\rho(s)}\Vert W_{\sigma_*})\), obtained from
Lemma~\ref{lem:pwf-dpi}:
\begin{equation}
\label{eq:sm-interpolation-wigner-bound}
E_{\STAB}(\rho(s))
\le
U_W(s)
=
\frac{1+2s}{9}\log(1+2s)
+\frac{2(1-s)}{9}\log(1-s).
\end{equation}
Thus the orange dashed curve should be read as a certified upper bound on the
stabilizer-visible value, not as the exact value of \(E_{\STAB}\).  Its position
below the blue curve shows the visible-entanglement gap along the interpolation.

\subsection{A family of mixed-stabilizer hiding states}

Let \(b\ge1\), set \(d=3^b\), and fix an
\(\mathbb F_3\)-linear identification of the computational-basis
labels of \(b\) qutrits with \(\mathbb F_d\).
Let
\(\Tr_d(t):=\operatorname{Tr}_{\mathbb F_d/\mathbb F_3}(t)
=\sum_{j=0}^{b-1}t^{3^j}\).
The canonical additive character is
\begin{equation*}
	\label{eq:sm-additive-character}
	\chi(t):=
	\exp\!\left(\frac{2\pi i}{3}\Tr_d(t)\right),
	\qquad t\in\mathbb F_d .
\end{equation*}
It satisfies the standard orthogonality relation
\[
\sum_{a\in\mathbb F_d}\chi(at)=d\,\delta_{t,0}.
\]
Following the standard finite-field construction of mutually unbiased bases
in odd prime-power dimension
\cite{wootters1989optimal,gibbons2004discrete}, define
{
\begin{equation}
\label{eq:sm-quadratic-phase-stabilizers}
	\ket{\zeta_a}=\frac1{\sqrt d}\sum_{x\in\mathbb F_d}\chi(ax^2)\ket x,
	\qquad a\in\mathbb F_d .
\end{equation}
}
Let the field-Pauli operators be
\[
X(t)\ket x=\ket{x+t},\qquad Z(s)\ket x=\chi(sx)\ket x .
\]
Set
\[
L_a=\{(x,2ax):x\in\mathbb F_d\}\subset\mathbb F_3^b\oplus\mathbb F_3^b .
\]
\begin{samepage}
{ We use the following properties of \(\ket{\zeta_a}\):}
\begin{enumerate}[label=(\roman*)]
	\item { \(\ket{\zeta_a}\) is a stabilizer state.}  Indeed, for every
	\(t\in\mathbb F_d\),
	{
	\[
	\chi(-at^2)Z(2at)X(t)\ket{\zeta_a}=\ket{\zeta_a},
	\]
	}
	and these commuting Paulis form the stabilizer associated with the line
	\(L_a\) \cite{gibbons2004discrete,gross2006hudson}.
	\item { \(\ket{\zeta_a}\) is flat in the computational basis}
	\cite{wootters1989optimal,gibbons2004discrete}:
	{
	\[
	|\braket{x}{\zeta_a}|=d^{-1/2}.
	\]
	}
	\item By Lemma 9 of Ref.~\cite{gross2006hudson}, the tensor-product
	qutrit Wigner function of
	{ \(\ket{\zeta_a}\) is uniform on \(L_a\).}
	\item The supports satisfy \(L_a\cap L_{a'}=\{0\}\) whenever \(a\ne a'\).
\end{enumerate}
\end{samepage}
Define the correlated states
{
\begin{equation}
\label{eq:sm-correlated-block-construction}
\ket{{\xi}_a^{(b)}}
=\frac1{\sqrt d}\sum_{x\in\mathbb F_d}\chi(ax^2)\ket{x,x},
\qquad
\rho_b=\frac1d\sum_{a\in\mathbb F_d}
\ket{{\xi}_a^{(b)}}\bra{{\xi}_a^{(b)}}.
\end{equation}
}
{ The repetition isometry \(V_{\rm rep}\ket x=\ket{x,x}\) is a
stabilizer isometry, so each \(\ket{{\xi}_a^{(b)}}\) is a stabilizer
state} and \(\rho_b\) is { a mixed stabilizer state}.
It is maximally correlated by construction.  { For \(b=1\),
\(\ket{{\xi}_a^{(1)}}=\ket{{\xi}_a}\), so this construction
reduces to the state \(\rho_*\) above.}  Let
\begin{equation}
\label{eq:sm-block-separable-reference}
\sigma_b=\Delta(\rho_b)=\frac1d\sum_{x\in\mathbb F_d}\ket{x,x}\bra{x,x}.
\end{equation}
We also use the seed state
{
\begin{equation}
\label{eq:sm-block-seed-state}
\gamma_b:=\frac1d\sum_{a\in\mathbb F_d}\ket{\zeta_a}\bra{\zeta_a}.
\end{equation}
}

\begin{lemma}
\label{lem:block-all}
For the above \(\rho_b\),
\begin{equation}
\label{eq:sm-block-unrestricted-values}
S(\rho_b)=\log d-\left(1-\frac1d\right),
\qquad
E_{\ALL}(\rho_b)=D(\rho_b\Vert\sigma_b)=1-1/d.
\end{equation}
\end{lemma}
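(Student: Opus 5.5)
The plan is to compute the spectrum of \(\rho_b\) directly and then reuse, almost verbatim, the maximally-correlated argument from the proof of Theorem~\ref{thm:rho-star-gap}.

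\emph{Spectrum.} Since \(\rho_b\) lives on the correlated subspace \(\operatorname{span}\{\ket{x,x}\}\), it is unitarily equivalent (via \(V_{\rm rep}\)) to the seed state \(\gamma_b\). Expanding and using the character orthogonality gives the matrix entries
\[
\bra{x}\gamma_b\ket{y}=\frac1{d^2}\sum_a\chi(a(x^2-y^2))=\frac1d\,\delta_{x^2,y^2}.
\]
In odd characteristic \(x^2=y^2\) iff \(y=\pm x\). Hence \(\gamma_b\) is block diagonal: it has a \(1\times1\) block \(1/d\) at \(x=0\), and for each of the \((d-1)/2\) pairs \(\{x,-x\}\) with \(x\neq0\) a \(2\times2\) block \(\frac1d\begin{pmatrix}1&1\\1&1\end{pmatrix}\). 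The latter has eigenvalues \(2/d\) and \(0\).

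The nonzero spectrum is therefore \(1/d\) once and \(2/d\) with multiplicity \((d-1)/2\). This yields
\[
S(\rho_b)=\tfrac1d\log d+\tfrac{d-1}{d}\log\tfrac d2=\log d-\bigl(1-\tfrac1d\bigr).
\]
The state \(\sigma_b\) acts as \(\Id/d\) on the correlated subspace, which contains the support of \(\rho_b\). So
\[
D(\rho_b\Vert\sigma_b)=-S(\rho_b)+\log d=1-1/d.
\]
This matches the \(b=1\) value \(2/3\) found earlier.

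\emph{Equality with \(E_{\ALL}\).} The upper bound is \(E_{\ALL}(\rho_b)\le E_R(\rho_b)\le D(\rho_b\Vert\sigma_b)\), because \(\sigma_b\) is separable and measured relative entropy is at most quantum relative entropy.

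For the lower bound, I would measure in an eigenbasis \(\{\ket{\varphi_j}\}\) of \(\rho_b\), completed arbitrarily, with \(\rho\)-distribution \(p_j\). The key step is the overlap bound \(\bra{\alpha,\beta}\rho_b\ket{\alpha,\beta}\le 1/d\) for all product vectors. To prove it, write \(\rho_b=\frac1d\sum_a\ket{\xi_a}\bra{\xi_a}\), so that
\[
\bra{\alpha,\beta}\rho_b\ket{\alpha,\beta}=\frac1{d^2}\sum_a\Bigl|\sum_x\chi(ax^2)\bar\alpha_x\bar\beta_x\Bigr|^2 .
\]
Substitute \(c_x=\bar\alpha_x\bar\beta_x\) and expand using character orthogonality. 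This gives
\[
\frac1d\sum_{x^2=y^2}c_x\bar c_y=\frac1d\Bigl(|c_0|^2+\sum_{\{x,-x\}}|c_x+c_{-x}|^2\Bigr)\le\frac2d\sum_x|c_x|^2-\frac{|c_0|^2}{d}.
\]
This is too weak by itself. Instead I would use Cauchy–Schwarz pairwise, as in the \(3\otimes3\) case:
\[
|\alpha_x\beta_x+\alpha_{-x}\beta_{-x}|^2\le(|\alpha_x|^2+|\alpha_{-x}|^2)(|\beta_x|^2+|\beta_{-x}|^2).
\]
Summing over the blocks, the total is at most \(\sum_{\text{blocks}}A_kB_k\le(\sum A_k)(\sum B_k)=1\), with \(A_k,B_k\) the block weights. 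Hence the overlap is at most \(1/d\), and by convexity \(\Tr(\rho_b\eta)\le 1/d\) for every separable \(\eta\).

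With \(r_j=\bra{\varphi_j}\eta\ket{\varphi_j}\), this gives \(\sum_jp_jr_j\le1/d\). Jensen's inequality (concavity of log) then yields
\[
D_{\rm KL}(p\Vert r)\ge-\log\sum_jp_j\frac{r_j}{p_j}\cdot\ldots
\]
More precisely, apply Jensen in the form \(\sum p_j\log(p_j/r_j)\ge\sum p_j\log(d p_j)\) restricted to the support, exactly as done for \(\rho_*\). The cleanest route is the maximally-correlated variational bound \(D_{\rm KL}(p\Vert r)\ge\sum_jp_j\log p_j-\log\sum_j\ldots\), which I would reproduce from the \(\rho_*\) proof. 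This gives \(-S(\rho_b)+\log d=1-1/d\).

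The main obstacle is this final Jensen step. One must verify that the inequality used for \(\rho_*\) truly only needs \(\sum_jp_jr_j\le\max_{\rm prod}\langle\rho\rangle=1/d\) together with the eigenvalue structure. If it does not, the fallback is to cite the known result that \(E_R\) of maximally correlated states is attained by the measured relative entropy (Rains), giving \(E_{\ALL}=D(\rho_b\Vert\sigma_b)\).
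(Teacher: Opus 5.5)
The gap is the final Jensen step: you leave it unfinished and call it the main obstacle, and your fallback would not close it. Everything before that step is correct and follows the paper's structure: the spectrum via character orthogonality, the eigenbasis measurement, and a $1/d$ bound on overlaps with separable states.

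Two points on the parts you did complete. Your pairwise Cauchy--Schwarz proof of $\bra{\alpha,\beta}\rho_b\ket{\alpha,\beta}\le 1/d$ is valid, but the paper gets it with no computation. Each $\ket{\xi_a^{(b)}}$ has all Schmidt coefficients equal to $d^{-1/2}$, so $\bra{\xi_a^{(b)}}\eta\ket{\xi_a^{(b)}}\le 1/d$ for every separable $\eta$, and averaging over $a$ gives $\Tr(\rho_b\eta)\le 1/d$. Your upper bound $E_{\ALL}(\rho_b)\le D_{\ALL}(\rho_b\Vert\sigma_b)\le D(\rho_b\Vert\sigma_b)$ is more self-contained than the paper's appeal to Rains for $E_R(\rho_b)=D(\rho_b\Vert\sigma_b)$. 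Once the lower bound is in place, your sandwich yields that identity as a byproduct.

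The step itself is elementary. Your first attempt applies Jensen to $\log(r_j/p_j)$, which only gives Gibbs' inequality $D_{\rm KL}(p\Vert r)\ge-\log\sum_j r_j\ge 0$. Instead, keep $\sum_j p_j\log p_j=-S(\rho_b)$ exact and apply concavity of $\log$ only to the cross term. Summing over $j$ with $p_j>0$,
\[
\sum_j p_j\log r_j\le\log\sum_j p_j r_j=\log\Tr(\rho_b\eta)\le-\log d,
\]
so $D_{\rm KL}(p\Vert r)=-S(\rho_b)-\sum_j p_j\log r_j\ge\log d-S(\rho_b)=1-1/d$. This is trivially true if some $r_j=0$ with $p_j>0$. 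It needs only that $p$ is the spectrum of $\rho_b$ plus the overlap bound; no further eigenvalue structure is used.

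Your fallback does not supply this. The Rains result the paper cites says the dephased state $\Delta(\rho_b)$ is an optimal separable reference for $E_R$ of maximally correlated states. It says nothing about the measured quantity $E_{\ALL}$, so it gives no lower bound on $E_{\ALL}$, which is exactly what this step has to provide.
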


\begin{proof}
	{ By construction, \(\rho_b=V_{\rm rep}\gamma_bV_{\rm rep}^\dagger\) and
	\(\sigma_b=V_{\rm rep}({\Id}_d/d)V_{\rm rep}^\dagger\).}  By character orthogonality,
	{
	\begin{equation}
	\label{eq:sm-block-seed-matrix-elements}
	\bra x\gamma_b\ket y
	=\frac1{d^2}\sum_{a\in\mathbb F_d}\chi\!\left(a(x^2-y^2)\right)
	=
	\begin{cases}
	1/d, & x^2=y^2,\\
	0, & x^2\ne y^2 .
	\end{cases}
	\end{equation}
	}
	Since the field has odd characteristic, \(x^2=y^2\) means \(x=y\) or \(x=-y\).
	Thus \(\ket0\) has eigenvalue \(1/d\), while each two-dimensional space
	\(\operatorname{span}\{\ket x,\ket{-x}\}\), \(x\ne0\), contributes
	eigenvalues \(2/d\) and \(0\).  Hence
	{
	\[
	S(\rho_b)=S(\gamma_b)=\log d-\left(1-\frac1d\right),
	\qquad
	D(\rho_b\Vert\sigma_b)=\log d-S(\rho_b)=1-\frac1d.
	\]
	}
	For { maximally correlated states}, the dephased state
	\(\sigma_b=\Delta(\rho_b)\) attains the relative entropy of entanglement
	{\cite{rains1999bound,rains2000erratum,zhu2017coherence}}.  For the present family this gives
	\[
	E_R(\rho_b)=D(\rho_b\Vert\sigma_b)=1-\frac1d.
	\]
	We now show that \(E_{\ALL}(\rho_b)=E_R(\rho_b)\).
	Let \(\rho_b=\sum_j p_j|\varphi_j\rangle\langle\varphi_j|\), and measure in this
	eigenbasis.  { For any separable \(\eta\), with
	\(q_j=\langle\varphi_j|\eta|\varphi_j\rangle\), one has}
	{
	\[
	\sum_jp_jq_j=\Tr(\rho_b\eta)\le\frac1d,
	\]
	}
	{ because each \(\ket{{\xi}_a^{(b)}}\) is maximally entangled} and therefore has
	overlap at most \(1/d\) with any separable state.  Jensen's inequality gives
	\[
	D_{\rm KL}(p\Vert q)
	\ge \sum_jp_j\log(dp_j)
	= D(\rho_b\Vert\sigma_b).
	\]
	Thus \(E_{\ALL}(\rho_b)\ge D(\rho_b\Vert\sigma_b)\), while
	\(E_{\ALL}(\rho_b)\le E_R(\rho_b)\).  Hence
	\[
	E_{\ALL}(\rho_b)=1-\frac1d.
	\]
\end{proof}

\begin{lemma}
\label{lem:block-stab}
For the above \(\rho_b\),
\begin{equation}
\label{eq:sm-block-stabilizer-visible}
E_{\STAB}(\rho_b)
\le \frac{\log d}{d}
= \frac{b\log 3}{3^b}.
\end{equation}
\end{lemma}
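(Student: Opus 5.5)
Since $\sigma_b=\Delta(\rho_b)$ is separable and both $\rho_b$ and $\sigma_b$ are mixtures of stabilizer states (hence PWF), I would start from
\[
E_{\STAB}(\rho_b)\le D_{\STAB}(\rho_b\Vert\sigma_b)\le D_{\rm KL}(W_{\rho_b}\Vert W_{\sigma_b}),
\]
invoking Lemma~\ref{lem:pwf-dpi}. It then remains to compute the two Wigner distributions on the $2b$-qutrit phase space $V_{2b}$ and evaluate the KL divergence. This mirrors the $b=1$ computation in Theorem~\ref{thm:rho-star-gap}, which gave $\frac13\log3=\log d/d$ at $d=3$. That agreement is a useful consistency check on the target value.

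\textbf{Step 1: reduce to the seed.} The repetition isometry $V_{\rm rep}$ is a Clifford-type stabilizer isometry: adjoining $\ket0$ and applying CNOT-like $\ket{x,y}\mapsto\ket{x,y+x}$. Its effect on Wigner functions should be a linear embedding of phase space. Concretely, $W_{V\tau V^\dagger}(x,x;\,z_A,z_B)$ is $W_\tau(x,z_A+z_B)$ spread uniformly over the $d$ choices of $(z_A,z_B)$ with given sum, up to normalization. Both $\rho_b=V\gamma_bV^\dagger$ and $\sigma_b=V(\Id/d)V^\dagger$ are pushed forward by the same stochastic (indeed injective-with-uniform-fiber) map. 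The KL divergence is therefore invariant, and
\[
D_{\rm KL}(W_{\rho_b}\Vert W_{\sigma_b})=D_{\rm KL}(W_{\gamma_b}\Vert W_{\Id/d}).
\]
Alternatively, I would verify this invariance directly by writing the Gross phase-point operators for the CNOT-plus-ancilla map. Embedding a stabilizer ancilla $\ket0$ multiplies the Wigner function by the product with the ancilla's Wigner function, which is uniform on a line. The Clifford then permutes points, so KL is unchanged in both steps.

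\textbf{Step 2: compute on the seed.} We have $W_{\Id/d}=1/d^2$ uniformly on $V_b$, which has $d^2$ points. By property (iii), $W_{\zeta_a}$ is uniform ($1/d$) on $L_a$. By property (iv), the lines $L_a$ pairwise intersect only at $0$. Hence
\[
W_{\gamma_b}=\frac1d\sum_a W_{\zeta_a}
\]
equals $\frac1d\cdot d\cdot\frac1d=1/d$ at the origin and $1/d^2$ on each of the $d(d-1)$ nonzero points lying on some $L_a$. The $d$ lines of size $d$ sharing the origin cover $d(d-1)+1$ points. The remaining $d-1$ points, the vertical line $\{(0,z)\}$ minus the origin, have weight $0$. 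The divergence is then
\[
\frac1d\log\frac{1/d}{1/d^2}+d(d-1)\frac1{d^2}\log 1=\frac{\log d}{d}=\frac{b\log3}{3^b}.
\]

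The main obstacle is Step 1: making rigorous that the Wigner function of the repetition-encoded state is a KL-preserving relabelling of the seed's Wigner function. If that is messy, I would bypass it. I would compute $W_{\rho_b}$ directly as a mixture of the Wigner functions of the stabilizer states $\ket{\xi_a^{(b)}}$, each uniform on its $2b$-dimensional Lagrangian subspace. I would then check that these $d$ subspaces pairwise meet exactly in the common $b$-dimensional subspace spanned by the $Z_AZ_B^{-1}$-type stabilizers, while $W_{\sigma_b}$ is uniform on their span, of size $d^3$. The same counting then gives weight $1/d^2$ on the $d^2$ common points, versus $1/d^3$ under $W_{\sigma_b}$. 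This yields $\frac{\log d}{d}$ again. If this conjectured intersection structure does not hold exactly, the count is wrong, so it is the point to check most carefully.
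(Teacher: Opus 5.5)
Your main route is correct and is essentially the paper's proof. The paper likewise bounds $E_{\STAB}(\rho_b)$ by $D_{\rm KL}(W_{\rho_b}\Vert W_{\sigma_b})$ via Lemma~\ref{lem:pwf-dpi}, strips the repetition encoding with a Clifford decoder $U\ket{x,x}=\ket{x}\ket{0^b}$ (a phase-space permutation plus an identical product factor, i.e.\ your ancilla-plus-Clifford argument), and then does the same origin-only count on the seed $\gamma_b$. There is one slip, in your fallback only: the $d$ subspaces share a $b$-dimensional subspace, which has $d$ points, not $d^2$. With $d$ common points of weight $1/d^2$ against $1/d^3$ you get $d\cdot d^{-2}\log d=\log d/d$, whereas $d^2$ such points would carry all the mass and give $\log d$.
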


\begin{proof}
Since \(\sigma_b\) is separable and both states are PWF,
Lemma~\ref{lem:pwf-dpi} gives the Wigner--KL upper bound
\begin{equation*}
E_{\STAB}(\rho_b)
\le D_{\rm KL}(W_{\rho_b}\Vert W_{\sigma_b}).
\end{equation*}
To evaluate this KL divergence, let \(U\) be a Clifford
decoder for the repetition code, chosen so that
\[
U\ket{x,x}=\ket{x}\ket{0^b}.
\]
Then
{
\begin{equation}
\label{eq:sm-decoded-block-states}
U\rho_bU^\dagger=\gamma_b\otimes\ket{0^b}\bra{0^b},
\qquad
U\sigma_bU^\dagger=({\Id}_d/d)\otimes\ket{0^b}\bra{0^b}.
\end{equation}
}
By Clifford covariance of the Wigner representation, conjugation by \(U\) only
permutes phase-space points and therefore leaves classical KL divergence
unchanged.  Since the second tensor factor is identical in the two decoded
states, additivity of KL divergence for product distributions gives
{
\begin{equation*}
D_{\rm KL}(W_{\rho_b}\Vert W_{\sigma_b})
=D_{\rm KL}(W_{\gamma_b}\Vert W_{{\Id}_d/d}).
\end{equation*}
}

The reference seed \({\Id}_d/d\) has uniform Wigner distribution on
\(d^2=3^{2b}\) phase-space points.  { The seed \(\gamma_b\) is the equal mixture of}
the uniform distributions on the Lagrangian subspaces \(L_a\).  Since these
subspaces intersect only at the origin,
{
\begin{equation}
\label{eq:sm-block-seed-wigner-function}
W_{\gamma_b}(0)=1/d,\qquad
W_{\gamma_b}(u)=1/d^2
\end{equation}
}
for \(u\ne0\) lying in one of the subspaces \(L_a\), and { \(W_{\gamma_b}(u)=0\)}
elsewhere.  All nonzero supported points have the same weight as the uniform
reference, so only the origin contributes to the KL divergence:
{
\begin{equation}
\label{eq:sm-block-wigner-kl}
D_{\rm KL}(W_{\gamma_b}\Vert W_{{\Id}_d/d})
=\frac1d\log d.
\end{equation}
}
\end{proof}

\begin{theorem}[Magic-free asymptotic hiding of visible entanglement]
\label{thm:amplified}
For \(N\ge2\), set
\begin{equation}
\label{eq:sm-amplified-state-construction}
\begin{gathered}
b_N=\lceil2\log_3 N\rceil,\qquad
r_N=\lfloor N/b_N\rfloor,\\
\widehat\rho_N=
\rho_{b_N}^{\otimes r_N}
\otimes(\ket{00}\bra{00})^{\otimes(N-b_Nr_N)}.
\end{gathered}
\end{equation}
Then \(\widehat\rho_N\) is a mixed stabilizer state on \(N\) qutrits per party,
and
\begin{equation}
\label{eq:sm-amplified-visibility-separation}
\begin{aligned}
E_{\ALL}(\widehat\rho_N)&=\Omega(N/\log N),\\
E_{\STAB}(\widehat\rho_N)&\le \frac{\log 3}{N}.
\end{aligned}
\end{equation}
\end{theorem}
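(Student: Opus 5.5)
First I check that \(\widehat\rho_N\) is a mixed stabilizer state. Each \(\rho_{b_N}\) is an equal mixture of the stabilizer states \(\ket{\xi^{(b)}_a}\), and \(\ket{00}\) is a stabilizer state. A tensor product of mixtures of pure stabilizer states is again a mixture of products of pure stabilizer states, and products of stabilizer states are stabilizer states. The system size is \(b_Nr_N+(N-b_Nr_N)=N\) qutrits per party. For the asymptotic bound I use the elementary estimates \(3^{b_N}\ge N^2\), \(b_N\le 2\log_3N+1\), and \(r_N\ge N/b_N-1\), so \(r_N=\Omega(N/\log N)\) for \(N\ge2\).

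\emph{Stabilizer upper bound.} I take the separable PWF reference \(\widehat\sigma_N:=\sigma_{b_N}^{\otimes r_N}\otimes(\ket{00}\bra{00})^{\otimes(N-b_Nr_N)}\). It is a product of dephased, hence separable, stabilizer mixtures, so it is both separable and PWF; \(\widehat\rho_N\) is PWF for the same reason. Lemma~\ref{lem:pwf-dpi} then gives \(E_{\STAB}(\widehat\rho_N)\le D_{\rm KL}(W_{\widehat\rho_N}\Vert W_{\widehat\sigma_N})\). The tensor-product qutrit Wigner function factorizes over subsystems, because the phase-point operators factorize. By additivity of KL divergence, and since the padding factors coincide, this equals \(r_N\,D_{\rm KL}(W_{\rho_{b_N}}\Vert W_{\sigma_{b_N}})=r_Nb_N\log3/3^{b_N}\), where the last equality is the computation in the proof of Lemma~\ref{lem:block-stab}. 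Finally \(r_Nb_N\le N\) and \(3^{b_N}\ge N^2\), which gives the bound \(\log3/N\).

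\emph{Unrestricted lower bound.} This step is the main obstacle, because \(E_{\ALL}\) involves an infimum over \emph{all} separable states on the full system, not just product ones. I would reuse the witness argument of Lemma~\ref{lem:block-all} globally. First I reduce to the blocks: discarding the padding is a local partial trace, so it maps separable states to separable states, and measuring only the block systems is a valid POVM on the full system. Hence \(E_{\ALL}(\widehat\rho_N)\ge E_{\ALL}(\rho_{b_N}^{\otimes r_N})\). Next I measure in the product eigenbasis \(\{\ket{\varphi_{j_1}}\otimes\cdots\otimes\ket{\varphi_{j_{r}}}\}\) of \(\rho_{b_N}^{\otimes r_N}\), with eigenvalues \(P_{\vec j}=\prod p_{j_k}\), where \(d=3^{b_N}\). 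The key point is that \(\rho_{b_N}^{\otimes r_N}\) is a mixture of products of the maximally entangled states \(\ket{\xi^{(b)}_{a_1}}\otimes\cdots\), each of which is maximally entangled of Schmidt rank \(d^{r_N}\). Therefore \(\Tr(\rho_{b_N}^{\otimes r_N}\eta)\le d^{-r_N}\) for every separable \(\eta\) across the global \(A|B\) cut. With \(q_{\vec j}\) the outcome probabilities on \(\eta\), Jensen's inequality gives
\[
D_{\rm KL}(P\Vert q)\ge-\log\sum_{\vec j}P_{\vec j}q_{\vec j}\cdot\frac{1}{\,}\ \Longrightarrow\ D_{\rm KL}(P\Vert q)\ge\sum_{\vec j}P_{\vec j}\log(d^{r_N}P_{\vec j}),
\]
where the second form follows exactly as in Lemma~\ref{lem:block-all}: \(\sum P\log(P/q)\ge -\log\sum P q\cdot(\ldots)\) is applied to the ratio \(q/P\) weighted by \(P\), which yields \(\ge -\log(d^{r_N}\sum P q)+\sum P\log(d^{r_N}P)\ge\sum P\log(d^{r_N}P)\). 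By additivity, this right-hand side equals \(r_N\,D(\rho_{b_N}\Vert\sigma_{b_N})=r_N(1-3^{-b_N})\).

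Combining the two bounds, \(E_{\ALL}(\widehat\rho_N)\ge r_N(1-3^{-b_N})\ge\tfrac12 r_N=\Omega(N/\log N)\). One remaining check is that \(r_N\ge1\), i.e. that \(N\ge b_N\), for all but finitely many \(N\). The small-\(N\) cases do not matter for the \(\Omega\) statement.
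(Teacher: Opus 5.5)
Your proposal is correct and follows essentially the same route as the paper. On the stabilizer side you use the product separable PWF reference with Lemma~\ref{lem:pwf-dpi} and KL additivity; on the unrestricted side you restrict to the blocks and combine the eigenbasis measurement, the \(d^{-r_N}\) overlap bound and Jensen's inequality. One cosmetic fix: your first displayed Jensen inequality is garbled, and Jensen should be applied to \(\log(d^{r_N}q_{\vec j})\) weighted by \(P_{\vec j}\) rather than to \(q/P\); the final chain you state, \(D_{\rm KL}(P\Vert q)\ge\sum_{\vec j}P_{\vec j}\log(d^{r_N}P_{\vec j})-\log\bigl(d^{r_N}\sum_{\vec j}P_{\vec j}q_{\vec j}\bigr)\ge\sum_{\vec j}P_{\vec j}\log(d^{r_N}P_{\vec j})\), is correct.
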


\begin{proof}
The argument of Lemma~\ref{lem:block-all} tensorizes. Indeed,
\(\rho_{b_N}^{\otimes r_N}\) remains maximally correlated and is a
mixture of maximally entangled states of local dimension
\(3^{b_Nr_N}\). Hence its overlap with every separable state is at
most \(3^{-b_Nr_N}\), and the same eigenbasis-measurement and Jensen
argument gives the matching lower bound. Together with
\(E_{\ALL}\le E_R\), this yields
\begin{equation}
	\label{eq:sm-amplified-unrestricted-product}
	E_{\ALL}(\rho_{b_N}^{\otimes r_N})
	=D(\rho_{b_N}^{\otimes r_N}\Vert
	\sigma_{b_N}^{\otimes r_N})
	=r_N(1-3^{-b_N}).
\end{equation}

For the stabilizer-visible upper bound, use the product separable reference
\(\sigma_{b_N}^{\otimes r_N}\) together with the product padding.  Lemma~\ref{lem:pwf-dpi}
and additivity of classical KL divergence give
\begin{equation}
\label{eq:sm-amplified-stabilizer-visible-bound}
E_{\STAB}(\widehat\rho_N)
\le r_N\frac{b_N\log 3}{3^{b_N}}
\le \frac{\log 3}{N},
\end{equation}
because \(r_Nb_N\le N\) and \(3^{b_N}\ge N^2\).
\end{proof}

\section{Proofs of Corollary 5 and Theorem 6: stabilizer entanglement distillation}
\label{sec:distill}

{ We prove the Haar one-shot bound
\eqref{eq:haar-distill-bound}, the asymptotic converse
\eqref{eq:asymp-converse-main}, and the mixed-state separation
\eqref{eq:distill-separation} of the main text.}

The first elementary fact is the monotonicity of stabilizer-visible entanglement under \(\LSCC\).
\begin{lemma}
\label{lem:lscc-monotone}
For every \(\LSCC\) channel \(\Lambda\),
\begin{equation}
\label{eq:sm-lscc-monotonicity}
E_{\STAB}(\Lambda(\rho))
\le E_{\STAB}(\rho).
\end{equation}
\end{lemma}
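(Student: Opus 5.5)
The proof follows the standard pattern for monotones defined by restricted measured relative entropy: pull back the measurement through the channel, and push forward the separable reference. Fix an \(\LSCC\) channel \(\Lambda\) and an arbitrary separable \(\sigma\). The plan is to prove
\[
D_{\STAB}(\Lambda(\rho)\Vert\Lambda(\sigma))\le D_{\STAB}(\rho\Vert\sigma)
\]
and \(\Lambda(\sigma)\in\SEP\). Then
\[
E_{\STAB}(\Lambda(\rho))\le D_{\STAB}(\Lambda(\rho)\Vert\Lambda(\sigma))\le D_{\STAB}(\rho\Vert\sigma),
\]
and taking the infimum over \(\sigma\) gives Eq.~\eqref{eq:sm-lscc-monotonicity}.

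\textbf{Step 1: the output reference stays separable.} This holds because \(\LSCC\subseteq\LOCC\), and \(\LOCC\) maps separable states to separable states.

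\textbf{Step 2: the restricted data-processing inequality.} Take any \(\mathcal M=\{E_y\}\in\STAB\) acting on the output systems of \(\Lambda\). The composed POVM \(\{\Lambda^\dagger(E_y)\}_y\) is realized operationally by running the \(\LSCC\) protocol and then the stabilizer measurement circuit. I claim this composite lies in \(\STAB\) on the input systems, per Definition~\ref{def:stab-measurement}. The reason is that each ingredient of an \(\LSCC\) protocol is itself an allowed ingredient of a global stabilizer circuit:
\begin{itemize}
\item local Clifford unitaries are Clifford unitaries;
\item local stabilizer ancillas are stabilizer-state preparations (a product of stabilizer states is stabilizer);
\item local Pauli measurements are Pauli measurements on the joint system;
\item Pauli corrections are conditioned Clifford unitaries;
\item discarding is allowed;
\item classical communication between parties is simply classical feed-forward inside one global stabilizer circuit.
\end{itemize}
Appending the circuit of \(\mathcal M\) and retaining only the final post-processed outcome (marginalizing the protocol's intermediate transcript) stays within the class. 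Since \(\mu_{\mathcal M}(\Lambda(\rho))=\mu_{\mathcal M\circ\Lambda}(\rho)\), and likewise for \(\sigma\), each KL term on the left is one of the terms in the supremum defining \(D_{\STAB}(\rho\Vert\sigma)\). Taking the supremum over \(\mathcal M\) gives the inequality.

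\textbf{Main obstacle.} The only delicate point is Step 2's closure claim when the channel output is not obviously a stabilizer object, for example a subnormalized branch or an output dimension matching the qudit structure. I expect to handle this by modeling \(\Lambda\) as a finite adaptive protocol in the sense of Definition~\ref{def:stab-operation}, whose final quantum output is a set of qudits. A stabilizer measurement on those qudits then concatenates with the protocol into one finite adaptive stabilizer protocol. Outcome probabilities combine by summing over the protocol's branches, which is a classical post-processing of the full transcript. No Wigner-function argument is needed.
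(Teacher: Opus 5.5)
Your proof is correct and matches the paper's argument. The paper likewise uses two facts: an $\LSCC$ channel maps separable states to separable states, and $\mathcal M\circ\Lambda$ is a stabilizer measurement whenever $\mathcal M$ is. From these it gets $D_{\STAB}(\Lambda(\rho)\Vert\Lambda(\sigma))\le D_{\STAB}(\rho\Vert\sigma)$ and takes the infimum over $\sigma\in\SEP$. Your ingredient-by-ingredient check that $\STAB$ is closed under precomposition with $\LSCC$ just spells out a step the paper asserts in one line.
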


\begin{proof}
An \(\LSCC\) channel maps separable states to separable states.  If
\(\mathcal M\in\STAB\) is a stabilizer measurement on the output, then
\(\mathcal M\circ\Lambda\) is a stabilizer measurement on the input.  Hence, for
every \(\sigma\in\SEP\),
\begin{equation*}
D_{\STAB}(\Lambda(\rho)\Vert\Lambda(\sigma))
\le D_{\STAB}(\rho\Vert\sigma).
\end{equation*}
Therefore
\begin{align*}
E_{\STAB}(\Lambda(\rho))
&=\inf_{\tau\in\SEP}D_{\STAB}(\Lambda(\rho)\Vert\tau)\\
&\le \inf_{\sigma\in\SEP}
D_{\STAB}(\Lambda(\rho)\Vert\Lambda(\sigma))\\
&\le \inf_{\sigma\in\SEP}D_{\STAB}(\rho\Vert\sigma)
=E_{\STAB}(\rho),
\end{align*}
where the first inequality uses \(\Lambda(\sigma)\in\SEP\).
\end{proof}

{ The second elementary fact establishes the normalization of
\(E_{\STAB}\) on \(\Phi_p^{\otimes m}\) and provides a robust lower bound for
states close to this target.}
\begin{lemma}
\label{lem:bell-continuity}
{ For every \(m\ge0\),}
\begin{equation}
\label{eq:sm-maximally-entangled-normalization}
E_{\STAB}({\Phi_p^{\otimes m}})=m\log p.
\end{equation}
More generally, if \(\frac12\|\tau-{\Phi_p^{\otimes m}}\|_1\le\epsilon<1\), then
\begin{equation}
\label{eq:sm-maximally-entangled-robust-bound}
E_{\STAB}(\tau)\ge (1-\epsilon)m\log p-\log 2.
\end{equation}
\end{lemma}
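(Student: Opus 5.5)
The plan is to prove the lower bound (S\ref{eq:sm-maximally-entangled-robust-bound}) through a single explicit stabilizer test. The normalization (S\ref{eq:sm-maximally-entangled-normalization}) then follows from a matching upper bound.

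\emph{Lower bound.} Take the Bell stabilizer test: the two-outcome measurement $\{\Phi_p^{\otimes m},\Id-\Phi_p^{\otimes m}\}$. This is a stabilizer measurement. Each factor $\Phi_p$ is the joint eigenspace of the commuting Paulis $X_AX_B$ and $Z_AZ_B^{-1}$. Measuring these $2m$ Paulis, or equivalently applying the Clifford Bell-decoder and then computational-basis measurements, and coarse-graining to ``all syndromes trivial'' realizes the test. For any separable $\sigma$ the standard bound $\Tr(\Phi_p^{\otimes m}\sigma)\le p^{-m}$ holds, since the overlap of a product vector with a maximally entangled state of Schmidt rank $p^m$ is at most $p^{-m}$, and convexity extends this to all of $\SEP$.

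For $\tau$ with $\frac12\|\tau-\Phi_p^{\otimes m}\|_1\le\epsilon$, we have $x:=\Tr(\Phi_p^{\otimes m}\tau)\ge1-\epsilon$. Write $y:=\Tr(\Phi_p^{\otimes m}\sigma)\le p^{-m}$. Then
\[
D_{\rm KL}\big((x,1-x)\Vert(y,1-y)\big)\ge -h(x)+x\log\frac1y\ge (1-\epsilon)m\log p-\log 2,
\]
using $D_{\rm KL}\ge -h(x)-x\log y-(1-x)\log(1-y)$ with the last term nonnegative and $h\le 1=\log2$ in bits. Taking the infimum over $\sigma$ gives the robust bound. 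At $\epsilon=0$ the same test gives exactly $-\log y\ge m\log p$, since $x=1$ kills the entropy term; this yields $E_{\STAB}(\Phi_p^{\otimes m})\ge m\log p$.

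\emph{Upper bound.} Use the separable reference $\sigma=\Delta(\Phi_p^{\otimes m})=p^{-m}\sum_x\ket{xx}\bra{xx}$. Then
\[
E_{\STAB}(\Phi_p^{\otimes m})\le D_{\STAB}(\Phi_p^{\otimes m}\Vert\sigma)\le D_{\ALL}(\Phi_p^{\otimes m}\Vert\sigma)\le D(\Phi_p^{\otimes m}\Vert\sigma),
\]
by the data-processing bound of measured relative entropy by the quantum relative entropy. The last quantity equals $\log p^m-S(\sigma)+\ldots$, and a direct computation via $-\Tr\rho\log\sigma=m\log p$ and $S(\Phi)=0$ gives $m\log p$. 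Alternatively one may use the separable reference $\Id/p^{2m}$, but that gives $2m\log p$, which is too weak, so the dephased reference is the right choice.

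The main point needing care is verifying that the projector onto $\Phi_p^{\otimes m}$ is genuinely implementable within $\STAB$ as defined. This holds because the generators above form a complete commuting Pauli set whose joint measurement is a stabilizer measurement followed by classical post-processing. The remaining steps are routine.
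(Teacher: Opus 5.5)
Your proposal is correct and follows essentially the same route as the paper. For the lower bounds you both use the Bell stabilizer test $\{\Phi_p^{\otimes m},\Id-\Phi_p^{\otimes m}\}$, the separable overlap bound $p^{-m}$, and the estimate $D_{\rm KL}\ge\alpha\log(1/\beta)-h_2(\alpha)$; for the upper bound you both use $E_{\STAB}\le E_{\ALL}\le E_R$ with a separable reference attaining $m\log p$. Your explicit dephased reference and your check that the test lies in $\STAB$ via the commuting Paulis $X_AX_B$ and $Z_AZ_B^{-1}$ only fill in details the paper leaves implicit. The stray expression ``$\log p^m-S(\sigma)+\ldots$'' should read $S(\sigma)-S(\Phi_p^{\otimes m})$, but your direct computation giving $m\log p$ is right.
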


\begin{proof}
{ The two-outcome stabilizer test} \(\{{\Phi_p^{\otimes m}},{\Id}-{\Phi_p^{\otimes m}}\}\) accepts
\({\Phi_p^{\otimes m}}\) with probability one, while every separable state has overlap at
most \(p^{-m}\) with \({\Phi_p^{\otimes m}}\).  Thus
\(E_{\STAB}({\Phi_p^{\otimes m}})\ge m\log p\).  The reverse inequality follows from
\(E_{\STAB}\le E_{\ALL}\le E_R\) and the standard separable reference giving
\(E_R({\Phi_p^{\otimes m}})=m\log p\).

If \(\frac12\|\tau-{\Phi_p^{\otimes m}}\|_1\le\epsilon\), the same test accepts \(\tau\)
with probability \(\alpha\ge1-\epsilon\).  For any separable \(\sigma\), its
acceptance probability \(\beta\) satisfies \(\beta\le p^{-m}\).  Hence
\begin{equation*}
D_{\rm KL}((\alpha,1-\alpha)\Vert(\beta,1-\beta))
\ge \alpha\log\frac1\beta-h_2(\alpha)
\ge (1-\epsilon)m\log p-\log 2,
\end{equation*}
where \(h_2\) is the binary entropy and the last inequality uses
\(h_2(\alpha)\le\log 2\).  This gives the claimed continuity lower bound.
\end{proof}

\begin{corollary}[Haar one-shot distillation]
{
For every fixed \(0<\epsilon<1\), there exist constants
\(C_{p,\epsilon},c_p>0\) such that, for Haar-random \(\ket\psi\) on
\(\mathbb C^{p^n}\otimes\mathbb C^{p^n}\), with
\(\psi:=\ket\psi\!\bra\psi\),
\begin{equation}
\label{eq:sm-haar-lscc-distillation}
\Prb\!\left[E_{\LSCC}^{D,(1),\epsilon}(\psi)\le C_{p,\epsilon}\right]
\ge 1-e^{-c_p n^2}.
\end{equation}
}
\end{corollary}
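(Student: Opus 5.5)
The bound follows by chaining three earlier results: the Haar bound of Theorem~\ref{thm:haar-stab-main}, monotonicity under \(\LSCC\) (Lemma~\ref{lem:lscc-monotone}), and the robust normalization of Lemma~\ref{lem:bell-continuity}. All three are deterministic except the first, so the whole argument runs on a single high-probability event.

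\textbf{Step 1: the good event.} Apply Theorem~\ref{thm:haar-stab-main} with \(K=p^{2n}\). Since \(\log^2 K=4n^2\log^2 p\), the event \(\mathcal G:=\{E_{\STAB}(\psi)\le C_p\}\) has probability at least \(1-\exp[-c_p\cdot 4n^2\log^2p]\). After renaming constants this is at least \(1-e^{-c_p n^2}\). All remaining steps hold for every \(\psi\in\mathcal G\).

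\textbf{Step 2: an upper bound on any achievable \(m\).} Fix \(\psi\in\mathcal G\) and any achievable \(m\). That is, take \(\Lambda\in\LSCC\) with \(\tfrac12\|\Lambda(\psi)-\Phi_p^{\otimes m}\|_1\le\epsilon\).
\begin{itemize}
\item Lemma~\ref{lem:lscc-monotone} gives \(E_{\STAB}(\Lambda(\psi))\le E_{\STAB}(\psi)\le C_p\).
\item Lemma~\ref{lem:bell-continuity}, applied with \(\tau=\Lambda(\psi)\) and using \(\epsilon<1\), gives \(E_{\STAB}(\Lambda(\psi))\ge(1-\epsilon)m\log p-\log 2\).
\end{itemize}
Combining the two inequalities,
\[
m\log p\le \frac{C_p+\log 2}{1-\epsilon}=:C_{p,\epsilon}.
\]

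\textbf{Step 3: take the supremum.} The bound in Step 2 is uniform over achievable \(m\) and \(\Lambda\). Taking the supremum in Definition~\ref{def:distill-one-shot} therefore gives \(E_{\LSCC}^{D,(1),\epsilon}(\psi)\le C_{p,\epsilon}\) on \(\mathcal G\), which proves the corollary.

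\textbf{Points to check.} The step needing the most care is the applicability of Lemma~\ref{lem:bell-continuity}: its lower bound relies on the two-outcome test \(\{\Phi_p^{\otimes m},\Id-\Phi_p^{\otimes m}\}\) being a stabilizer measurement on the \(2m\) output qudits. It is: Bob applies an inverse Bell-preparation Clifford and then measures in the computational basis. One should also confirm that Lemma~\ref{lem:lscc-monotone} covers channels whose output is a different system, namely \(m\) qudits per party. It does, because composing a stabilizer measurement with an \(\LSCC\) map yields a stabilizer measurement on the input, and \(\LSCC\) maps separable states to separable states.
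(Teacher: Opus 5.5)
Your proposal is correct and follows the paper's proof essentially step for step. You work on the high-probability event of Theorem~\ref{thm:haar-stab-main}, chain Lemma~\ref{lem:lscc-monotone} with Lemma~\ref{lem:bell-continuity} to get $m\log p\le (C_p+\log 2)/(1-\epsilon)$, take the supremum, and absorb the factor $(2\log p)^2$ coming from $\log^2 K$ into $c_p$. One wording fix: the Bell-basis Clifford implementing the test $\{\Phi_p^{\otimes m},\Id-\Phi_p^{\otimes m}\}$ acts jointly on Alice's and Bob's qudits, so Bob cannot apply it alone, but this is harmless because $\STAB$ is a global measurement class.
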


\begin{proof}
{
Work on the event in Theorem~\ref{thm:haar-stab-main}, on which
\(E_{\STAB}(\psi)\le C_p\).  Suppose that an \(\LSCC\) protocol produces a
state \(\tau\) satisfying
\(\frac12\|\tau-\Phi_p^{\otimes m}\|_1\le\epsilon\).  Lemmas~\ref{lem:lscc-monotone}
and \ref{lem:bell-continuity} give
\[
C_p\ge E_{\STAB}(\psi)\ge E_{\STAB}(\tau)
\ge (1-\epsilon)m\log p-\log 2.
\]
Since \(p\) and \(\epsilon\) are fixed, this bounds \(m\log p\) by a constant
\(C_{p,\epsilon}\).  Taking the supremum over achievable \(m\) proves the
claim.  The probability estimate follows from Theorem~\ref{thm:haar-stab-main},
using \(\revK=p^{2n}\) and absorbing the fixed factor \((2\log p)^2\) into
\(c_p\).
}
\end{proof}

\begin{lemma}[Asymptotic relative-entropy converse]
\label{lem:asymp-converse}
For any state \(\rho\),
\begin{equation}
\label{eq:sm-asymptotic-relative-entropy-converse}
E_{\LSCC}^{D}(\rho)
\le \limsup_{M\to\infty}\frac1M E_{\STAB}(\rho^{\otimes M}).
\end{equation}
\end{lemma}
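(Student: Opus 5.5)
The converse follows by combining Lemma~\ref{lem:lscc-monotone} with the robust normalization in Lemma~\ref{lem:bell-continuity}, applied copy-by-copy to $\rho^{\otimes M}$ and then normalized by $M$.

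Fix $\epsilon\in(0,1)$ and $M\ge1$, and suppose an $\LSCC$ channel $\Lambda_M$ outputs $\tau_M=\Lambda_M(\rho^{\otimes M})$ with $\frac12\|\tau_M-\Phi_p^{\otimes m_M}\|_1\le\epsilon$. Here $m_M\log p$ is any achievable value in the definition of $E_{\LSCC}^{D,(1),\epsilon}(\rho^{\otimes M})$.

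First, $\LSCC$ is closed under the relevant composition: a protocol acting on $M$ copies is still an $\LSCC$ channel on the joint input. Lemma~\ref{lem:lscc-monotone} therefore gives
\[
E_{\STAB}(\tau_M)\le E_{\STAB}(\rho^{\otimes M}).
\]
Second, Lemma~\ref{lem:bell-continuity} gives
\[
E_{\STAB}(\tau_M)\ge(1-\epsilon)\,m_M\log p-\log 2.
\]
Combining the two and taking the supremum over achievable $m_M$ yields
\[
E_{\LSCC}^{D,(1),\epsilon}(\rho^{\otimes M})\le\frac{E_{\STAB}(\rho^{\otimes M})+\log2}{1-\epsilon}.
\]

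Dividing by $M$ and letting $M\to\infty$, the term $\log 2/M$ vanishes. Since the inner limit in Eq.~\eqref{eq:sm-asymptotic-distillable-entanglement} is bounded above by the corresponding $\limsup$, we get
\[
\lim_{M\to\infty}\frac1M E_{\LSCC}^{D,(1),\epsilon}(\rho^{\otimes M})\le\frac{1}{1-\epsilon}\limsup_{M\to\infty}\frac1M E_{\STAB}(\rho^{\otimes M}).
\]
If the right-hand $\limsup$ is infinite the claim is trivial, so assume it is finite. Letting $\epsilon\to0$ sends the prefactor $(1-\epsilon)^{-1}$ to $1$, which gives Eq.~\eqref{eq:sm-asymptotic-relative-entropy-converse}.

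There are two minor points to check. First, the test measurement $\{\Phi_p^{\otimes m},\Id-\Phi_p^{\otimes m}\}$ used inside Lemma~\ref{lem:bell-continuity} must really be in $\STAB$. It is: apply a Clifford Bell-basis decoder, perform computational-basis Pauli measurements, and coarse-grain the outcomes, which is already asserted there. Second, the output dimension of $\tau_M$ must match a tensor power of $p$-dimensional qudits so that Lemma~\ref{lem:bell-continuity} applies; this holds by the definition of the one-shot quantity.

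The only real obstacle is that the $\log 2$ penalty and the $(1-\epsilon)$ factor are not uniform in $M$ in the wrong direction. They are not: both are constants, so they disappear under the $1/M$ normalization and the final $\epsilon\to0$ limit.
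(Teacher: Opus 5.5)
Your proposal is correct and is essentially the paper's own argument: you combine monotonicity of \(E_{\STAB}\) under \(\LSCC\) (Lemma~\ref{lem:lscc-monotone}) with the robust Bell-test bound of Lemma~\ref{lem:bell-continuity} on the distilled output, then divide by \(M\), let \(M\to\infty\), and finally let \(\epsilon\to0\). One wording fix: nothing is applied ``copy-by-copy''; as in the paper, both lemmas are used once, on \(\rho^{\otimes M}\) and its distilled output.
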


\begin{proof}
Let a protocol distill { \(m_M\) copies of \(\Phi_p\)} from
\(\rho^{\otimes M}\) with error \(\epsilon<1/2\), and write
{ \(\rho_M^{\rm out}=\Lambda_M(\rho^{\otimes M})\).}  Lemma~\ref{lem:lscc-monotone} and
Lemma~\ref{lem:bell-continuity} give
{
\begin{equation}
\label{eq:sm-distillation-converse-chain}
E_{\STAB}(\rho^{\otimes M})
\ge E_{\STAB}(\rho_M^{\rm out})
\ge (1-\epsilon)m_M\log p-\log 2.
\end{equation}
}
Thus
\begin{equation*}
\frac{m_M\log p}{M}\le
\frac{1}{1-\epsilon}
\left(
\frac1M E_{\STAB}(\rho^{\otimes M})
\;+\;
\frac{\log 2}{M}
\right).
\end{equation*}
Taking the limsup over \(M\) and then \(\epsilon\to0\) proves the claim.
\end{proof}

\begin{theorem}[Magic-free asymptotic hiding of distillable entanglement]
For the { mixed stabilizer states} \(\widehat\rho_N\) of Theorem~\ref{thm:amplified},
\begin{equation}
\label{eq:sm-distillation-separation}
E_{\LOCC}^{D}(\widehat\rho_N)=\Omega(N/\log N),
\qquad
E_{\LSCC}^{D}(\widehat\rho_N)\le \frac{\log 3}{N}.
\end{equation}
\end{theorem}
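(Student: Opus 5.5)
The proof splits into the two displayed claims, and each reduces to results already established.

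\emph{LOCC lower bound.} I would use the hashing inequality of Devetak and Winter, $E_{\LOCC}^{D}(\rho)\ge I(A\rangle B)_\rho=S(\rho_B)-S(\rho_{AB})$, applied block by block.

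For one block $\rho_b$ with $d=3^b$, the state is maximally correlated. Its reduced state on $B$ is the dephased diagonal $\Id_d/d$, since $|\langle x|\zeta_a\rangle|^2=1/d$. Hence $S(\rho_B)=\log d$. Lemma~\ref{lem:block-all} gives $S(\rho_b)=\log d-(1-1/d)$, so
\[
I(A\rangle B)_{\rho_b}=1-3^{-b}.
\]

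Next I would use superadditivity of $E_{\LOCC}^D$ on tensor products: run independent protocols on each factor of $\widehat\rho_N^{\otimes M}$ and concatenate the outputs. One technicality is that hashing produces qubit-type pairs, while the target here is $\Phi_3$. Since we are in the asymptotic regime, I would convert rates by a standard argument: maximally entangled states of different dimensions are interconvertible by LOCC at rate one in bits, with vanishing error. The padding factor $\ket{00}$ contributes nothing and costs nothing. This yields
\[
E_{\LOCC}^D(\widehat\rho_N)\ge r_N(1-3^{-b_N}).
\]
Because $r_N=\lfloor N/\lceil 2\log_3N\rceil\rfloor$, this is $\Omega(N/\log N)$.

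\emph{LSCC upper bound.} I would apply Lemma~\ref{lem:asymp-converse} and then Lemma~\ref{lem:pwf-regularized}. For the latter, take $\rho=\widehat\rho_N$ and the separable PWF reference
\[
\sigma=\sigma_{b_N}^{\otimes r_N}\otimes(\ket{00}\bra{00})^{\otimes(N-b_Nr_N)}.
\]
This reference is a mixture of product computational-basis states, so it is separable and PWF. The state $\widehat\rho_N$ is a mixed stabilizer state, hence also PWF. The two lemmas give
\[
E_{\LSCC}^D(\widehat\rho_N)\le D_{\rm KL}(W_{\widehat\rho_N}\Vert W_\sigma).
\]

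The Wigner function factorizes over tensor products, so the KL divergence is additive. The padding factors are identical in the two states and contribute zero. Each block contributes $D_{\rm KL}(W_{\rho_b}\Vert W_{\sigma_b})=b\log3/3^b$, as computed in the proof of Lemma~\ref{lem:block-stab}. Summing over blocks gives the bound $r_Nb_N\log3/3^{b_N}$. Since $r_Nb_N\le N$ and $3^{b_N}\ge N^2$, this is at most $\log 3/N$.

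\emph{Main obstacle.} The upper bound is essentially a direct corollary of the earlier lemmas. The delicate step is the LOCC side. There I must check three things: that hashing applies with output measured in the paper's normalization ($\Phi_p$, $p=3$, in bits); that the limits in the definition of $E^D_{\LOCC}$ exist, or can at least be handled through liminf statements; and that combining blocks with independent protocols preserves vanishing total error. For the last point, the errors add across the $r_N$ blocks, and $r_N$ is fixed as $M\to\infty$, so the total error still vanishes.
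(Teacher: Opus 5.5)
Your proposal is correct and follows the paper's proof essentially step for step. The LOCC side applies the hashing inequality blockwise with $I(A\rangle B)_{\rho_b}=1-3^{-b}$. The LSCC side combines Lemma~\ref{lem:asymp-converse}, Lemma~\ref{lem:pwf-regularized} with the product reference $\sigma_{b_N}^{\otimes r_N}\otimes(\ket{00}\bra{00})^{\otimes(N-b_Nr_N)}$, and the blockwise Wigner--KL value $b\log 3/3^b$; the technical points you flag (the $\Phi_3$ target dimension, existence of the limits, and error accumulation over the fixed $r_N$ blocks) are handled correctly and are left implicit in the paper.
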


\begin{proof}
For a single block, Bob's marginal is { maximally mixed on a \(d=3^b\)-dimensional space}.
Together with Lemma~\ref{lem:block-all}, this gives the coherent information
\begin{equation}
\label{eq:sm-block-coherent-information}
I(A\rangle B)_{\rho_b}
=S(\rho_{b,B})-S(\rho_b)
=1-3^{-b}.
\end{equation}
The hashing inequality \cite{devetak2005distillation} gives
\(E_{\LOCC}^{D}(\rho_b)\ge1-3^{-b}\).  Distilling the \(r_N\) blocks
independently yields
\begin{equation}
\label{eq:sm-amplified-locc-distillation}
E_{\LOCC}^{D}(\widehat\rho_N)\ge r_N(1-3^{-b_N})=\Omega(N/\log N).
\end{equation}
For the { \(\LSCC\)} upper bound, set
{
\[
\widehat\sigma_N=\sigma_{b_N}^{\otimes r_N}
\otimes(\ket{00}\bra{00})^{\otimes(N-b_Nr_N)}.
\]
}
This reference is separable and PWF.  Applying Lemma~\ref{lem:pwf-regularized}
{ to \(\widehat\rho_N\) and \(\widehat\sigma_N\)} and then using
Lemma~\ref{lem:block-stab} gives
{
\begin{equation}
\label{eq:sm-amplified-regularized-stabilizer-bound}
\limsup_{M\to\infty}\frac1M E_{\STAB}(\widehat\rho_N^{\otimes M})
\le D_{\rm KL}(W_{\widehat\rho_N}\Vert W_{\widehat\sigma_N})
\le r_N\frac{b_N\log 3}{3^{b_N}}
\le \frac{\log 3}{N}.
\end{equation}
}
The asymptotic converse, Lemma~\ref{lem:asymp-converse}, therefore gives
\(E_{\LSCC}^{D}(\widehat\rho_N)\le(\log 3)/N\).
\end{proof}
\end{document}